\definecolor{refkey}{gray}{.75}
\definecolor{labelkey}{gray}{.7}
 \definecolor{darkgreen}{rgb}{0,0.4,0}
\definecolor{light}{gray}{0.9}
\newtheorem{theorem}{Theorem}[section]
\newtheorem{proposition}[theorem]{Proposition}
\newtheorem{corollary}[theorem]{Corollary}
\newtheorem{remark}[theorem]{Remark}
\numberwithin{equation}{section}
\numberwithin{theorem}{section}
\newtheorem{theoremA}{Theorem}
\newtheorem{gigi}{GTUR}
\renewcommand{\epsilon}{\varepsilon}
\renewcommand{\tilde}{\widetilde}
\renewcommand{\hat}{\widehat}
\renewcommand{\div}{\mathop{\rm div}\nolimits}
\definecolor{light}{gray}{.9}
\newcommand{\cA}{\ensuremath{\mathcal A}}
\newcommand{\cE}{\ensuremath{\mathcal E}}
\newcommand{\cF}{\ensuremath{\mathcal F}}
\newcommand{\cJ}{\ensuremath{\mathcal J}}
\newcommand{\cK}{\ensuremath{\mathcal K}}
\newcommand{\cN}{\ensuremath{\mathcal N}}
\newcommand{\cP}{\ensuremath{\mathcal P}}
\newcommand{\cQ}{\ensuremath{\mathcal Q}}
\newcommand{\cR}{\ensuremath{\mathcal R}}
\newcommand{\bbE}{{\ensuremath{\mathbb E}} }
\newcommand{\bbP}{{\ensuremath{\mathbb P}} }
\newcommand{\bbR}{{\ensuremath{\mathbb R}} }
\let\a=\alpha    \let\d=\delta  
 \let\g=\gamma       
  \let\s=\sigma \let\t=\tau   
   \let\G=\Gamma
\newcommand{\ppp}{\textcolor{black}}
\newcommand{\rrr}{\textcolor{black}}
\newcommand{\red}{\textcolor{black}}
\newcommand{\rosso}{\textcolor{black}}
\newcommand{\blu}{\textcolor{black}}
\newcommand{\ra}{\rangle}
\newcommand{\la}{\langle}
\author[A.C. \ Barato]{A.C.  Barato}
\address{Andre Cardoso Barato
 \hfill\break \indent
 Department of Physics, University of Houston, Houston, Texas 77204, USA
}
 \email{barato@uh.edu}
\author[R.\ Chetrite]{R. Chetrite}
\address{Raphael Chetrite \hfill\break \indent
  CNRS, Laboratoire J.A. Dieudonn\'e, Universit\'e C{\^o}te d'Azur
   \hfill\break \indent
   06108 Nice Cedex 2, France.
 }
 \email{rchetrit@unice.fr}
\author[A.\ Faggionato]{A. Faggionato}
\address{Alessandra Faggionato \hfill\break \indent
  Dipartimento di Matematica, Universit\`a di Roma `La Sapienza'
  \hfill\break \indent
  P.le Aldo Moro 2, 00185 Roma, Italy}
\email{faggiona@mat.uniroma1.it}
\author[D.\ Gabrielli]{D. Gabrielli}
\address{Davide Gabrielli \hfill\break \indent
  DISIM, Universit\`a dell'Aquila
  \hfill\break\indent
  Via Vetoio,  Loc. Coppito, 67100 L'Aquila, Italy}
\email{gabriell@univaq.it}
\thanks{This work   has been  supported  by  Italian PRIN
  20155PAWZB ``Large Scale Random Structures" and by
    the project  ``Investissements d'Avenir'' UCA JEDI of the
French  ANR  n. ANR-15-IDEX-01.
 }
\title[Generalized  thermodynamic uncertainty relations]{A unifying picture of generalized  thermodynamic uncertainty relations}
\begin{document}

\maketitle

\begin{abstract}
The thermodynamic uncertainty relation  is a universal trade-off relation
connecting the precision of a current with the average dissipation  at large times. For
continuous time Markov chains  (also called Markov jump processes) this relation is valid in the  time-homogeneous case, while it fails in the  time-periodic case. The latter   is relevant for the  study of several small
thermodynamic systems.
 We consider here  a time-periodic
Markov chain with continuous time and a  broad class of functionals of stochastic trajectories, which are
general linear combinations of the empirical flow and the empirical density.
Inspired by the analysis  done in our previous work \cite{BaCFG}, we provide general
methods to get local quadratic   bounds for  large deviations, which lead to
universal lower bounds on the ratio of the   diffusion coefficient to the
squared average value in terms of  suitable universal rates, independent of  
the empirical functional.  These bounds  are called ``generalized thermodynamic
uncertainty relations'' (GTUR's), being generalized versions of the thermodynamic
uncertainty relation to the   time-periodic case and to  functionals which  are more general than   currents.
 Previously,  GTUR's in the time-periodic case  have been obtained  in \cite{BaCFG,KSP,PvdB}. Here we recover the   GTUR's in  \cite{BaCFG,KSP} and  produce new ones, leading
to even stronger bounds and also to  new trade-off relations for time-homogeneous systems.
 Moreover, we generalize  to arbitrary protocols the GTUR  obtained in \cite{PvdB} for time-symmetric
protocols. We also generalize to the time-periodic case the GTUR obtained in   \cite{G}   for the so called dynamical activity, and provide a new GTUR  which, in the time-homogeneous case, is  stronger than the one in \cite{G}.
The unifying picture is completed with
a comprehensive comparison between the different GTUR's.   
 \end{abstract}

\section{Introduction}
  The \emph{thermodynamic uncertainty relation} (TUR) recently introduced in \cite{BS1}  is a universal inequality that relates the precision of any current, such as the velocity of a molecular motor or the electron flux
in a quantum dot, with the entropy production that quantifies energy dissipation. More precisely,   the ratio  of the asymptotic diffusion coefficient of any current  to its squared asymptotic value is lower bounded by the inverse average entropy production rate.   This relation constitutes a key result in stochastic thermodynamics  \cite{Se,Sek}, a theoretical framework that extends thermodynamics to small nonequilibrium systems. More generally, the TUR is a consequence
of a parabolic bound on large deviations (LD) proposed in \cite{GHPE,PBS16}. The proof of this bound, which has been obtained in \cite{GHPE}, comes from the  explicit form  derived  in \cite{BFG1,BFG2,MN} of the rate  functional associated with the so called 2.5 level LDs.\smallskip

Several works about the TUR and quadratic bounds on  LD rate functionals  have already been produced  (see for example
\cite{BaCFG,BS0,bisk17,bran18,chiu18,dech18,dech18a,dech18b,TB,G,ging17,GL,hyeo17,HG,KSP,maci18,maes17,NT,NV,nyawo,PBS,trio,PRS,PS100,PNRJ,PLE,PvdB} and references therein). In particular, the  TUR applies to systems driven by a fixed thermodynamic force. Mathematically, these systems can be described as time-homogenous Markov chains, i.e.  with time-independent transition rates,  or  time--homogeneous diffusions as in \cite{GRH,NT,PLE}.
A different way to drive a system out of equilibrium is through an external periodic protocol. Several artificial molecular pumps \cite{ELMN} and colloidal heat engines \cite{MRDR}  constitute experimental examples of such periodically driven systems. A \ppp{continuous--time} Markov chain with time-periodic transitions rates is a standard mathematical framework to describe these systems \cite{BS2}.
\smallskip

 As shown in \cite{BS2},  there is a fundamental difference between systems driven by a fixed thermodynamic force and periodically driven systems concerning the TUR.
The original TUR from \cite{BS1} that involves the entropy production does not apply to periodically driven systems. However, more recently, bounds on current fluctuations that generalize the TUR to periodically driven systems have been obtained in \cite{BaCFG,KSP,PvdB}.
  In  this work  we focus on \emph{generalized thermodynamic uncertainty relations} (shortly, GTUR's). In a very broad sense,
  given a class of empirical functionals, by  GTUR we mean a lower bound on the ratio of the asymptotic diffusion coefficient to the squared asymptotic value of the empirical functional, which holds \emph{uniformly} as the empirical functional varies  in the given class, in the sense that  the lower bounding quantity does not depend on the specific empirical functional and depends only on the Markov process itself and the  class of functionals under consideration.

  \smallskip

A summary of the GTUR's developed so far  (cf. \cite{BaCFG,KSP,PvdB}) is as follows. A first GTUR for
periodically driven systems has been provided in \cite{PvdB}.  This result is restricted to protocols
that are time--symmetric under time reversal and to the  class of empirical functionals fulfilling an antisymmetry relation. The resulting lower bound is in terms of the averaged
entropy production rate, although in a form different from the standard TUR.
 A second contribution has come from  our  previous work  \cite{BaCFG}. There    we have presented a  very  general method to get local quadratic upper bounds on the  LD rate function of currents, and therefore lower bounds on the  ratio of the asymptotic  diffusion coefficient to  the squared asymptotic value. As an application, we have obtained several specific classes of lower bounds (cf. \cite[Eq. (55),(56),(61),(72),(73),(74)]{BaCFG}), which hold for generic currents, also with time-dependent increments (the increment is  the variation of the current  due to a transition).   When restricting to time--independent increments several  lower bounds provided in  \cite{BaCFG} become uniform w.r.t. the possible  increments and therefore are GTUR's, in the sense specified above (cf. e.g. \cite[Eq.~(26),(27)]{BaCFG}).
\ppp{Another} GTUR has been derived in \cite{KSP}  for a class
of empirical functionals given by a current and a generic term that is linear in the
fractions of time spent in a state, the so called empirical density (or measure).

 \smallskip

Part of our main results are an extension of the analysis performed in \cite{BaCFG}. We consider a quite broad class of empirical
functionals. This class includes  currents, which are the standard observables that appear
in the TUR, an observable known as activity that has symmetric increments \cite{G} (in contrast to  currents
that have antisymmetric increments) and the empirical density. In fact, our GTUR's are generalizations of the TUR in two senses: we
consider time-periodic Markov chains and empirical functionals more general
than currents. For instance one of our GTUR's  is a generalization to the time-periodic case
of the bound found in \cite{G} related to the dynamical activity. \ppp{We remark that, even for
  currents and time--homogeneous processes, some of our GTUR's are different and tighter than the usual TUR (similarly, one of our GTUR's is tighter than the bound found in  \cite{G} related to the dynamical activity)}. Finally,
these GTUR's should not be confused with the
generalizations of the TUR to finite time in time-homogeneous, time-inhomogeneous  or time-periodic systems obtained in \cite{dech18,dech18a,dech18b,HG,PRS}.

We  provide   general methods  to produce local quadratic upper bounds on the LD rate function  of the empirical functionals  (cf. Theorems \ref{teo1_metodo},\ref{teo2_metodo} and \ref{teo3_metodo}).  These  methods  rely on the LD principles obtained in \cite{BCFG} and  work whenever one can exhibit a suitable mathematical object, that we call here  \emph{legal input}.  By choosing suitable   legal inputs  we get the  different GTUR's  listed in Section \ref{listone} as \eqref{zac}, \eqref{tanos1},...,\ppp{\eqref{tanos100}}.
In this way we recover the results of \cite{BaCFG,KSP} but also go further, exhibiting new GTUR's which are sometimes  even stronger of the existing ones (for example,  \ppp{\eqref{tanos_teo2}}  provides always a stronger lower bound than the GTUR in \cite{KSP}).

 The GTUR in \cite{PvdB}  is of a different nature. Our unifying picture is completed
with a generalization of this GTUR to the case of general protocols that can be
time-asymmetric \ppp{(cf. \eqref{gtur_exp} in Section \ref{listone})}. This GTUR applies to a class of functionals that fulfills an antisymmetry relation.
Interestingly, the average  entropy production rate that appears in the bound for the case of symmetric
protocols is substituted by an average  naive entropy production rate introduced in \cite{BCFG}. This rate
equals the rate of entropy production plus a rate that becomes zero if the protocol is symmetric.

 All our  results apply as well to time--homogeneous Markov chains with continuous time, since they are a special case of periodically driven systems.
In particular, our GTUR's include the original TUR from \cite{BS1} and imply a generalization of the bound on the fluctuations of activity derived in \cite{G}.

%
%

\bigskip

{\bf Outline of the paper}. In Section \ref{sec_MC} we fix the notation, describe the model and the empirical functionals we will focus on. In Section \ref{listone} we present our main   GTUR's, denoted  by \eqref{zac}, \eqref{tanos1},..., \eqref{gtur_exp}. In Section \ref{sec_esempi} we discuss in detail two examples.  In Sections \ref{sec_TUR_Y} and \ref{sec_TUR_anti} we provide  general methods (cf. Theorems \ref{teo1_metodo}, \ref{teo2_metodo} and \ref{teo3_metodo} there) to get  local quadratic upper bounds on the LD  rate function and derive  all the  GTUR's listed in Section \ref{listone}, apart from \eqref{gtur_exp},  as well as some other lower bounds on the ratio between speed and precision   (cf. Corollaries \ref{ristretto} and \ref{ristretto_bis}).   In Section \ref{sec_proes} we extend the results of \cite{PvdB} to generic protocols (cf. Theorem \ref{teo_proes_extended}),   and derive  \eqref{gtur_exp}.    Finally, we collect some  general remarks and proofs in the Appendixes.

%
%
%
%
%
%
%
%
%
%

\section{Notation and general framework}\label{sec_MC}

\subsection{Models and notation}
We consider a  continuous--time Markov chain $X(t) $ with finite state space $V$ and time--periodic jump rates $w_{ij}(t)$ with period $\tau$:
\[ \bbP(X(t+dt)=j \,|\, \xi(t) =i )=w_{ij}(t) dt\,, \qquad  w_{ij}(t+\tau)= w_{ij}(t)  \qquad \forall i,j\in V\,, \; \forall t\geq 0\,.\]
\rrr{The transition graph associated with the Markov chain $X(t)$  is denoted  $(V,E)$, with vertex set $V$ and set of oriented edges $E$}.
Our main technical  assumptions are the following:
\begin{itemize}
\item[(i)] the graph $(V,E)$ is strongly connected;
\item[(ii)] for each $(i,j)\in E$ it holds $w_{ij}(t)>0$ for all $t$, while for each $(i,j)\not \in E$ it holds \rosso{$w_{ij}(t)=0$} for all $t$.
\end{itemize}
We recall that Item (i) is equivalent to the  fact that, given \rrr{arbitrary} states $i,j \in V$, there exists a path from $i$ to $j$ respecting the edge orientation.

 \smallskip

Denoting by   $P_i(t)$   the probability that  the Markov chain is at state $i$ at time $t$, the time evolution of $P_i(t)$ is given by the equation
\begin{equation}\label{evoluzione}
\frac{d}{dt} P_i(t) =\sum _{j:j\not =i} \bigl[ P_j (t) w_{ji} (t)- P_i(t) w_{ij} (t) \bigr]\,.
\end{equation}
\rrr{The asymptotic properties related to this equation are as follows} (cf. e.g.  \cite{BCFG} for details). In the long time limit, $P_i(t)$ tends to an invariant time--periodic distribution $\pi_i (t)= \pi_i(t+\tau)$. The distribution  \rrr{$\pi(t)$}  can be characterized as the unique invariant  distribution of the discrete--time Markov chain \rrr{$\bigl(X(t+n \tau) \bigr )_{n \geq 0}$}. Other important quantities are
the asymptotic elementary flow $\cQ_{ij}(t)$ and current  $\cJ_{ij}(t)$ along the edge $(i,j)$,  \rrr{which are}  given by
\begin{equation}\label{sale}
\begin{cases}
\cQ_{ij}(t):= \pi_i(t) w_{ij}(t)\,,\\
\cJ_{ij}(t) :=   \pi_i(t) w_{ij}(t)- \pi_j(t) w_{ji}(t)= \cQ_{ij}(t)-\cQ_{ji}(t)\,.
\end{cases}
\end{equation}
Note that  $\pi_i (t)>0$ for all $t >0$ and $i\in V$. \rosso{Moreover} $\cQ_{ij} (t) >0$ for all $t>0$ if $(i,j)\in E$, while  $\cQ_{ij} (t) =0$ for all $t>0$ if $(i,j)\not \in E$.

From equation \eqref{evoluzione} we get the continuity equation
\begin{equation}\label{calma}
\partial_t  \pi_i(t) + \sum _{j: j\not =i } \cQ_{ij}(t)-\sum _{j: j\not=i} \cQ_{ji}(t)=0 \qquad \forall i \in V\,,
\end{equation}
which is equivalent to
\begin{equation}\label{roteante}
\partial_t  \pi_i(t) + \sum _{j: j\neq i} \cJ_{ij}(t)=0\qquad \forall i \in V\,.
\end{equation}
The continuity equation \eqref{calma} can \rrr{be rewritten with a div operator in the form}
\begin{equation}\label{cont_eq}
\partial _t \pi (t) + \div \cQ (t) =0\,,
\end{equation}
where $\pi(t)$ and  $\div \cQ(t) $ are vectors with components $\pi_i(t)$ and  $\div _i \cQ(t):=  \sum _j \cQ_{ij}(t) - \sum_j \cQ_{ji}(t)$.


\rrr{Time independent transition rates $w_{ij}(t)=w_{ij}$ correspond to a particular case
of our theory. In this case, we have a steady state characterized by the asymptotic distribution $\pi$, which fulfills the continuity equation
  \begin{equation}\label{calma2000}
 \sum _{j: j\not =i } \cQ_{ij} -\sum _{j: j\not=i} \cQ_{ji}=0 \qquad \forall i \in V\,,
\end{equation}
where $\cQ_{ij}=\pi_i w_{ij}$.}

   \smallskip

 \blu{Finally,   when the graph $(V,E)$ contains an edge  $(i,j)$ if and only  if
 it contains the edge $(j,i)$, we denote by $\s$ the average entropy production rate. In particular, we have
\begin{equation}\label{entprod1}
\s= \frac{1}{2}\sum_{(i,j)\in E } \frac{1}{\t}  \int_0^\t \cJ_{ij} (t) \ln \frac{\cQ_{ij}(t)}{\cQ_{ji}(t)}dt \,.
\end{equation}
When the transition rates are time-independent, the above identity simply reads
\begin{equation}\label{entprod0}
\s=\frac{1}{2}\sum_{(i,j)\in E }\cJ_{ij}   \ln \frac{\cQ_{ij}}{\cQ_{ji}} \,.
\end{equation}
}
   \smallskip

 Let us introduce the notations for time average and scalar products used in this
paper. In what follows,  when referring to a time--periodic function $f(t)$, we understand  that its period equals $\tau$. Moreover, we denote by $\overline f$ the average of $f$  over a period, \rrr{i.e.}
\begin{equation*}
\overline{f}:= \frac{1}{\tau} \int_0 ^\tau f(t) dt \,.
\end{equation*}
\rrr{The scalar product of two vectors $a(t)$ and $b(t)$ with entries parameterized by $i\in V$ is
given by}
\begin{equation*}
\la a(t), b(t)\ra:= \sum_{i\in V} a_i(t) b_i(t)\,;
\end{equation*}
while, if $a(t)$ and $b(t)$ are matrixes  with entries parameterized by $(i,j) \in V\times V$, \rrr{their scalar product is given by}
\begin{equation*}
\la a(t), b(t)\ra:= \sum_{(i,j) \in V\times V} a_{ij}(t) b_{ij}(t)\,.
\end{equation*}

Finally, in what follows  Markov chains will  always be  considered as time--continuous (i.e. as   Markov jump processes), also when not explicitly stated.

\subsection{Empirical functionals}\label{principiante}

We describe now the class of  empirical functionals on which we will focus and state the associated large deviation principle.
Given  a time-periodic matrix $\a(t)= \bigl( \a_{ij}(t)\,:\, (i,j)\in V\times V\bigr)$ and a   time--periodic vector  $\g(t) = \bigl(  \g_i(t)\,:\, i \in V \bigr) $
we consider the empirical functional $Y_{ \a,\g}^{(n)}$ \rrr{defined as}
\begin{equation}\label{gattino}
Y_{ \a,\g}^{(n)}:= \frac{1}{n \tau} \sum _{\substack{ t\in (0, n\tau]:\\ X(t-)\not = X(t+) }  } \a_{X(t-), X(t+) }(t)+\frac{1}{n \tau} \int_0 ^{n \tau} \g_{X(t) }(t) dt \,.
\end{equation}
\rrr{For example, if all components of $\g(t)$  are zero and the increments $\a_{ij}(t)$ are antisymmetric, i.e.   $\a_{ij}(t)=-\a_{ji}(t)$,
 then $Y_{\a,\g}^{(n)}$ is a
current, which is a key observable in stochastic thermodynamics. If the components of $\a(t)$  are zero, the component $\g_i(t)=1$ and the other components of $\g(t)$ are zero,
then $Y_{\a,\g}^{(n)}$  is the fraction of time spent in state $i$.}

Note that, as $n \to \infty$, $Y_{ \a, \g }^{(n)}$ has the following asymptotics (cf. \cite[Proposition 7.3]{BCFG}):
\begin{equation}\label{LLN}
Y_{ \a,\g }^{(n)} \to y_{\a,\g}\rrr{:=} \overline{ \la \a , \cQ  \ra}+\overline{\la \g, \pi \ra}
\,.
\end{equation}
In particular, if   $\alpha_{ij}= \ln (w_{ij}/w_{ji})$  and $\gamma=0 $,
then $y_{\alpha,\gamma}$ equals the average entropy production rate $\s$ in \eqref{entprod1}.

 As a byproduct of  the large deviation (LD)  principle given by  \rrr{\cite[Theorem 2]{BCFG}} and the  contraction principle (cf. e.g.  \rrr{\cite{DZ,DeS,dH,T}}),   $Y_{\a,\g}
^{(n)}$  satisfies  an LD principle as $n\to \infty$ with \ppp{speed $n\t$}.  Calling $I_{\a,\g}$ its rate functional,  roughly it holds
\begin{equation}
\bbP( Y_{\a,\g}^{(n)} \approx y ) \asymp e^{ -\ppp{n \t} I _{\a,\g} (y) }\,,  \qquad y \in \bbR\,, \; n\gg 1\,.
\end{equation}
We point out  that $I _{\a,\g} (y) \geq 0$ and $I _{\a,\g} (y)=0$ if and only if $y=y_{\a, \g}$. This corresponds to the fact that $y_{\a, \g}$ is the typical value and different values of the functional are \rrr{exponentially unlikely}.

\smallskip

  To describe the variational characterization of  the LD  rate functional  $I_{\a,\g}$\rrr{,} we  introduce the function  $\Phi(q,p)$ defined for $q,p\geq 0$ as
\begin{equation}\label{def_phi}
\Phi(q,p):= q\ln (q/p) -q+p\,,
\end{equation}
with the convention that $\Phi(0,p):=p$ and $\Phi(q,0)=+\infty$ for $q>0$.
Then, it holds
 \begin{equation}\label{variazionale}
I_{\a, \g}(y)= \inf \{ \rosso{I(Q,\rho)}\,:\, (Q,\rho)  \in \cF_{\a, \g,y}  \}\,,
\end{equation}
where
\begin{equation}\label{def_F}
 \rosso{I(Q,\rho)}:=\sum_{(i,j)\in E}  \overline{\Phi \bigl( Q_{ij}(t), \rho_i(t)w_{ij}(t) \bigr) }
\end{equation}
and $\cF_{\a,\g,y}$ denotes the family of pairs  $(Q,\rho)=\left( Q(t), \rho (t)\right)_{t\geq 0}$ such that
\begin{itemize}
\item[(i)]
 $Q(t)$ is a time--periodic flow, i.e. $Q(t)=Q(t+\t)$ and $Q(t)$ is a non--negative function on $V\times V$ which is zero outside $E$  for each time  $t$;
 \item[(ii)]  $\rho(t)$   is a time--periodic probability measure on $V$;
 \item[(iii)] \rrr{the} continuity equation $\partial _t \rho (t) + \div Q(t)=0$ is satisfied, \rrr{where $\div_i Q(t):= \sum _j Q_{ij}(t)- \sum_j Q_{ji}(t)$;}
  \item[(iv)] \rosso{$y =\overline{ \la \a, Q \ra}+\overline{\la \g, \rho \ra}$}.
\end{itemize}

\medskip

We point out
 that  one recovers from \eqref{variazionale} that
$I_{\a, \g}(y_{\a,\g})=0$ \rosso{since, denoting by $\cQ= (\cQ(t))_{t\geq 0}$ and $\pi = (\pi(t))_{t\geq 0} $ the asymptotic flow and density,  respectively, it holds  $I(\cQ, \pi)=0$  in addition to  \eqref{LLN}}.
\medskip

Formula  \eqref{def_F} corresponds to the joint  LD rate functional  of the empirical flow and measure.  To recall their definition, given $t\geq 0$ we denote by $[t]$ the only number in $[0,\t)$ such that $t-[t]$ is a multiple of $\t$. Then
the empirical flow $Q^{(n)}$ is defined as the  measure on $E\times [0,\t)$ given by
\[
Q^{(n)} (i,j, A) := \ppp{\frac{1}{n } } \sharp\left\{  t\in (0, n\tau]:  X(t-)=i\,,\; X(t+)=j\,, \;[t]\in A \right\}\,,
\]
where $\sharp$ denotes the cardinality of the set. On the other hand, the empirical measure $\rho^{(n)}$  is defined as the measure  on $V\times [0,\t)$ such that
\[
\rho^{(n)} (i, A) := \ppp{\frac{1}{n }} \int_0^{n\t}   \mathds{1}\left( X(t)=i\,, \;[t]\in A \right)dt\,,
\]
where $\mathds{1}(\cdot)$ denotes the characteristic function (i.e. the function  equals  $1$ if the event under consideration takes place, otherwise it equals   zero).
Note that, given a time-periodic flow \rrr{$Q=( Q(t) )_{ t\geq 0}$}, we can think of $Q$ as  the  measure on $E\times [0,\t)$ with weights $(i,j,dt)\mapsto   Q_{ij} (t)   dt $. Given a time--periodic probability measure \rrr{$\rho= (\rho(t) )_{t\geq 0}$} on $V$ we can think of $\rho$ as the measure on $V\times [0,\t)$ with weights  $(i,dt)\mapsto \rho_i(t) dt$. In \cite[Theorem 2]{BCFG} it is proved that
the pair $\left( Q^{(n)} , \pi^{(n)}\right)$ satisfies a LD principle with speed \ppp{$n\t$} and rate functional $I(Q,\rho)$ given by \eqref{def_F}
 if  $(Q,\rho)=\left( Q(t), \rho (t)\right)_{t\geq 0}$ satisfies the above conditions (i), (ii), (iii). If these conditions are not fulfilled, then
 $I(Q,\rho)$ equals \rrr{infinity}.  Since
 \begin{equation}\label{rappresento}
 Y^{(n)}_{\a,\g}=\ppp{\frac{1}{\t}} \sum _{i,j} \int_{\rrr{[0,\t)}}  \a_{ij}(t) Q^{(n)} (i,j, dt)+\ppp{\frac{1}{\t}} \sum_i \int _{\rrr{[0,\t)}}  \g_{i}(t) \rho^{(n)} (i ,dt)\,,
 \end{equation}
 \eqref{variazionale} follows from the contraction principle and the above LD principle for $\left( Q^{(n)} , \pi^{(n)}\right)$.


\medskip

The asymptotic diffusion coefficient   $D_{\a,\g}$ associated with $Y^{(n)}_{\alpha,\gamma}$ is defined as
\begin{equation}\label{pizza78}
2 D_{\a,\g}  := \lim _{n \to \infty} n \tau \text{Var}\left(Y^{(n)} _{ \a,\g}\right)\,.
\end{equation}
This quantity
can be obtained from the rate functional $I_{\alpha,\gamma}$
by  the identity
\begin{equation}\label{siluro}
2 D_{\a,\g}= \frac{1}{I''_{\a,\g}(y_{\a,\g})}\,,
\end{equation}
where $I''_{\a,\g}$ denotes the second derivative of $I_{\a,\g}$.
We point out that  in the mathematical literature  the  asymptotic diffusion coefficient is defined without the factor $2$ in the l.h.s. of   \eqref{pizza78}.

\medskip

Formula \eqref{siluro} can be applied when the rate function  $I_{\a,\g}$ is twice differentiable around its minimum point $y_{\a,\g}$. If the set $\mathcal F_{\a,\g,y}$ defined after \eqref{def_F} is non--empty for any real value $y$, then the differentiability could be proved using the smoothness of the function \eqref{def_phi} on points with strictly positive coordinates and the linearity of the constraint $(iv)$ in the definition of $\mathcal F_{\a,\g,y}$. There are however exceptional cases when this does not happen. As an example consider  the case when $\g\equiv 0$ and $\alpha_{ij}=f_j-f_i$ for fixed time--independent constants  $(f_i)_{i\in V}$. Given $(Q,\rho)\in \cF_{\a,\g,y}$ we deduce  that $\div\overline Q=0$  by integrating  the continuity equation $\partial _t \rho (t) + \div Q(t)=0$ on  a period   and using that
$\rho(t)$ is periodic.  Due to the gradient representation $\a_{ij}=f_i-f_j$ and  since
$\div\overline Q=0$,
by a discrete integration by parts  we obtain that  $\langle \a, \overline{Q}\rangle=0$. We get  therefore that $\mathcal F_{\a,\g,y}=\emptyset$ for any $y\neq 0$ (indeed, property (iv) in the definition of $\cF_{\a,\g,y}$ cannot be fulfilled for $y\not =0$). As a consequence $I_{\a,\g}(y)=+\infty$ for $y\not =0$ and $I_{\a,\g}(0)=0$, hence $I_{\a,\g}$ is not differentiable. In this case formula   \eqref{siluro} cannot be applied. See also Remark \ref{zanna_bianca} for another exceptional class.

\section{Main results}\label{listone}


In this section we present our  main GTURs, \blu{given by inequalities \eqref{zac}, \eqref{tanos1},...,\eqref{gtur_exp} below.  Apart from \eqref{gtur_exp},} which is a generalization  of the result derived in \cite{PvdB},
their derivation is obtained by extending the methods and ideas from \cite{BaCFG}. In particular, in  Sections \ref{sec_TUR_Y} and
\ref{sec_TUR_anti} we provide  general methods  to get GTUR's for $\a$ generic and $\a$ antisymmetric, respectively (cf. Theorem \ref{teo1_metodo}, \ref{teo2_metodo}
 and \ref{teo3_metodo}). The results  \eqref{zac},  \eqref{zac_anti} and \eqref{liberazione} presented below are a special case of a class of GTUR's obtained in  Corollaries \ref{ristretto},  \ref{ristretto_bis} and \ref{ristretto_tris}, respectively. Therefore, we refer to Sections \ref{sec_TUR_Y} and \ref{sec_TUR_anti} for more results
and proofs. The extension of the GTUR from \cite{PvdB}  to asymmetric protocols is provided in Section \ref{sec_proes} \blu{(cf. Theorem \ref{teo_proes_extended})}.

\smallskip

\blu{From now on, without further mention, we restrict to the case  that the asymptotic value $y_{\a,\g}$ of the empirical functional $Y^{(n)}_{\a,\g}$ is non zero (see Remark \ref{fiorellino} for the case $y_{\a,\g}=0$).}

\medskip

\subsection{GTUR with generic increments}\label{listone1}
 From Corollary \ref{ristretto}, which contains a more general result,
 we obtain:
 \begin{gigi}
  If the increments $\a$ are time--independent (i.e. $\a_{i,j}(t)\equiv \a_{i,j}$)
  and $\g\equiv 0$, then
 \begin{equation}\label{zac}
\frac{ D_{\a,0}}{y_{\a, 0}^2}\geq\frac{1}{ \hat{\s}}\,, \tag{GTUR 1}
\end{equation}
where
\begin{equation}
\hat{\s}:=  2 \sum _{(i,j)\in E} (\overline{\cQ}_{ij})^2  \overline{\frac{1}{\cQ_{ij}}}\,.
\end{equation}
\end{gigi}
For the case of time-homogeneous Markov chains, $\hat{\s}= 2 \sum _{(i,j)  \in E}\cQ_{ij}$, and this GTUR becomes \blu{\cite[Eq.~(19)]{G}}.
Hence, \eqref{zac} is a generalization of this inequality to time-periodic Markov chains. The quantity $\sum _{(i,j)  \in E}\cQ_{ij}$, which
is the rate of average number of transitions, is known as \blu{\emph{dynamical activity}}. \blu{For time-periodic Markov chains, due to Jensen's inequality, we have   the bound}
$\hat{\s}\ge 2 \sum _{(i,j)  \in E}\overline{\cQ}_{ij}$, \blu{i.e.} $\hat{\s}/2$ is larger than the dynamical activity.

From Corollary  \blu{\ref{teo1_TUR} we} obtain:
\begin{gigi}
For generic increments $\a$, it holds
\begin{equation}\label{tanos1}
\frac{
D_{\a,\g}}{y_{\a, \g}^2}\geq \frac{1}{C(p)}\,,
\tag{GTUR 2}
\end{equation}
 where $p=(p_i)_{i\in V}$ is any  probability on $V$ with $\la  \overline \g, p\ra=0$ and  \begin{equation}\label{cipi}
C(p):= 2 \sum_{(i,j)\in E}  \rosso{p_i^2 \overline{\left( \frac{  w_{ij} ^2}{ \cQ_{ij}   }\right)}}
=2 \sum_{(i,j)\in E}  \rosso{p_i^2 \overline{\left( \frac{  w_{ij} }{ \pi_i   }\right)}}  \,.
 \end{equation}
 \end{gigi}
 Note that the above probability $p$ is time-independent.
This novel GTUR is valid for generic linear functionals  of the form (\ref{gattino}), including the
case $\a=0$, which corresponds to functionals that depend only on the empirical density.

\subsection{GTUR with \blu{antisymmetric} increments}\label{listone2}
In this subsection we assume, without further mention, that
\[ (y,z) \in E\Leftrightarrow (z,y)\in E\,.\]
For the particular case of antisymmetric increments $\a_{i,j}(t)= -\a_{j,i}(t)$, we have the following GTUR's.

 \blu{First}, from Corollary \ref{ristretto_bis}, which contains a more general result,
 we obtain:
\begin{gigi}
  If $\a$ is time--independent \blu{and antisymmetric}
  and $\g\equiv 0$, then
 \begin{equation}\label{zac_anti}
\frac{\rosso{D_{\a,0}}}{y_{\a, 0}^2}\geq \frac{1}{\tilde{\s}}\,,
\tag{GTUR 3}
\end{equation}
where
\begin{equation}
\tilde{\s}:= \sum _{(i,j)\in E } (\overline{\cJ}_{ij})^2  \,\overline{\frac{1}{\cQ_{ij}+ \cQ_{ji}}}\,.
\end{equation}
\end{gigi}This GTUR corresponds to \blu{the first bound in} \cite[Eq.~(27)]{BaCFG}.

Second, from Corollary  \ref{teo2_TUR}, we obtain:

\begin{gigi} \blu{For generic antisymmetric increments $\a$, it holds}
\begin{equation}\label{tanos_teo2}
\frac{ D_{\a,\g}}{y_{\a, \g}^2}\geq \frac{1}{C_{\rm a}(p)}\,,\tag{GTUR 4}
\end{equation}
 where $p=(p_i)_{i\in V}$ is any  probability on $V$ with $\la  \overline \g, p\ra=0$ and
 \begin{equation}\label{cipia}
C_{\rm a}(p):=
\sum_{(i,j)\in E} \overline{\left( \frac{ \bigl(p_i w_{ij}  -p_j w_{ji} \bigr)^2}{\cQ_{ij}  + \cQ_{ji} }\right)}\,.
 \end{equation}
 \end{gigi}

Third, from  Corollary  \ref{ristretto_tris}, which contains a more general result, we obtain:

\begin{gigi} If $\a$ is time--independent \blu{and antisymmetric}
 and  $\g\equiv 0$, then
\begin{equation}\label{liberazione}
\frac{ D_{\a,0}}{y_{\a, 0}^2}\geq \frac{1}{{\s}^*}\,,
\tag{GTUR 5}
\end{equation}
 where
   \begin{equation}\label{cocomero}
  \s^*:=\blu{\frac{1}{2} \sum_{(i,j)\in E } } (\overline{\cJ}_{ij})^2 \overline{\left(\frac{1}{\cJ_{ij}} \ln \frac{\cQ_{ij}}{\cQ_{ji}}\right)} \,.
    \end{equation}
    \end{gigi}

This GTUR  corresponds to \blu{the second bound in} \cite[Eq.~(27)]{BaCFG}. Furthermore,  due to the inequality $\s^*\geq \tilde{\s}$,
which has been proved in \cite{BaCFG}, \eqref{liberazione} \blu{can be also derived directly} from \eqref{zac_anti}. The original TUR for time-homogeneous Markov
chains is a particular case of \eqref{liberazione}. For a time-homogeneous Markov chain $\s^*$ becomes the \blu{average entropy production rate} $\s$
in \eqref{entprod0}.

Fourth, from \blu{Corollary} \ref{gnocchi1}, we obtain:
\begin{gigi} \blu{For generic antisymmetric increments $\a$, it holds}
\begin{equation}
\label{tanos100}
\frac{ D_{\a,\g}}{y_{\a, \g}^2}\geq \frac{1}{C^*_{\rm a}(p)}\,,
\tag{GTUR 6}
\end{equation}
 where $p=(p_i)_{i\in V}$ is any  probability on $V$ with $\la  \overline \g, p\ra=0$ and
 \begin{equation}\label{cipia_star}
C^*_{\rm a}(p):= \frac{1}{2}\sum_{(i,j)\in E} \overline{\left( \frac{ \bigl(p_i w_{ij}  -p_j w_{ji} \bigr)^2}{\cJ_{ij}  }\right)\ln \frac{ \cQ_{ij} }{\cQ_{ji}}}\,.
 \end{equation}
 \end{gigi}
This GTUR, for the particular case $\overline \g=0$ (which is equivalent to the fact that $\g_i(t) = \frac{d}{dt} g_i(t)$ for  periodic functions $g_i$),  has been obtained in \cite{KSP} with a different derivation (cf. \blu{\cite[Eq.~(14),(15),(16)]{KSP}}).

The inequality \eqref{tanos100}  can be derived by the  general method presented in Theorem \ref{teo3_metodo}  as well as directly  from \eqref{tanos_teo2} by the bound \eqref{elementare} presented in  Section \ref{bingo}.

\subsection{GTUR with naive entropy production}\label{listone3}
Our last GTUR follows from Theorem  \ref{teo_proes_extended}, \blu{which contains more general results:}
\begin{gigi}
If $\a_{i,j}(t)= -\a_{j,i}(\t-t)$ and $\g_i (t) = - \g_i(\t-t)$ for any $i,j$ and $t\in [0,\t]$, then
 \begin{equation}\label{gtur_exp}
\frac{ D_{\a,\g}}{y^2_{ \a, \g}}\geq \frac{\tau}{  e ^{\tau \s_{\rm naive}}-1}
\,,\tag{GTUR 7}
\end{equation}
where
\begin{equation}\label{sigmanaive}
\begin{split}
\sigma_{\rm naive} & := \frac{1}{\tau}  \sum_{(i,j)\in E} \int_0^\tau  \pi_i(s) \left[w_{ij}(\tau -s ) -  w_{ij}(s)   \right] ds \\
& \;+\frac{1}{\tau}\sum _{(i,j) \in E}\int_0 ^\tau \pi_i(s) w_{ij}(s) \ln \frac{w_{ij}(s)}{w_{ji}(\tau-s) } ds\,.\end{split}
\end{equation}
\end{gigi}
The above  result is a generalisation  to arbitrary protocols of \cite[Eq.~(2)]{PvdB} for the empirical functionals $Y^{(n)}_{\a,\g}$.   When the period $\t$ is small,  the inverse rate given by the r.h.s. of \eqref{gtur_exp} is well approximated by $1/\s_{\rm naive}$.  Note that
 for  symmetric protocols $\sigma_{\rm naive}=\sigma$. In the limit $\t\to 0$   \eqref{gtur_exp} can be applied only to currents with time--independent increments (due to the constraints $\a_{i,j}(t)= -\a_{j,i}(\t-t)$ and $\g_i (t) = - \g_i(\t-t)$)
  and  it reduces to the classical thermodynamic uncertainty relation $D_{\a,0}/y^2_{\a,0}\geq 1/\s$.




\begin{table}
\begin{center}
\begin{tabular}{ |c|c|c| }
 \hline
 GTUR & Lower bound & Restrictions on $Y^{(n)}_{\alpha,\gamma}$ \\
 \hline
 GTUR 1 & $1/\hat{\s}$  & $\alpha_{ij}(t)=\alpha_{ij}$  and $\gamma_i(t)=0$ \\
  \hline
 GTUR 2 & $1/C(p)$  &   $\la \overline \g, p\ra=0$ \\
 \hline
 GTUR 3 & $\blu{1/\tilde{\s}}$  & $\alpha_{ij}(t)=\alpha_{ij}$, $\alpha_{ij}=-\alpha_{ji}$ and $\gamma_i(t)=0$ \\
 \hline
 GTUR 4 & $1/C_a(p)$  & $\alpha_{ij}(t)=-\alpha_{ji}(t)$ and  $\la \overline \g, p\ra=0$ \\
 \hline
 GTUR 5 & $1/\s^*$  & $\alpha_{ij}(t)=\alpha_{ij}$, $\alpha_{ij}=-\alpha_{ji}$ and $\gamma_i(t)=0$ \\
 \hline
 GTUR 6 & $1/C^*_a(p)$ & $\alpha_{ij}(t)=-\alpha_{ji}(t)$ and  $\la \overline \g, p\ra=0$ \\
 \hline
 GTUR 7 & $\tau/ \left( e ^{\tau \s_{\rm naive}}-1\right)$ & $\alpha_{ij}(t)=-\alpha_{ji}(\tau-t)$ and $\gamma_i(t)=-\gamma_i(\tau-t)$\\
 \hline
\end{tabular}
\vspace{0.5cm}
\caption{Summary of GTUR's written as   $D_{\alpha,\gamma}/y^2_{\alpha,\gamma}\geq \textrm{lower bound}$. The GTUR's are
valid for the linear functionals $Y^{(n)}_{\alpha,\gamma}$ that fulfill the conditions on the third column.}\label{comix}
\end{center}
\end{table}

\subsection{Optimization and comparisons}\label{confronto}
In this subsection we show three propositions. The first is concerned with the
optimal $p$ in the universal rate $C(p)$ in \eqref{tanos1}. The second is concerned
with the relation between \eqref{tanos1}, \eqref{tanos_teo2} and \eqref{tanos100}. The third
is concerned with the relation between \eqref{zac}, \eqref{zac_anti} and \eqref{liberazione}.

We recall that   \eqref{tanos1} holds for any choice of the increments $\a$, antisymmetric  or not.  \rosso{The following result shows the optimal bound \blu{that}  can be \blu{obtained} in  \eqref{tanos1} by taking the minimum \blu{among $p=(p_i)_{i\in V}$}   of $C(p)$:}
\begin{proposition}\label{girasole} Setting
\begin{equation}\label{aiutino}
A_i :=\Big[2  \sum_{j: (i,j)\in E} \overline{\Big(\frac{w_{ij}}{\pi_i }\Big)}\Big]^{-1} \,,
\end{equation}
the optimal bound in \eqref{tanos1} is the following:
\begin{itemize}
\item[(i)] if $\overline\g=0$, \blu{then}
\begin{equation}\label{pollo1}
\frac{ D_{\a,\g}}{y_{\a, \g}^2}\geq\sum _i  A_i\,;
\end{equation}
\item[(ii)] if $\overline \g\not =0$ and  $\overline \g$ has neither  all entries positive nor all entries negative,
\blu{then}
\begin{equation}\label{pollo2}
\frac{ D_{\a,\g}}{y_{\a, \g}^2}\geq
\frac{
\left(\sum_{i}A_i \right)\left(\sum_{i} A_i \overline{\gamma}_{i} ^2
\right)
-\left(\sum_{i} A_i \overline{\gamma}_{i}  \right)^2
}
{\sum_i A_i \overline{\gamma}_{i} ^2} \,,
\end{equation}
and the r.h.s. of \eqref{pollo2} is a positive number.
\end{itemize}

\end{proposition}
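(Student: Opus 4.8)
The plan is to observe that optimizing the right-hand side of \eqref{tanos1} over the admissible $p$ is exactly a constrained minimization of the quadratic form $C(p)$. Since $\cQ_{ij}=\pi_iw_{ij}$, one has $w_{ij}^2/\cQ_{ij}=w_{ij}/\pi_i$, so $C(p)=\sum_{i\in V}p_i^2/A_i$ with $A_i$ as in \eqref{aiutino}; this is a strictly convex function of $p$, and the best bound produced by \eqref{tanos1} is $1/\min_pC(p)$, the minimum running over probabilities $p$ with $\la\overline\gamma,p\ra=0$. Thus everything reduces to computing $\min_pC(p)$ in the two regimes of the statement.

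First I would treat case (i). When $\overline\gamma=0$ the constraint $\la\overline\gamma,p\ra=0$ is vacuous and one minimizes $\sum_ip_i^2/A_i$ subject only to $\sum_ip_i=1$. By the Cauchy--Schwarz inequality, $1=\bigl(\sum_ip_i\bigr)^2\le\bigl(\sum_ip_i^2/A_i\bigr)\bigl(\sum_iA_i\bigr)=C(p)\sum_iA_i$, with equality precisely when $p_i\propto A_i$; as the $A_i$ are strictly positive, the resulting $p_i=A_i/\sum_jA_j$ is a bona fide probability, whence $\min_pC(p)=1/\sum_iA_i$ and the optimal bound in \eqref{tanos1} is $\sum_iA_i$, i.e. \eqref{pollo1}.

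Next, case (ii). Here one minimizes $\sum_ip_i^2/A_i$ under the two linear constraints $\sum_ip_i=1$ and $\sum_i\overline\gamma_ip_i=0$. By the Lagrange conditions the minimizer has the form $p_i=\tfrac{A_i}{2}(\lambda+\mu\overline\gamma_i)$; substituting into the constraints gives the $2\times2$ system $S_0\lambda+S_1\mu=2$, $S_1\lambda+S_2\mu=0$, where $S_0:=\sum_iA_i$, $S_1:=\sum_iA_i\overline\gamma_i$, $S_2:=\sum_iA_i\overline\gamma_i^2$ are the entries of the Gram matrix of $\{\mathbf{1},\overline\gamma\}$ for the weighted inner product $(u,v)\mapsto\sum_iA_iu_iv_i$. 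Solving gives $\lambda=2S_2/(S_0S_2-S_1^2)$, and since at the minimizer $C(p)=\tfrac{\lambda}{2}\sum_ip_i+\tfrac{\mu}{2}\sum_i\overline\gamma_ip_i=\lambda/2$, one obtains
\[
\min_pC(p)=\frac{S_2}{S_0S_2-S_1^2}\,,
\]
so the optimal bound in \eqref{tanos1} equals $(S_0S_2-S_1^2)/S_2$, which is exactly \eqref{pollo2}. For the positivity statement, $S_2>0$ because $\overline\gamma\neq0$, while $S_0S_2-S_1^2\ge0$ always (it is a Gram determinant) and is strictly positive because $\overline\gamma$ is not a scalar multiple of $\mathbf{1}$ --- a nonzero multiple of $\mathbf{1}$ has all entries of one sign, which the hypothesis on $\overline\gamma$ excludes; hence the right-hand side of \eqref{pollo2} is a positive number.

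The step that needs the most care is checking that the extremal $p$ just computed is an admissible choice in \eqref{tanos1}. In case (i) this is immediate since $p_i=A_i/\sum_jA_j>0$. In case (ii) one must verify that $p_i=\tfrac{A_i}{2}(\lambda+\mu\overline\gamma_i)\ge0$ for every $i$, i.e. that the unconstrained critical point lies in the probability simplex; alternatively, if the derivation of \eqref{tanos1} in Corollary \ref{teo1_TUR} only uses the two linear relations $\sum_ip_i=1$ and $\la\overline\gamma,p\ra=0$, then the minimization may be taken over the whole affine subspace and no sign check is needed. Once admissibility is settled, strict convexity of $C$ guarantees the critical point is the unique global minimizer on the constraint set, completing both parts.
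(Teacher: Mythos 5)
Your argument follows the paper's route: Lagrange multipliers for the constrained quadratic minimization of $C(p)=\sum_i p_i^2/A_i$, a Gram/Cauchy--Schwarz determinant to identify the minimal value and to obtain strict positivity (equality forcing $\overline\gamma$ to be a constant vector, excluded by the sign hypothesis), and the same reduction $C(p^\star)=\lambda/2$; your case~(i) via a direct Cauchy--Schwarz inequality is a mild shortcut compared to the paper's Lagrange computation with the second multiplier dropped, but amounts to the same thing. The one place you add something is the closing admissibility remark, and it is a genuine improvement over the paper's write-up: the Lagrange critical point $p^\star$ of case~(ii) can in fact have negative entries (e.g.\ $A\equiv 1$ and $\overline\gamma=(3,\,2.9,\,-0.001)$ gives $p^\star_1<0$), so it need not be a probability and hence need not be an admissible input to \eqref{tanos1} as literally stated, a subtlety the paper's proof passes over when asserting that the unique critical point of the Lagrangian is the constrained minimum. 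Your second resolution is the right one and could be stated affirmatively: the proof of Corollary~\ref{teo1_TUR} (via Theorem~\ref{teo1_metodo} with $R=\cQ$, $m=\pi-p$) uses only $\sum_i p_i=1$ and $\la\overline\gamma,p\ra=0$, and condition (P5) is automatic for $y$ near $y_{\a,\g}$ since $\pi_i(t)>0$; so the bound \eqref{tanos-a} holds for any $p$ in the affine subspace, and the unconstrained affine minimization legitimately produces \eqref{pollo2}.
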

For the proof of the above proposition  see  Appendix  \ref{marvel}. We point out that the  optimization among $p=(p_i)_{i\in V}$ for the other constants  $C_{\rm a}(p)$ and $C_{\rm a}^*(p)$ appearing in \eqref{tanos_teo2} and \eqref{tanos100}, respectively, cannot be solved explicitly in the general case.

\begin{remark}\label{ventoso}
When the increments $\a$ are time-independent  and $\g\equiv 0$, one can apply both \eqref{zac} and the optimal \eqref{tanos1} given by \eqref{pollo1}. If the asymptotic density $\pi(t)$  is time--independent  as
in the time--homogeneous case, or as in the time--periodic random walk on the ring considered in Section \ref{anello},  we can prove that \eqref{pollo1} is stronger than \eqref{zac}. We refer to Appendix \ref{marvel} for the derivation.
\end{remark}

When $\a$ is antisymmetric,  we can apply three  $p$--dependent GTUR's, i.e. \eqref{tanos1}, \eqref{tanos_teo2} and \eqref{tanos100}.    Indeed, \eqref{tanos_teo2} is the  optimal one as follows from the next result:
\begin{proposition}\label{fiordaliso} Assume that  $(i,j)\in E$ if and only if $(j,i)\in E$. Then for each probability measure $p=(p_i)_{i\in V}$ it holds
$C_a^*(p) \geq C_a(p)$ and $C(p) \geq C_a(p)$. In particular, when $\a$ is antisymmetric, \eqref{tanos_teo2}  provides the optimal lower bound of $ D_{\a,\g}/y_{\a, \g}^2$ between \eqref{tanos1}, \eqref{tanos_teo2} and \eqref{tanos100}.
\end{proposition}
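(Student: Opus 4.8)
The plan is to reduce the proposition to two scalar inequalities, one for $C(p)\ge C_{\rm a}(p)$ and one for $C_{\rm a}^*(p)\ge C_{\rm a}(p)$, handled independently, and then read off the ``in particular'' statement.

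\textbf{Step 1: $C(p)\ge C_{\rm a}(p)$.} Since by hypothesis $(i,j)\in E\Leftrightarrow(j,i)\in E$, I would group the edges of $E$ into unordered pairs $\{(i,j),(j,i)\}$. Writing $\cQ_{ij}=\pi_i w_{ij}$, the contribution of such a pair to $C(p)=2\sum_{(i,j)\in E}\overline{\bigl(p_i^2 w_{ij}^2/\cQ_{ij}\bigr)}$ is $2\,\overline{\bigl(p_i^2 w_{ij}^2/\cQ_{ij}\bigr)}+2\,\overline{\bigl(p_j^2 w_{ji}^2/\cQ_{ji}\bigr)}$, while (because the two ordered edges $(i,j)$ and $(j,i)$ produce the same summand) its contribution to $C_{\rm a}(p)$ is $2\,\overline{\bigl((p_iw_{ij}-p_jw_{ji})^2/(\cQ_{ij}+\cQ_{ji})\bigr)}$. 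It therefore suffices to verify, at each fixed time $t$, the elementary inequality
\[
\frac{a^2}{x}+\frac{b^2}{y}\;\ge\;\frac{(a-b)^2}{x+y},
\]
with $a=p_iw_{ij}(t)\ge 0$, $b=p_jw_{ji}(t)\ge 0$, $x=\cQ_{ij}(t)>0$, $y=\cQ_{ji}(t)>0$. This follows from the Cauchy--Schwarz inequality in Engel (Titu) form $\frac{a^2}{x}+\frac{b^2}{y}\ge\frac{(a+b)^2}{x+y}$ combined with $(a+b)^2-(a-b)^2=4ab\ge0$. Integrating in $t$ over a period and summing over the edge pairs gives $C(p)\ge C_{\rm a}(p)$.

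\textbf{Step 2: $C_{\rm a}^*(p)\ge C_{\rm a}(p)$.} Here I would argue edge by edge on the integrands. Since $\cJ_{ij}=\cQ_{ij}-\cQ_{ji}$, the factor $\frac{1}{\cJ_{ij}}\ln\frac{\cQ_{ij}}{\cQ_{ji}}=\frac{\ln\cQ_{ij}-\ln\cQ_{ji}}{\cQ_{ij}-\cQ_{ji}}$ is the reciprocal of the logarithmic mean of $\cQ_{ij}$ and $\cQ_{ji}$ (read as $1/\cQ_{ij}$ when $\cQ_{ij}=\cQ_{ji}$, a value attained only on a set of times of measure zero, hence irrelevant to the period average). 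The classical logarithmic--arithmetic mean inequality $\frac{x-y}{\ln x-\ln y}\le\frac{x+y}{2}$ for $x,y>0$ yields, pointwise in $t$,
\[
\frac12\,\frac{1}{\cJ_{ij}(t)}\ln\frac{\cQ_{ij}(t)}{\cQ_{ji}(t)}\;\ge\;\frac{1}{\cQ_{ij}(t)+\cQ_{ji}(t)}.
\]
Multiplying by the nonnegative factor $\bigl(p_iw_{ij}(t)-p_jw_{ji}(t)\bigr)^2$, integrating in $t$ over a period and summing over $(i,j)\in E$ gives $C_{\rm a}^*(p)\ge C_{\rm a}(p)$.

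\textbf{Step 3: conclusion.} For antisymmetric $\a$ all three GTUR's \eqref{tanos1}, \eqref{tanos_teo2}, \eqref{tanos100} apply with the same admissible $p$ (recall \eqref{tanos1} holds for generic increments), and by Steps 1--2 we have $C_{\rm a}(p)\le\min\{C(p),\,C_{\rm a}^*(p)\}$, hence $1/C_{\rm a}(p)\ge\max\{1/C(p),\,1/C_{\rm a}^*(p)\}$; thus \eqref{tanos_teo2} provides the largest of the three lower bounds, and the same holds after optimizing over $p$. I do not expect a genuine obstacle: the only mild care is the bookkeeping of ordered- versus unordered-edge sums (to avoid double counting) and the interpretation of the logarithmic-mean expression at times where $\cJ_{ij}(t)=0$ or where $p_i=0$ makes a numerator vanish, none of which affects the scalar inequalities above.
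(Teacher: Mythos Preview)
Your proof is correct and follows essentially the same approach as the paper. Both arguments reduce to the same two pointwise scalar inequalities: your Titu/Cauchy--Schwarz bound $\frac{a^2}{x}+\frac{b^2}{y}\ge\frac{(a-b)^2}{x+y}$ is exactly what the paper uses (written there as $\frac{(x-y)^2}{X+Y}\le\frac{x^2}{X}+\frac{y^2}{Y}$), and your logarithmic--arithmetic mean inequality is equivalent to the paper's inequality $(x-y)\ln\frac{x}{y}\ge\frac{2(x-y)^2}{x+y}$, which it cites from \cite{BaCFG}.
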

For the proof  of the above proposition see Appendix \ref{marvel}. The optimality of \eqref{tanos_teo2} stated in Proposition \ref{fiordaliso} is also a consequence of a special alternative  derivation of  this  bound by an optimization procedure (cf. Remark \ref{mushadara}). 

Similarly to Proposition \ref{fiordaliso} we have the following result for the universal constants in \eqref{zac}, \eqref{zac_anti} and  \eqref{liberazione}:
\begin{proposition}\label{violetta}
It holds $\s^*\geq \tilde{\s}$ and $\hat{\s}\geq \tilde{\s}$. In particular, when  $\a$ is  antisymmetric and time-independent and $\g \equiv 0$,  \eqref{zac_anti} provides  the optimal lower bound between  \eqref{zac}, \eqref{zac_anti} and  \eqref{liberazione}.
\end{proposition}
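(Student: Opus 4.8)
The plan is to prove the two inequalities $\s^*\ge\tilde\s$ and $\hat\s\ge\tilde\s$ edge by edge, each time reducing the claim to an elementary pointwise estimate on the (strictly positive on $(0,\t]$) asymptotic flows $\cQ_{ij}(t)$ and $\cQ_{ji}(t)$; the last sentence of the statement then follows from the monotonicity of $x\mapsto 1/x$. Throughout I use the standing assumption $(i,j)\in E\Leftrightarrow (j,i)\in E$ of this subsection.

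\textbf{Step 1: $\s^*\ge\tilde\s$.} On each edge $(i,j)$ the two sums carry the same prefactor $(\overline{\cJ}_{ij})^2$ in front of a period-average, so it suffices to show $\tfrac12\,\overline{\bigl(\cJ_{ij}^{-1}\ln(\cQ_{ij}/\cQ_{ji})\bigr)}\ge\overline{(\cQ_{ij}+\cQ_{ji})^{-1}}$. Integrating over a period, this is implied by the pointwise bound $\frac{1}{\cJ_{ij}(t)}\ln\frac{\cQ_{ij}(t)}{\cQ_{ji}(t)}=\frac{\ln\cQ_{ij}(t)-\ln\cQ_{ji}(t)}{\cQ_{ij}(t)-\cQ_{ji}(t)}\ge\frac{2}{\cQ_{ij}(t)+\cQ_{ji}(t)}$, which is exactly the inequality between the logarithmic mean and the harmonic mean of the two positive numbers $\cQ_{ij}(t),\cQ_{ji}(t)$ (with equality when they coincide, so the removable singularity at $\cJ_{ij}(t)=0$ is harmless). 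This recovers the bound $\s^*\ge\tilde\s$ already recorded in \cite{BaCFG}.

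\textbf{Step 2: $\hat\s\ge\tilde\s$.} Here I would pair each oriented edge $(i,j)$ with $(j,i)$. Since $(\overline{\cJ}_{ij})^2=(\overline{\cJ}_{ji})^2$ and $\cQ_{ij}+\cQ_{ji}$ is symmetric in $i,j$, such a pair contributes $2\bigl[(\overline{\cQ}_{ij})^2\,\overline{\cQ_{ij}^{-1}}+(\overline{\cQ}_{ji})^2\,\overline{\cQ_{ji}^{-1}}\bigr]$ to $\hat\s$ and $2(\overline{\cJ}_{ij})^2\,\overline{(\cQ_{ij}+\cQ_{ji})^{-1}}$ to $\tilde\s$. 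Writing $a=\cQ_{ij}$, $b=\cQ_{ji}$ and using $\overline{\cJ}_{ij}=\overline a-\overline b$, the edgewise inequality to establish is $(\overline a)^2\,\overline{a^{-1}}+(\overline b)^2\,\overline{b^{-1}}\ge(\overline a-\overline b)^2\,\overline{(a+b)^{-1}}$. It is symmetric in $a$ and $b$, so I may assume $\overline{a^{-1}}\le\overline{b^{-1}}$; then the pointwise bound $(a+b)^{-1}\le a^{-1}$ together with $(\overline a-\overline b)^2\le(\overline a)^2+(\overline b)^2$ (valid since $\overline a,\overline b\ge0$) gives $(\overline a-\overline b)^2\,\overline{(a+b)^{-1}}\le\bigl[(\overline a)^2+(\overline b)^2\bigr]\,\overline{a^{-1}}\le(\overline a)^2\,\overline{a^{-1}}+(\overline b)^2\,\overline{b^{-1}}$, the last step again using $\overline{a^{-1}}\le\overline{b^{-1}}$. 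Summing over the unordered edges yields $\hat\s\ge\tilde\s$.

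\textbf{Step 3: conclusion.} When $\a$ is antisymmetric and time-independent and $\g\equiv0$, the restrictions in Table~\ref{comix} for \eqref{zac}, \eqref{zac_anti} and \eqref{liberazione} are all met, so these GTUR's give respectively the lower bounds $1/\hat\s$, $1/\tilde\s$ and $1/\s^*$ for $D_{\a,0}/y_{\a,0}^2$. Since $\hat\s\ge\tilde\s$ and $\s^*\ge\tilde\s$ we obtain $1/\tilde\s\ge\max\{1/\hat\s,\,1/\s^*\}$, hence \eqref{zac_anti} is the sharpest of the three. The only genuinely substantive ingredient is the logarithmic-mean/harmonic-mean inequality used in Step 1; everything else — the pairing bookkeeping in Step 2 and the continuous extension of the estimate across $\cJ_{ij}(t)=0$ — is routine, so I do not expect a real obstacle.
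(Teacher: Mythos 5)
Your proof is correct and follows essentially the same route as the paper's. Step~1 is the pointwise estimate $\frac{\ln a - \ln b}{a-b}\geq \frac{2}{a+b}$, which is precisely the paper's inequality~\eqref{elementare}; the paper invokes this via Remark~\ref{acciughe} and \cite{BaCFG} rather than spelling it out, but the content is identical. Step~2 uses the same two ingredients as the paper's Appendix~\ref{marvel} proof, namely $(\overline a-\overline b)^2\leq\overline a^2+\overline b^2$ and the pointwise bound $(a+b)^{-1}\leq a^{-1}$, $(a+b)^{-1}\leq b^{-1}$; the paper simply applies the latter to the two summands directly, which makes your WLOG normalization unnecessary (you could drop it and bound $[(\overline a)^2+(\overline b)^2]\overline{(a+b)^{-1}}\leq (\overline a)^2\overline{a^{-1}}+(\overline b)^2\overline{b^{-1}}$ term by term). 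Step~3 matches the paper.

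One small correction of terminology: the pointwise inequality in Step~1 is \emph{not} the comparison between the logarithmic mean and the harmonic mean. Writing $L(a,b)=\frac{a-b}{\ln a - \ln b}$ and $A(a,b)=\frac{a+b}{2}$, your bound reads $1/L\geq 1/A$, i.e.\ $L\leq A$ --- the logarithmic mean is at most the \emph{arithmetic} mean. (The relation $L\geq H$ to the harmonic mean is also true but is not what is being used.) The math is unaffected; only the name of the inequality is off.
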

For the proof of the above proposition see  Appendix  \ref{marvel} (we recall that the bound $\s^*\geq \tilde{\s}$ has been derived in \cite{BaCFG}).

\smallskip

  In Section \ref{anello},  considering the case of a random walk on the discrete ring, we show that
  the optimal bound \eqref{tanos_teo2} of Proposition \ref{fiordaliso} and the optimal bound \eqref{zac_anti} of Proposition \ref{violetta} are non--comparable bounds. Similarly  the bounds \eqref{tanos_teo2} and \eqref{gtur_exp} are non--comparable, as well as the bounds \eqref{zac_anti} and \eqref{gtur_exp}. This is illustrated in Figure \ref{brividi} in Section \ref{anello} and corresponds to the crossings of the plotted curves.

\smallskip

We  collect some of the above comparative  results in Table \ref{sfinimento}.

\begin{table}
\begin{center}
\begin{tabular}{|c|c|}
\hline
SET-UP & HIERARCHY OF GTUR's\tabularnewline
\hline
\hline
$\begin{cases}
\pi_i(t)=\pi_i\\
\alpha_{ij}(t)=\alpha_{ij}\\
\gamma_{i}(t)=0
\end{cases}$ &  Optimal GTUR2  \eqref{pollo1} $ \Rightarrow$ \eqref{zac} \tabularnewline
\hline
$\alpha_{ij}(t)=-\alpha_{ji}(t)$ & $\begin{cases}
\eqref{tanos_teo2}\Rightarrow \eqref{tanos1}\\
\eqref{tanos_teo2} \Rightarrow \eqref{tanos100}
\end{cases}$\tabularnewline
\hline
$\begin{cases}
\alpha_{ij}(t)=\alpha_{ij}\\
\alpha_{ij}=-\alpha_{ji}\\
\gamma_{i}(t)=0
\end{cases}$ & $\begin{cases}
\eqref{zac_anti}\Rightarrow \eqref{zac}\\
\eqref{zac_anti} \Rightarrow \eqref{liberazione}
\end{cases}$\tabularnewline
\hline
\end{tabular}
$$\,$$
\caption{Implications between GTUR's}\label{sfinimento}
\end{center}
\end{table}

%
%

\subsection{Further comments on \eqref{gtur_exp} }
The rate $\sigma_{\rm naive}\ge 0$ is the asymptotic average value per unit time of   the functional of the trajectories introduced in \cite[Section 4]{BCFG} and described as follows: the functional  equals the logarithm of the ratio of the weight of the forward trajectory 
to the weight of the backward trajectory, without reversal of the protocol. This situation is different 
from the average entropy production rate  $\sigma\ge 0$, for which the reversed trajectory with reversed protocol is 
considered. Furthermore, the quantity $\sigma_{\rm naive}$ can be written as $\sigma_{\rm naive}=\s+\s_{\rm asy}$, 
where 
\begin{equation}\label{entasy}
\s_{\rm asy}:=\sum_{(i,j)\in E}  \blu{\overline{ \cQ_{ij}(A_{ij}-1-\ln A_{ij})}}\,,
\end{equation}
and $A_{ij}(t):= w_{ij}(\t-t)/w_{ij}(t)$. This decomposition has a nice physical interpretation, 
the average entropy production rate $\s$ quantifies energy dissipation and $\sigma_{\rm asy}$ is zero if 
the protocol is symmetric. We point out that $-\s\le\sigma_{\rm asy}\le \s_{\rm naive}$ and that $\s_{\rm asy}$ can have arbitrary sign, as demonstrated with an explicit calculation in Section \ref{anello}.

If, in addition to  \eqref{gtur_exp}, it is possible to apply   \eqref{zac_anti}  or  \eqref{tanos_teo2}  (for example for currents with time--independent increments),
then  there  is a priori no fixed  order between the corresponding rates. This fact is demonstrated by  an example in Section \ref{anello}.

Finally, \eqref{gtur_exp} does not work well when the periodically driven Markov chain is
 obtained by a weak perturbation of a time-homogeneous Markov chain  with continuous time. Indeed, suppose that the transition 
rates are given by  $w_{ij}(t)= c_{ij} +\epsilon d_{ij} (t)$,  where  $c_{ij} $ are the transition rates 
of an irreducible  time--homogeneous Markov chain   and $d_{ij}(t)$  are  genuinely time periodic, with period $\tau$. 
 Then, when $\epsilon \to 0$,  the value in the r.h.s. of \eqref{gtur_exp}
converges to $\tau( e^{\tau\sigma }-1)^{-1}$, which is   smaller (and even much smaller for $\tau$ large) than  
 $1/\s$ entering  in the standard thermodynamic uncertainty relation.  On the other hand, the
rates $C(p)$, $C_{\rm a}(p)$, $C_{\rm a}(p^*)$, $\hat{\s}$, $\tilde{\s}$, $\s^*$    behave well under perturbations.


\subsection{Comments on the weights $\g$}
 We observe that   \eqref{tanos1}, \eqref{tanos_teo2} and \eqref{tanos100} are uniform among the weights $\g$ such that $\la \g,p\ra=0$ for some probability measure $p$ on $V$.
We remark that one cannot find a GTUR uniform among   all $\g$'s. Indeed, if  we consider new weights $\g'$ defined as $\g'_i= \g_i +c$  for some fixed constant $c$, we get  $y_{\a, \g'}= y_{\a, \g}+c$, while $D_{\a,\g'}= D_{\a, \g}$. In particular  the ratio of the asymptotic diffusion coefficient to  the squared asymptotic value can be made arbitrarily small by playing with $c$. On the other hand, if we  take  $\g'=c \g$ for some $c\not=0$, we have  $y_{\a, \g'}= c\, y_{\a, \g}$, while $D_{\a,\g}=c^2 D_{\a, \g'}$, thus implying that  $D_{\a,\g}/y_{\a,\g}^2=D_{\a, \g'}/y_{\a, \g'}^2$. As a consequence GTUR's are automatically uniform among proportional $\g$'s.
 In  \eqref{tanos1}, \eqref{tanos_teo2} and \eqref{tanos100}  one  goes further  replacing proportionality by the weaker condition $\la\overline \g, p\ra=0$.

We also observe that, given $\g$, the existence of a probability measure $p$ such that $\la \overline \g, p\ra=0$ is equivalent to the fact that  the entries of $\overline \g$ are not all positive and not all negative.  If for example the entries of $\g$ are all positive, by taking a suitable constant $c$  one can apply   \eqref{tanos1}, \eqref{tanos_teo2} and \eqref{tanos100}  to the weights $\a, \g'$ where
$\g'_i= \g_i +c$ , and then recover information on $Y_{\a, \g}^{(n)}$ by using that $Y_{\a, \g}^{(n)}= Y_{\a, \g'}^{(n)}-c$, $y_{\a, \g}= y_{\a, \g'}-c$, $D_{\a, \g}= D_{\a, \g'}$.


\subsection{Comments on the derivation of  the GTUR's}\label{listone4}

\smallskip

We comment the methods used to derive the above GTUR's, which are summarized in Table \ref{comix}:

\begin{enumerate}
\item  Due to \eqref{variazionale} we have the upper bound\begin{equation}\label{missile}
I_{\a, \g} (y) \leq I(Q,\rho) \text{ for any $(Q, \rho) \in \cF_{\a, \g,y}$}\,,
\end{equation}
where  $I(Q,\rho)$ is the explicit  function  given in  \eqref{def_F} and the set $\cF_{\a, \g,y}$ is defined after \eqref{def_F}.

\item \rosso{We assume to have an $y$--parameterized pair $(Q_y,\rho_y)\in \cF_{\a, \g,y}$ \blu{such that}
  $(Q_y,\rho_y)$  differs from  $(\cQ,\pi)$    by a term proportional to $y-y_{\a, \g}$.}

\item Plugging the above $y$--parameterized pair $(Q_y, \rho_y )$ in the inequality \eqref{missile} and taking a 2nd order Taylor expansion of the explicit function  $y\mapsto I(Q_y,\rho _y) $ around $y_{\a, \g}$, one gets a quadratic local bound of  $I_{\a, \g}$ at $y_{\a,\g}$ (cf. Theorem \ref{teo1_metodo} in Section \ref{sec_TUR_Y}). \blu{A lower bound for the ratio $D_{\a,\g}/y_{\a,\g}^2$ can then be obtained  by \eqref{siluro}}.

\item By exhibiting different  choices of $(Q_y, \rho_y)$  satisfying the above general conditions, we obtain  \eqref{zac} and \eqref{tanos1}.

\item When $\a$ is antisymmetric $Y_{\a, \g}^{(n)} $ can be expressed as \blu{a} linear  function  of the  empirical density and  current, whose LD principle has been derived  in \cite{BCFG} with an explicit LD rate functional \rosso{$I_*$}. The above strategy can be implemented working with currents \blu{($J$)}  instead of \blu{flows ($Q$)}. Hence, we get a general result given by Theorem \ref{teo2_metodo},  which is the analogous of  Theorem \ref{teo1_metodo}. Indeed, by a different approach,  we show that Theorem \ref{teo2_metodo}   can  be  even derived from Theorem \ref{teo1_metodo}.  By exhibiting    different  choices of $(J_y,\rho_y)$  satisfying our general conditions, we get \eqref{zac_anti} and   \eqref{tanos_teo2}.

\item  \blu{\eqref{tanos100} is a consequence of a general result detailed in Theorem \ref{teo3_metodo}. This theorem  can be obtained along the above scheme, with the   exception that one uses an upper bound of  the LD rate function $I_*$ by a suitable function proposed by \cite{GHPE} and  afterwards applies a  2nd order Taylor   expansion to this function. We also show that indeed Theorem \ref{teo3_metodo} can also be obtained as corollary of Theorem \ref{teo2_metodo}.}

\item \eqref{gtur_exp}  is  a special case of a more general result
 given in Theorem \ref{teo_proes_extended} in Section \ref{sec_proes} and its derivation  follows very closely the one
  in \cite{PvdB}.
The trajectory of the Markov chain  on the time interval $[0,n\t]$ can be thought  of as a concatenation of paths on the fundamental periods $[0,\t]$, $[\t, 2\t]$,.., $[(n-1)\t,n\t]$. One obtains a  LD principle for the  frequencies of   these paths. On the other hand, the empirical functional $Y^{(n)}_{\a,\g}$ can be expressed  as a linear functional of the above frequencies and by contraction one gets a new variational characterization for $I_{\a,\g}$. By playing with suitable inputs in the variational characterization, one finally gets the resulting  quadratic local upper bounds on $Y^{(n)}_{\a,\g}$ and, as a byproduct with \eqref{siluro}, \eqref{gtur_exp}.

\end{enumerate}

%
%
%
%

\section{Examples}\label{sec_esempi}


We study here two specific examples, given by   a periodically driven 2-state Markov chain and  a periodically driven  random walk on a ring. The latter is particularly relevant for the comparison of GTUR's.

\subsection{2-state model} We consider a periodically driven 2-state Markov chain, which can be used e.g. to study a quantum dot.
We take $V=\{0,1\}$. Then the \rrr{periodic} stationary distribution $\pi_i(t)$ has the following form (cf. \cite[Sec.~6]{BCFG},  \cite[Prop.~3.13]{FGR}):

 \begin{align*}
& \pi_0(t) =\frac{e^{-C(t)}}{1-e^{-C(\t)}}\left[\int_0^t w_{10}(s) e^{C(s)}\,ds+
e^{-C(\t) }\int_t^{\t}w_{10}(s)e^{C(s)}\,ds\right]\,,\\
&   \pi_1(t)=\frac{e^{-C(t)}}{1-e^{-C(\t)}}
\left[\int_0^t w_{01}(s) e^{C(s)}\,ds+
e^{-C(\t)}\int_t^{\t}w_{01}(s)  e^{C(s)}\,ds
\right]\,,
\end{align*}
where
\[
C(t):=\int_0^t\left[w_{01}(s)+w_{10}(s)\right]\,ds\,.
\]
\rrr{For simplicity we restrict below to $\overline\g=0$.}

When $\a$ is arbitrary, by Proposition \ref{girasole} we get the optimal (among $p=(p_0,p_1)$) \eqref{tanos1}
\begin{equation}\label{energia1}
\frac{ D_{\a,\g}}{y_{\a, \g}^2}\geq  \frac{1}{2} \left[  \left(\overline{  w_{01}/\pi_0}\right)^{-1}+\left(\overline{ w_{10}/\pi_1}\right)^{-1}\right]\,.
\end{equation}

 \medskip

 When  $\a$ is antisymmetric, by Proposition \ref{fiordaliso}  we know  that   \eqref{tanos_teo2}  is the optimal one between
  \eqref{tanos1}, \eqref{tanos_teo2} and \eqref{tanos100}. One can optimize $C_a(p)$ among the probabilities $p=(p_0,p_1)$ \rrr{as follows}.
  \rrr{Defining  $T(t)$ as }
  \[ T(t):= \cQ_{01}(t)+\cQ_{10}(t)= \pi_0(t) w_{01}(t) +\pi_1(t) w_{10}(t)\,,\]
  by straightforward computations we get
\begin{equation}\label{pioggia}
  \min \{ C_a(p)\,:\, p=(p_0,p_1)\}=2
    \frac{
    \overline{w_{01}^2/  T}\cdot \overline{w_{10}^2/ T}- \left( \overline{w_{01}w_{10}/T}\right)^2
  }{
  \overline{ (w_{01}+w_{10})^2/T}
  }\,.
  \end{equation}
Note that, if one introduces on the fundamental period $[0,\t]$ the probability measure
\[ \nu(dt):=
\left[
 \int_0^\t \frac{1}{ T(s)} ds
  \right]^{-1}
 \frac{1}{T(t) } dt\,,
\]
then we can think of $w_{01}(t)$ and $w_{10}(t)$ as random variables on the probability space  $([0,\t], \nu)$ and
 the   optimal constant given by the r.h.s. of \eqref{pioggia} equals
  \begin{equation*}
 \blu{2} \frac{ {\rm Cov}_\nu \bigl(w_{01}; w_{10} \bigr) }{\nu \bigl( (w_{01}+w_{10})^2\bigr)}\,.
\end{equation*}
By \eqref{tanos_teo2} we have, for  $\a$  antisymmetric,
\begin{equation}\label{energia100}
\frac{ D_{\a,\g}}{y_{\a, \g}^2}\geq  \frac{1}{2}
    \frac{
  \overline{ (w_{01}+w_{10})^2/T}
  }
{
    \overline{w_{01}^2/ T}\cdot \overline{w_{10}^2/T}- \left( \overline{w_{01}w_{10}/T}\right)^2
  }\,.
\end{equation}

We recall that the GTUR's presented  in Section \ref{listone}
 are meaningful  under the condition that $y_{\a,\g}\not =0$.  If one restricts to time--independent currents (i.e. $\a_{ij}(t)=\a_{ij}$, $\a_{ij}=-\a_{ji}$,  and $\g\equiv 0$), then this condition fails, since  $\overline{\cJ}_{01}=0$ for all $t\geq 0$ (see Remark \ref{zanna_bianca} for a generalization).



%
%
%
%

\subsection{Random walk on the ring}\label{anello} We consider a random walk on a ring with $N$ sites, where $k_+(t)$ and $k_-(t)$ \rrr{are} the periodic  probability \rrr{rates} to make a unitary jump clockwise and anticlockwise, respectively. In this case $\pi_i(t)= 1/N$ by symmetry.

 Due to Proposition \ref{girasole}, when \emph{$\a$ is arbitrary and $\overline\g=0$}, we have the  optimal \eqref{tanos1}
\begin{equation}\label{prp1}
\frac{ D_{\a,\g}}{y_{\a, \g}^2}\geq\frac{1}{2} \frac{1}{\overline{k}_++\overline{k}_-} =: \frac{1}{r}\,.
\end{equation}
By Proposition \ref{fiordaliso}, when \emph{$\a$ is \rrr{antisymmetric}  and $\overline\g=0$},
\eqref{tanos_teo2} is optimal among  \eqref{tanos1}, \eqref{tanos_teo2} and \eqref{tanos100}. By optimizing \eqref{tanos_teo2} among the probability measures $p=(p_i)$, we get that the minimum is attained at the uniform probability  and therefore we get  the optimal \eqref{tanos_teo2}
\begin{equation}\label{prp2}
\frac{ D_{\a,\g}}{y_{\a, \g}^2}\geq\frac{1}{2} \left[ \,\overline{  \left(\frac{(\rrr{k_--k_+})^2}{k_-+k_+}\right) }\, \right]^{-1}=: \frac{1}{r_{\rm a}} \,\,.
\end{equation}

For this model we have
\begin{equation}
\tilde {\s}= 2 (\overline{k}_- -\overline{k}_+)^2 \overline{ \frac{1}{k_-+k_+} }\,.
\end{equation}
and
\begin{equation}
\begin{split}
\s_{\rm naive}&= \overline{  k_+ \ln \frac{k_+}{k_-(\t-\cdot)} + k_- \ln \frac{k_-}{k_+(\t-\cdot)}} \\
& = \s+ \overline{ k_+ \ln \frac{k_-}{ k_-(\t-\cdot)} + k_- \ln \frac{ k_+}{ k_+(\t-\cdot)}}\,.
\end{split}
\end{equation}
\rrr{Above the function $k_\pm (\t-\cdot)$ is defined as $t\mapsto k_\pm (\t-t)$.}
If we take for example
$k_+\equiv 1$ we get
$\s_{\rm naive}= \s+ \overline{ \ln \frac{k_-}{ k_-(\t-\cdot)} }$. This shows that there is not a fixed order between $\s_{\rm naive}$ and $\s$.
Indeed, given a positive periodic function $f$, the random walk with rates $k_+\equiv 1$ and $k_-= f$  and  the random walk with rates $k_+\equiv 1$ and $k_-= f(\t-\cdot)$  have inverted ordering for $\s$ and $\s_{\rm naive}$. In particular, $\s_{\rm asy}=\s_{\rm naive}-\s$ can be positive and    negative as well.

\smallskip

 Let us now  take the following  time--symmetric protocol,  where $a,b,c,d$ are positive numbers:
\begin{equation}\label{zittino}
k_+(t)=
\begin{cases}
a & \text{ if } t \in [0,\tau/4)\\
b & \text{ if } t \in [\tau/4, 3\tau/4) \\
a & \text{ if } t \in [3\tau/4, \tau)
 \end{cases}\quad
\text{ and }\quad
k_-(t)=
\begin{cases}
c & \text{ if } t \in [0,\tau/4)\\
d & \text{ if } t \in [\tau/4, 3\tau/4) \\
c & \text{ if } t \in [3\tau/4, \tau)
\end{cases}
\,.
\end{equation}
Then we have
\begin{align}
r &= a+b+c+d\,, \label{halloween1}\\
r_{\rm a} &= \frac{(a-c)^2}{a+c}+\frac{(b-d)^2}{b+d} \,, \label{halloween2}\\
\tilde{\s}&=\frac{1}{4}[(a+b)-(c+d) ]^2\left(\frac{1}{a+c}+\frac{1}{b+d}\right)\,,\label{halloween3}\\
\sigma_{\rm naive}&= \sigma=   \frac{a-d}{2} \ln \frac{a}{d}+\frac{b-c}{2}\ln \frac{b}{c}\,.\label{halloween4}
\end{align}
Note that all the above quantities do not depend on the  period $\t$.

Due to Proposition \ref{fiordaliso} $r$  is lower bounded by $r_{\rm a}$. Due to Proposition \ref{violetta} $\tilde{\s}$ lower bounds $\hat{\s}$ and $\s^*$.  We concentrate on the comparison between the constants $ r_{\rm a},\tilde{\s}$ (which are optimal in the sense clarified by Propositions \ref{fiordaliso} and \ref{violetta}) and $\s_{\rm naive}$.
As shown in Figure \ref{brividi},
there is  no  fixed order  either  between  $r_{\rm a}$ and $\s_{\rm naive}$ or between $\tilde{\s}$ and $\s_{\rm naive}$.
Note that, for $\t\to 0$, the universal constant $(e^{\t \s_{\rm naive} }-1)/\t$ converges to $\s_{\rm naive}$. As a consequence there is no optimality either between the GTUR  \eqref{prp2} and   \eqref{gtur_exp}
or between \eqref{zac_anti} and \eqref{gtur_exp}.  Figure \ref{brividi} shows also  that there is no fixed order between $r_a$ and $\tilde{\s}$, i.e.  \eqref{zac_anti} and \eqref{tanos_teo2} are non--comparable bounds.

%
%

\begin{figure}\includegraphics[width=10cm]{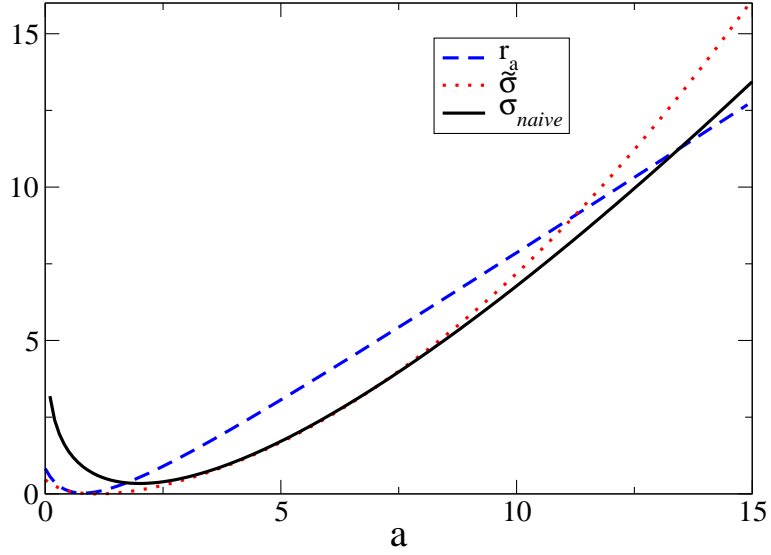}
\caption{The constants $r_{\rm a}$, $\tilde{\s}$ and $\s_{\rm naive}$ as functions of the parameter  $a$ for the random walk on the ring in Section \ref{anello}
with fixed parameters $b=1.7$, $c=0.8$ and $d=2$.}\label{brividi}
\end{figure}

\section{\rosso{Local bounds on $I_{\a,\g}$ and GTUR's for $Y_{\a, \g}^{(n)}$:  case of generic  $\a$}}
 \label{sec_TUR_Y}

Our first aim is  to describe a general method to get  local quadratic upper bounds on $I_{\a, \g}$ around its minimum point $y_{\a,\g}$, thus leading  also to lower bounds on $D_{\a,\g}$ via \eqref{siluro}. This method is an extension of the one  used for the empirical currents in \cite[Section 4.3]{BaCFG}.

\smallskip
\rrr{It is convenient to introduce the concept of  \emph{generalized flow}, which  is defined as a flow without the  restriction of non--negativity. In other words, we will call generalized flow any  function $k: V\times V\to \mathbb R$ which is zero outside $E$.  \rrr{If $k$ is non-negative, then $k$ is a flow}.   The divergence of $k$ is defined as
\begin{equation}\label{pera}
{\rm div}_i k := \sum _j k_{ij}- \sum_j k_{ji}\,.
\end{equation}}

Due to \eqref{variazionale} one has
\begin{equation}\label{sonia1}
I_{\a, \g}(y) \leq \rosso{I(Q, \rho)} \,, \qquad \forall (Q,\rho)\in \cF_{\a, \g,y}\,,
\end{equation}
where the set $ \cF_{\a, \g,y}$ is defined after  in \eqref{def_F}.
Moreover, the function \rrr{$\Phi(q,p)$ defined in \eqref{def_phi}} satisfies the following bound
obtained by a Taylor's expansion around the arbitrary diagonal point $(a,a)$:
\begin{align}
& \Phi(q,p)= \frac{1}{2a}(p-q)^2 + o\left( (q-a)^2+(p-a)^2 \right)\,. \label{local_bb}
\end{align}
Due to \eqref{LLN}, when $y$ is  \rrr{close} to the asymptotic value $y_{\a,\g}$, it is natural  to look for pairs $(Q,\rho)\in \cF_{\a, \g,y}$ which are obtained as perturbation of $(\cQ,\pi)$. To this aim, it is convenient  to use the representation
\begin{equation}\label{denti-a}
\begin{cases}
Q=\cQ+\frac{y-y_{\a,\g} }{y_{\a,\g}} R\,,\\
\rho= \pi+\frac{y-y_{\a,\g} }{y_{\a,\g}} m\,.
\end{cases}
\end{equation}
\rrr{We assume $y_{\alpha,\gamma}\neq 0$ in this equation. For
the case $y_{\alpha,\gamma}= 0$, the equation has to be modified, as explained in Remark \ref{fiorellino} below.}

\medskip

Note that $(Q,\rho)\in \cF_{\a, \g, y}$ if and only if the following properties are satisfied by the pair $(R,m)$:

\smallskip

\begin{itemize}
\item[(P1)] $R= \big(R(t)\big)_{t\geq 0}$ is a  time--periodic generalized flow and therefore  $R(t): V\times V\to \bbR$ is zero outside $E$ for all $t\geq 0$;
\item[(P2)] $m=\bigl( m(t) \bigr) _{t\geq 0}$ is time--periodic and $m(t): V \to \bbR$ satisfies $\sum_i m_i(t)=0$ for all $t\geq 0$;
\item[(P3)] \rosso{$\partial _t m (t)  + \div R(t) =0$}, 
\item[(P4)] \rosso{$y_{\a,\g} =\overline{ \la \a, R \ra}+\overline{\la \g, m \ra}$};
\item[(P5)]   the  functions in the r.h.s. of \eqref{denti-a} take non--negative values.
\end{itemize}

\smallskip

We point out that, given $R,m$ \rosso{satisfying (P1) and (P2)}, since $\cQ_{ij}(t)>0$ for all \rosso{$(i,j)\in E$} and $\pi_i(t)>0$ for all $i\in V$, \rosso{property (P5) is satisfied for $y$} sufficiently close to $y_{\a, \g}$. Since our bounds are local for $y$ close to $y_{\a,\g}$,  we will disregard (P5) in what follows.

\begin{theoremA}\label{teo1_metodo}  For any pair $(R,m)$ fulfilling  the above properties (P1),...,(P4)  the following local quadratic upper  bound holds:
\begin{equation}\label{flash-a}
I_{\a,\g}(y) \leq\frac{1}{2} \frac{(y-y_{\a,\g})^2}{y_{\a,\g}^2} \sum_{(i,j)\in E} \overline{\left(
\rosso{
\frac{ \bigl(R_{ij}- m_i w_{ij} \bigr)^2}{\cQ_{ij}  }
}
\right)}+ o\left(  (y-y_{\a,\g})^2\right)\,.
\end{equation}
In particular, we have the lower bound
\begin{equation}\label{tanos-a}
2 D_{\a, \g}\geq  y_{\a,\g} ^2  \left\{ \sum_{(i,j)\in E}  \overline{
\left(
\rosso{
\frac{ \bigl(R_{ij}- m_iw_{ij} \bigr)^2}{ \cQ_{ij}  }
}
\right)}\right\}^{-1} \,.
\end{equation}
\end{theoremA}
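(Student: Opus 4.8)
The plan is to combine the variational upper bound \eqref{sonia1} with the pointwise Taylor estimate \eqref{local_bb} for $\Phi$, evaluated along the one-parameter family \eqref{denti-a}. First I would observe that, by properties (P1)--(P4), the pair $(Q,\rho)$ defined in \eqref{denti-a} from a given admissible $(R,m)$ indeed lies in $\cF_{\a,\g,y}$ for $y$ close to $y_{\a,\g}$: conditions (i) and (ii) follow from (P1), (P2) together with the fact that $\cQ$ is a flow and $\pi$ a probability measure (and (P5), which we discard since it is automatic near $y_{\a,\g}$); the continuity equation (iii) follows by adding $\partial_t\pi+\div\cQ=0$ (i.e.\ \eqref{cont_eq}) to $\tfrac{y-y_{\a,\g}}{y_{\a,\g}}(\partial_t m+\div R)=0$, which is (P3); and the value constraint (iv) follows from \eqref{LLN}, namely $y_{\a,\g}=\overline{\la\a,\cQ\ra}+\overline{\la\g,\pi\ra}$, plus $\tfrac{y-y_{\a,\g}}{y_{\a,\g}}\big(\overline{\la\a,R\ra}+\overline{\la\g,m\ra}\big)=\tfrac{y-y_{\a,\g}}{y_{\a,\g}}\,y_{\a,\g}=y-y_{\a,\g}$, which is (P4). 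Hence \eqref{sonia1} applies to this $(Q,\rho)$.

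Next I would expand $I(Q,\rho)=\sum_{(i,j)\in E}\overline{\Phi(Q_{ij}(t),\rho_i(t)w_{ij}(t))}$ termwise. For each edge $(i,j)\in E$ and each time $t$, I apply \eqref{local_bb} with the diagonal point $a=\cQ_{ij}(t)=\pi_i(t)w_{ij}(t)$, the first argument $q=Q_{ij}(t)=\cQ_{ij}(t)+\tfrac{y-y_{\a,\g}}{y_{\a,\g}}R_{ij}(t)$, and the second argument $p=\rho_i(t)w_{ij}(t)=\cQ_{ij}(t)+\tfrac{y-y_{\a,\g}}{y_{\a,\g}}m_i(t)w_{ij}(t)$. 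Since $q-a=\tfrac{y-y_{\a,\g}}{y_{\a,\g}}R_{ij}(t)$ and $p-a=\tfrac{y-y_{\a,\g}}{y_{\a,\g}}m_i(t)w_{ij}(t)$ are both $O(y-y_{\a,\g})$, the error term in \eqref{local_bb} is $o\big((y-y_{\a,\g})^2\big)$, and the leading term is
\[
\frac{1}{2\cQ_{ij}(t)}(p-q)^2=\frac{1}{2}\frac{(y-y_{\a,\g})^2}{y_{\a,\g}^2}\,\frac{\big(R_{ij}(t)-m_i(t)w_{ij}(t)\big)^2}{\cQ_{ij}(t)}\,.
\]
Summing over $(i,j)\in E$, averaging over a period, and using that the number of edges and the period are finite (so the $o$-terms aggregate into a single $o((y-y_{\a,\g})^2)$) yields \eqref{flash-a} via \eqref{sonia1}.

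Finally, \eqref{tanos-a} follows from \eqref{flash-a} by the standard second-derivative argument: since $I_{\a,\g}(y_{\a,\g})=0$ and $I_{\a,\g}\ge 0$, the bound \eqref{flash-a} forces $I''_{\a,\g}(y_{\a,\g})\le \sum_{(i,j)\in E}\overline{\big((R_{ij}-m_iw_{ij})^2/\cQ_{ij}\big)}/y_{\a,\g}^2$, and then \eqref{siluro}, i.e.\ $2D_{\a,\g}=1/I''_{\a,\g}(y_{\a,\g})$, gives the claimed lower bound on $2D_{\a,\g}$.

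The main obstacle I anticipate is not in the algebra, which is routine, but in justifying the error control cleanly: one must check that the diagonal point $a=\cQ_{ij}(t)$ stays bounded away from $0$ uniformly in $t$ (true since $\cQ_{ij}(t)=\pi_i(t)w_{ij}(t)>0$ for $(i,j)\in E$ and $t$ in the compact period, with $w_{ij}$ and $\pi_i$ continuous), so that the remainder in \eqref{local_bb} is genuinely uniform in $t$ and the termwise $o$'s can be combined. A secondary subtlety is that \eqref{siluro} presupposes twice-differentiability of $I_{\a,\g}$ at $y_{\a,\g}$ (the exceptional gradient case noted after \eqref{siluro} being excluded); granting this, the quadratic local upper bound \eqref{flash-a} together with the global lower bound $I_{\a,\g}\ge 0$ and vanishing at the minimum pins down the inequality on $I''_{\a,\g}(y_{\a,\g})$.
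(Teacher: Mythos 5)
Your proposal is correct and follows essentially the same route as the paper's own proof: apply \eqref{sonia1} to the perturbed pair $(Q,\rho)$ from \eqref{denti-a}, Taylor-expand each $\Phi$ at the diagonal point $a=\cQ_{ij}(t)$ via \eqref{local_bb}, sum and average, and then convert the local quadratic bound into a variance bound through \eqref{siluro}. You are somewhat more explicit than the paper about verifying that $(Q,\rho)\in\cF_{\a,\g,y}$, about the uniform positivity of $\cQ_{ij}(t)$ controlling the remainder, and about the twice-differentiability hypothesis underlying \eqref{siluro}, but these are exactly the points the paper disposes of in the preamble to the theorem, so there is no genuine difference of method.
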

We point out that, since \eqref{denti-a} defines a  bijection $(R,m)\mapsto (Q,\rho)$,  one would get an identity in \eqref{flash-a} and \eqref{tanos-a}  by optimizing among $(R,m)$ in the above theorem.
\begin{proof}
From  \eqref{def_F} and \eqref{local_bb}
by setting  $a=\mathcal{Q}_{ij}(t)$,
we have
\begin{equation}\label{acqua1}
\begin{split}
\rosso{I(Q,\rho)}:&=\sum_{(i,j)\in E}  \overline{\Phi \bigl( Q_{ij}(t), \rho_i(t)w_{ij}(t) \bigr) }\\
& =\sum_{(i,j)\in E} \overline{\left(  \frac{\left ( Q_{ij}- \rho_i  w_{ij} \right)^2}{2 \cQ_{ij}  } +  \cE_{ij}\right ) }
\\
& =\frac{1}{2} \frac{(y-y_{\a,\g})^2}{y_{\a,\g}^2}  \sum_{(i,j)\in E} \overline{\left(  \frac{\left (R_{ij}- m_iw_{ij}   \right)^2 }{ \cQ_{ij} }+  \cE_{ij}\right ) },
\end{split}
\end{equation}
where the error term \rosso{$\cE_{ij}(t)$} is given by
\begin{equation}\label{acqua2}
\begin{split}
\cE_{ij}(t) & =o\left( \,(Q_{ij}(t)- \cQ_{ij}(t) )^2\, \right) +o\left( \,(\rho_i(t)- \pi_{i}(t) )^2\, \right)
= o\left( (y-y_{\a,\g})^2\right)\,.
\end{split}
\end{equation}
Equations \eqref{acqua1} and \eqref{acqua2} imply \eqref{flash-a}. Finally, \eqref{tanos-a} follows from \eqref{flash-a} by means of \eqref{siluro}.
\end{proof}
\begin{remark}\label{fiorellino} When $y_{\a,\g}=0$ the above arguments remain  valid by
making the following changes.
Formula \eqref{denti-a} becomes $Q=\cQ+yR, \rho=\pi+ym$. In (P4) one replaces
 $y_{\a,\g}$ with $1$. In (P5) the $y_{\a,\g}$'s on the numerator are $0$ while the ones on the denominator become $1$. Then  Theorem \ref{teo1_metodo} remains valid by replacing $y_{\a,\g}$ with  $1$ in the denominator of \eqref{flash-a} (the $y_{\a,\g}$ in the numerator is zero) and in \eqref{tanos-a}.
\end{remark}
Theorem \ref{teo1_metodo} provides  a very general method from which several \rosso{local quadratic bounds and GTUR's}
  can be derived \rosso{by inserting different choices of $(R,m)$}. To get  sharp and interesting bounds it is important to select  special perturbations $(R,m)$ fulfilling  the above properties (P1)--(P4).
We discuss  below some special choices, leading to  some corollaries  of Theorem
\ref{teo1_metodo}. This is of course not a complete list and  one can find other choices in \cite{BaCFG}.

\smallskip
A first class of choices, closely related  to the ones   in \cite[Section 4.5]{BaCFG}, is given in the following \rosso{corollary}:
\begin{corollary}\label{ristretto} Suppose that \rosso{$\cK(t)= (\cK_{ij}(t))$} is a \rosso{time--periodic} generalized flow with $\div \cK=0$ and such that
\rosso{$\overline{\la \a, \cK \ra }\not =0$}.
Then it holds
 \begin{equation}
I_{ \a,\g }(y) \leq  \rosso{\frac{1}{4}}
 \frac{ \hat{\s}
 }{
{\overline{ \la  \a , \cK \ra}\,} ^2
}
 (y- y_{\g, \a}) ^2 + o \bigl( (y- y_{\g, \a}) ^2\bigr)
\end{equation}
and
\begin{equation}\label{lavori_hat}
 D_{\a,\g}\geq \frac{ \rosso{{\overline{ \la  \a , \cK \ra}{\,}} ^2}}{\hat{\s}}\,,
\end{equation}
   where
   \begin{equation}\label{sigma_hat}
   \hat{\s}:=  2 \sum _{(i,j)\in E}
   \rosso{
     \overline{\left(\frac{\cK_{ij}^2 }{\cQ_{ij}} \right)}
      }\,.
  \end{equation}
\end{corollary}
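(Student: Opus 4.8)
The plan is to obtain Corollary~\ref{ristretto} as a direct specialization of Theorem~\ref{teo1_metodo}, by exhibiting an explicit pair $(R,m)$ that verifies properties (P1)--(P4) and then computing the right-hand side of \eqref{tanos-a}. The natural candidate, inspired by \cite[Section~4.5]{BaCFG}, is to perturb only the flow and leave the density unchanged at the asymptotic one: take
\begin{equation*}
m\equiv 0\,, \qquad R(t):= \frac{1}{\overline{\la\a,\cK\ra}}\,\cK(t)\,.
\end{equation*}
First I would check the five properties. Property (P1) holds because $\cK(t)$ is a time--periodic generalized flow (zero outside $E$), and rescaling by the nonzero constant $\overline{\la\a,\cK\ra}$ preserves this. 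Property (P2) is trivial since $m\equiv 0$. Property (P3), i.e. $\partial_t m(t)+\div R(t)=0$, reduces to $\div\cK(t)=0$, which is exactly the hypothesis $\div\cK=0$. Property (P4) reads $y_{\a,\g}=\overline{\la\a,R\ra}+\overline{\la\g,m\ra}$; since $m\equiv 0$ the second term vanishes, and $\overline{\la\a,R\ra}=\frac{1}{\overline{\la\a,\cK\ra}}\,\overline{\la\a,\cK\ra}$ — wait, this gives $1$, not $y_{\a,\g}$.

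Here is where the normalization must be handled carefully, and this is the only real subtlety: I expect the intended choice is actually $R(t):=\dfrac{y_{\a,\g}}{\overline{\la\a,\cK\ra}}\,\cK(t)$, so that (P4) gives $\overline{\la\a,R\ra}=y_{\a,\g}$ as required. (If one instead worked under the convention of Remark~\ref{fiorellino}, the factor would be $1/\overline{\la\a,\cK\ra}$; but since the statement assumes $y_{\a,\g}\neq 0$ throughout Section~\ref{listone}, the version with the explicit $y_{\a,\g}$ is the right one.) With this choice, (P1)--(P4) all hold as above, and I then plug into \eqref{flash-a}: since $m\equiv 0$, the quantity $R_{ij}-m_i w_{ij}$ is just $R_{ij}=\frac{y_{\a,\g}}{\overline{\la\a,\cK\ra}}\cK_{ij}$, so
\begin{equation*}
\sum_{(i,j)\in E}\overline{\left(\frac{(R_{ij}-m_iw_{ij})^2}{\cQ_{ij}}\right)}
=\frac{y_{\a,\g}^2}{\overline{\la\a,\cK\ra}^2}\sum_{(i,j)\in E}\overline{\left(\frac{\cK_{ij}^2}{\cQ_{ij}}\right)}
=\frac{y_{\a,\g}^2}{\overline{\la\a,\cK\ra}^2}\cdot\frac{\hat\s}{2}\,,
\end{equation*}
using the definition \eqref{sigma_hat} of $\hat\s$.

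Substituting this into \eqref{flash-a} of Theorem~\ref{teo1_metodo}, the $y_{\a,\g}^2$ in the numerator of that expression cancels against the $y_{\a,\g}^2$ just produced, and the factor $\tfrac12$ combines with the $\tfrac12$ from $\hat\s/2$ to give the stated coefficient $\tfrac14$:
\begin{equation*}
I_{\a,\g}(y)\leq \frac14\,\frac{\hat\s}{\overline{\la\a,\cK\ra}^2}\,(y-y_{\a,\g})^2+o\bigl((y-y_{\a,\g})^2\bigr)\,,
\end{equation*}
which is the first displayed inequality. The bound \eqref{lavori_hat} on $D_{\a,\g}$ then follows immediately from \eqref{tanos-a} (equivalently, from \eqref{siluro} applied to the quadratic local upper bound): $2D_{\a,\g}\geq y_{\a,\g}^2\bigl(\tfrac{\hat\s}{2\overline{\la\a,\cK\ra}^2}\bigr)^{-1}$ gives exactly $D_{\a,\g}\geq \overline{\la\a,\cK\ra}^2/\hat\s$. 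I do not anticipate a serious obstacle beyond getting the normalizing constant and the bookkeeping of the factors $\tfrac12$, $\tfrac14$ right; the content is entirely a clean instantiation of Theorem~\ref{teo1_metodo} with the divergence-free choice $m\equiv 0$, $R\propto\cK$, and the only hypotheses actually used are $\div\cK=0$ (for (P3)) and $\overline{\la\a,\cK\ra}\neq 0$ (so that the rescaling and the final division make sense).
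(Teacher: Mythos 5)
Your proof is correct and follows exactly the route the paper itself uses: apply Theorem \ref{teo1_metodo} with $m\equiv 0$ and $R=(y_{\a,\g}/\overline{\la\a,\cK\ra})\cK$, then read off \eqref{flash-a} and \eqref{tanos-a}. The self-correction on the normalizing constant is the right one, and the bookkeeping of the factors $1/2$ and $1/4$ is accurate.
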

\begin{proof}
It is enough to apply Theorem \ref{teo1_metodo} with   $R:=(y_{\a,\g}/\rosso{\overline{\la \a, \cK \ra } }) \cK $
 and $m=0$.
\end{proof}

\rosso{We collect  some comments on the above Corollary \ref{ristretto}:}
\begin{itemize}
\item A possible choice of $\cK$ is given by  $\cK=\overline{\cQ}$ when $\la \overline{ \a}, \overline{\cQ}\ra \not=0$.

\item
When $\g\equiv 0$ and $\a$ is time--independent  \rosso{we have that   $y_{\a, \g}= \overline{ \la \a ,  \cQ \ra}=\la \a, \overline \cQ\ra$}.
 In particular,  by taking $\cK=\overline{\cQ}$ in the above \rosso{Corollary} \ref{ristretto}, \eqref{lavori_hat} becomes  \eqref{zac} valid whenever $\la \a, \overline{\cQ}\ra\not=0$.

\item \rosso{Another possible choice for $\cK$ is given by $\cK_{ij}(t)= \mu_i(t) w_{ij} (t)$, where $\mu_i(t)$ denotes the so--called accompanying distribution, i.e. the invariant distribution for the time--homogeneous Markov chain with time--independent rates $w_{ij}(t)$ ($t$ thought of as frozen). For this second choice  we  also refer to \cite[Section 4.5]{BaCFG}.}

\item \rosso{The property of being a  time periodic generalized flow with zero divergence  is preserved by linear combinations. In particular, one can also  take  $\cK_{ij}=c_1 \overline{\cQ}_{ij}+ c_2  \mu_i(t) w_{ij} (t)$, for any fixed $c_1,c_2\in \bbR$.}

\item Given the  model, one can look for more efficient choices \rosso{of $\cK$}
by using   Schnakenberg's  cycle theory \cite{BFG2,Sch} to build divergence--free flows,  and afterwards by trying to optimize \rrr{among these flows}.  Note that non--trivial divergence--free \rosso{flows} on the graph $(V,E)$ always exist.

\end{itemize}
%
%
We are not going to discuss in detail the possible optimization problems related to the \rosso{last comment above, concerning  Schnakenberg's  cycle theory, since this approach is very model--dependent.} 
 We consider in the next section  just one special case where an argument of this type works \rrr{naturally} (cf. first proof of Theorem \ref{teo2_metodo}).

\smallskip

In \cite{KSP} the authors consider functionals of the form \eqref{gattino} with $\a$ antisymmetric and $\g$ not arbitrary, but of the form
\begin{equation}\label{salvezza}
\g_i (t)= \frac{d}{dt} g_i(t) \qquad \forall i \in V\,,
\end{equation}
for some periodic function $g_i$.  \rosso{The above form \eqref{salvezza} is equivalent to the property}
\begin{equation}\label{integrino}
\overline \g_i =0\qquad \forall i \in V\,.
\end{equation}
In the following result we consider general weights $\a$  and we  weaken condition \eqref{integrino} on $\gamma$.

\begin{corollary}\label{teo1_TUR}  Suppose that the entries of  $\overline\g$ are not all strictly positive, and not all strictly negative. Fix any time--independent probability measure $p=(p_i)_{i \in V}$ on $V$ with $\la p, \overline{\g}\ra=0$.  Recall the constant $C(p)$ defined in \eqref{cipi}.
 Then we have  the
upper bound
\begin{equation}\label{flash1}
I_{\a,\g}(y) \leq\frac{C(p)}{4} \frac{(y-y_{\a,\g})^2}{y_{\a,\g}^2}+ o\left(  (y-y_{\a,\g})^2\right)\,.
\end{equation}
As a consequence we have  \eqref{tanos1}.
\end{corollary}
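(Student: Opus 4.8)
The plan is to derive Corollary \ref{teo1_TUR} as a direct application of the general method in Theorem \ref{teo1_metodo}, by exhibiting an explicit pair $(R,m)$ satisfying (P1)--(P4) and then reading off the constant. The natural guess, suggested by the form of $C(p)$ in \eqref{cipi}, is to take a perturbation supported on the density that is proportional to the fixed probability $p$: set $m_i(t):= c\,p_i$ for a time-independent constant $c$ and $R\equiv 0$. Then (P1) is trivial, (P2) holds because $\sum_i p_i=1$ forces $\sum_i m_i = c$, so we must instead take $m_i(t) := c\,(p_i - \bar p_i)$ — wait, $p$ is already a probability, so to enforce $\sum_i m_i(t)=0$ we should take $m_i(t):=c(p_i - 1/|V|)$; but then $\div R = 0 = -\partial_t m$ still holds since $m$ is constant in $t$, so (P3) is fine. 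Actually, cleaner: since $\gamma$ only enters through $\langle \g, m\rangle$ and we want this to reduce to something involving $\langle \overline\g, p\rangle$, the shift by a constant vector is harmless precisely because $\sum_i m_i(t)=0$ would otherwise be violated; one checks $\overline{\langle \g, m\rangle} = c\,\langle \overline\g, p\rangle - (c/|V|)\sum_i\overline\g_i$. This suggests the bookkeeping needs a little care, and the right move is to first reduce to the case $\sum_i\overline\g_i=0$ or to absorb the constant; in any case the key point is that with $R=0$ the constraint (P4) becomes $y_{\a,\g} = \overline{\langle \g, m\rangle}$, which (after handling the constant) determines $c$ in terms of $\langle\overline\g,p\rangle$.

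The subtlety is that (P4) also needs $R$ to contribute when $\langle\overline\g,p\rangle$ could vanish or when we also want the $\a$-part to play a role; but re-reading the statement, the claim is just that \eqref{tanos1} holds for \emph{generic} $\a$ with $C(p)$ independent of $\a$. So the correct choice must make the quadratic form in \eqref{flash-a} independent of $\a$, which forces $R$ to be determined by $m$ through the continuity equation in a way that cancels the $\a$-dependence — OR, more likely, the intended choice is $m_i(t) = c\,p_i$ together with $R$ chosen so that $\partial_t m + \div R = 0$; since $p$ is time-independent, $\partial_t m = 0$, so we just need $\div R = 0$, and the simplest is $R\equiv 0$. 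Then \eqref{flash-a} gives $I_{\a,\g}(y)\le \tfrac12\frac{(y-y_{\a,\g})^2}{y_{\a,\g}^2}\sum_{(i,j)\in E}\overline{\big((m_i w_{ij})^2/\cQ_{ij}\big)}$, and with $m_i = c\,p_i$ this equals $\tfrac{c^2}{2}\frac{(y-y_{\a,\g})^2}{y_{\a,\g}^2}\sum_{(i,j)\in E}p_i^2\,\overline{(w_{ij}^2/\cQ_{ij})} = \tfrac{c^2}{4}\frac{(y-y_{\a,\g})^2}{y_{\a,\g}^2}C(p)$, using the definition of $C(p)$ and $\cQ_{ij}=\pi_i w_{ij}$. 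So \textbf{provided we can take $c=1$}, i.e. provided (P4) is satisfied with this choice, we get exactly \eqref{flash1}.

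The remaining issue is therefore just verifying (P2) and (P4) for the choice $m_i(t) := p_i$ (time-independent), $R \equiv 0$. For (P2): $\sum_i m_i(t) = \sum_i p_i = 1 \ne 0$ — so literally this fails, and the fix is that the correct perturbation direction is any $m$ with $\sum_i m_i = 0$; but since the quadratic form $\sum p_i^2 \overline{(w_{ij}^2/\cQ_{ij})}$ in $C(p)$ is written in terms of $p$ directly, I expect the paper actually works with $Q,\rho$ directly (so $\rho = \pi + \tfrac{y-y_{\a,\g}}{y_{\a,\g}}(p-\pi)$-type interpolation, i.e. $\rho$ interpolates between $\pi$ and $p$), equivalently $m = p - \pi$... no, $m$ must be $t$-periodic and $\pi$ is, so $m_i(t) = p_i - \pi_i(t)$ is admissible and satisfies $\sum_i m_i(t) = 0$. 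But then the quadratic form involves $(p_i - \pi_i(t))w_{ij}(t)$, not $p_i w_{ij}$, contradicting \eqref{cipi}. \textbf{This discrepancy is the main obstacle I would need to resolve}: I suspect the resolution is that in \eqref{flash-a} the term is $R_{ij} - m_i w_{ij}$ and one takes $R_{ij} = m_i w_{ij} - p_i w_{ij}\cdot(\text{something})$... Rather, I would take $m_i(t) = p_i - \pi_i(t)$ but \emph{also} $R_{ij}(t) = -\pi_i(t)w_{ij}(t) = -\cQ_{ij}(t)$, chosen so that $R_{ij} - m_i w_{ij} = -\cQ_{ij} - (p_i - \pi_i)w_{ij} = -p_i w_{ij}$, giving $(R_{ij}-m_iw_{ij})^2/\cQ_{ij} = p_i^2 w_{ij}^2/\cQ_{ij}$ as desired. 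Then I must check $\div R = \div(-\cQ) = -\div\cQ = \partial_t\pi$ (by \eqref{cont_eq}) and $\partial_t m = \partial_t p - \partial_t\pi = -\partial_t\pi$, so (P3): $\partial_t m + \div R = -\partial_t\pi + \partial_t\pi = 0$. Good. Finally (P4): $\overline{\langle\a,R\rangle} + \overline{\langle\g,m\rangle} = -\overline{\langle\a,\cQ\rangle} + \overline{\langle\g,p\rangle} - \overline{\langle\g,\pi\rangle}$; since $\langle p, \overline\g\rangle = 0$ this is $-\overline{\langle\a,\cQ\rangle} - \overline{\langle\g,\pi\rangle} = -y_{\a,\g}$ by \eqref{LLN} — off by a sign. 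So I would instead use the perturbation with the opposite sign, $m_i(t) = \pi_i(t) - p_i$, $R = \cQ$, giving (P4) equal to $+y_{\a,\g}$ as required, and the quadratic form unchanged. Thus the proof is: apply Theorem \ref{teo1_metodo} with $R := \cQ$ and $m := \pi - p$ (viewing $p$ as a constant vector), check (P1)--(P4) as above, and compute the resulting quadratic coefficient to be $C(p)/4$, whence \eqref{flash1}; then \eqref{tanos1} follows from \eqref{siluro} exactly as \eqref{tanos-a} follows from \eqref{flash-a}.
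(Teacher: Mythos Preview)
Your proposal is correct and, after the exploratory detours, arrives at exactly the same choice as the paper: take $R:=\cQ$ and $m:=\pi-p$, verify (P1)--(P4) using the continuity equation \eqref{cont_eq} and $\langle\overline\g,p\rangle=0$, compute $R_{ij}-m_iw_{ij}=p_iw_{ij}$, and read off the coefficient $C(p)/4$ from \eqref{flash-a}. The paper's proof is identical in substance, just presented without the false starts.
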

\begin{proof}
We take
\[ R(t):= \cQ(t) \,, \qquad m(t):= \pi(t) - p\,.
\]
Properties (P1), (P2), (P3) are  satisfied (recall the continuity equation \eqref{cont_eq}).
Due to \eqref{LLN} and \eqref{integrino} we have
\[
\overline{ \la \a  , R  \ra}+\overline{\la \g , m  \ra}=y_{\a,\g} -\la \overline  \g ,p \ra=y_{\a,\g} \,.
\]
Hence, also property (P4) is satisfied.
Note that
\[ R_{ij}(t)-m_i(t) w_{ij}(t)=   \cQ_{ij}(t) -  ( \pi_i(t) - p_i) w_{ij}(t) = p_i w_{ij}(t) \,. \]
By plugging the above identity in \eqref{flash-a} and \eqref{tanos-a} we get
\eqref{flash1}.    \eqref{tanos1} then follows due to \eqref{siluro}.
\end{proof}

\begin{remark}
We would like to point out that the art of finding good bounds is
related to the art of finding good perturbations $(m,R)$ and this is
essentially the art of finding periodic solutions of the
continuity equation in condition (P3). We briefly discuss in Appendix
\ref{aoh} \rrr{two} possible approaches.
\end{remark}

%

%
%

\section{\rosso{Local bounds on $I_{\a,\g}$
 and GTUR's for $Y_{\a, \g}^{(n)}$:  case of antisymmetric   $\a$}}
\label{sec_TUR_anti}
In all this section we will assume, without further mention, that
\[
\begin{cases}
\text{$\a$ is antisymmetric, i.e. $\a_{ij}(t)=-\a_{ji}(t)$}\,,\\
\text{$(i,j)\in E$ if and only if $(j,i)\in E$}\,.
\end{cases}\]
Below we provide two
 general methods to get local quadratic bounds on  $I_{\a,\g}$
   (see Theorems \ref{teo2_metodo} and \ref{teo3_metodo}) and we discuss some corollaries.
We prove  Theorem  \ref{teo2_metodo}  using two approaches. For the first proof, we start with Theorem \ref{teo1_metodo}
and perform an optimization  among flows.
 Hence, Theorem \ref{teo2_metodo} can be seen as
corollary of Theorem \ref{teo1_metodo}. For the second proof, we use the LD rate functional associated with the
empirical current and empirical measure from \cite{BCFG}.
Also for Theorem \ref{teo3_metodo} we provide two alternative derivations. In one derivation we  get  Theorem \ref{teo3_metodo}  from  Theorem \ref{teo2_metodo}. As a consequence, both Theorems  \ref{teo2_metodo}   and \ref{teo3_metodo} follow from Theorem \ref{teo1_metodo}.

\subsection{Preliminaries and Theorem \ref{teo2_metodo}}
In what follows, we call \emph{current} any function $d: V\times V \to \bbR$ which is zero outside $E$ and antisymmetric, i.e. $d_{ij}= - d_{ji}$ $\forall i,j$.
 We  order the elements of $V$ (arbitrarily) and write  $< $ for the order relation. Given $a: V\times V \to \bbR$, we define
\[ \la\la a, d \ra\ra:=  \sum _{(i,j)\in E: i<j}  a_{ij}d _{ij}\,.\]
Note that, when also $a$ is antisymmetric, we have $\la \la a, d\ra\ra=\frac{1}{2}\la a, d\ra$. Finally, we define the divergence of  a current $d$ by
\begin{equation}\label{cdiv}
 {\rm div}_i \,d= \sum _j d_{ij} \,.
 \end{equation}
 We point out  that the divergence of a current is defined  differently from the divergence of a generalized flow (see \eqref{pera}).  This definition guarantees that, if $k$ is a  generalized flow and $d$ is the current $d_{ij}:= k_{ij}-k_{ji}$, then $\la\la d',d\ra\ra=\la d', k\ra$ for any current $d'$.

\smallskip

Due to the antisymmetry of the increments $\a_{ij}$, the LD rate functional $I_{\a, \g}$
admits an alternative variational characterization, in addition to \eqref{variazionale}, in
terms of the empirical current and density \cite{BCFG}, as explained
below.
We consider  the function
 \begin{equation}
 \psi (j,g,a):=\sqrt{g^2+a^2} -\sqrt{j^2+a^2}+ j \bigl[\sinh^{-1}(j/a)-\sinh^{-1}(g/a)\bigr]\,,
 \end{equation}
$\sinh^{-1}(x)$ denoting the hyperbolic arcsinus.  \rosso{Then it holds}
 \begin{equation}\label{variazionale_bis}
I_{\a, \g}(y)= \inf \{ \rosso{I_*(J,\rho)}\,:\, (J,\rho)  \in \cF^{*}_{\a, \g,y}  \}\,,
\end{equation}
where  now
\begin{align}
&\rosso{I_*(J,\rho)} :=\sum_{(i,j)\in E: i<j}  \overline{\Psi \bigl( J_{ij}(t), G_{ij}(t), a_{ij}(t) \bigr)}\,,\label{def_F_bis} \\
& G_{ij}(t):= \rho_i(t)w_{ij}(t)-\rho_j(t) w_{ji}(t)\,,\label{def_G}\\
& a_{ij}(t) := 2\sqrt{ \rho _i(t) \rho _j(t) w_{ij}(t) w_{ji}(t) }\,,  \label{def_a}
\end{align}
 and $\cF^*_{\a,\g,y}$ denotes the family of pairs  $(J,\rho)=\left( J(t), \rho (t)\right)_{t\geq 0}$ such that
\begin{itemize}
\item[(i)]
 $J(t)$ is a time--periodic current, i.e. $J(t)=J(t+\t)$ and $J(t)$ is an antisymmetric function   on $V\times V$ which is zero outside $E$  for each time  $t$;
 \item[(ii)]  $\rho(t)$   is a time--periodic probability measure on $V$;
 \item[(iii)] the continuity equation $\partial _t \rho (t) + \div J(t)=0$ is satisfied \rosso{(cf. \eqref{cdiv})};
  \item[(iv)] \rosso{$y =\overline{ \la\la  \a , J \ra \ra}+\overline{\la \g , \rho \ra}$}.
\end{itemize}

According to  \cite[Theorem 3]{BCFG}, formula \eqref{def_F_bis} with the restrictions (i), (ii) and (iii) is the joint LD rate function for the   empirical  current and measure \ppp{with speed $n \t$}. \rosso{The empirical current $J^{(n)}$ is defined as the measure on $E\times [0,\t)$ given by
$J^{(n)}(i,j,dt):= Q^{(n)} (i,j,dt)- Q^{(n)} (j,i,dt)$ (cf. Section \ref{principiante}). Formula \eqref{def_F_bis}}
 can be deduced directly by contraction starting from the joint LD rate functional for the empirical measure and flow discussed  in Section \ref{principiante}. As in \eqref{rappresento} we have
  \begin{equation}\label{rappresento_anti}
 Y^{(n)}_{\a,\g}=\ppp{\frac{1}{\t}}\sum _{(i,j)\in E: i<j }\int  \a_{ij}(t) J^{(n)} (i,j, dt)+\ppp{\frac{1}{\t}} \sum_i \int  \g_{i}(t) \rho^{(n)} (i ,dt)\,,
 \end{equation}
 thus allowing to derive the LD principle for $Y^{(n)}_{\a,\g} $ from the LD principle of $\left(J^{(n)}, \rho^{(n)}\right)$ by contraction.
 We  point out that the asymptotic pair $(\cJ, \pi)$ belongs to $\cF^*_{\a,\g,y}$ with $y=y_{\a,\g}$ and that it fulfills the identity $I_*(\cJ, \pi)=0$.

\smallskip

As in Section \ref{sec_TUR_Y}, from now on we assume that  $y_{\a, \g}\not =0$.
 \begin{remark}\label{zanna_bianca}
Suppose that the transition  graph  $(V,E)$ has the property that (i)  for each edge in $E$ also the reversed edge belongs to  $E$, (ii) the non--oriented graph obtained from $(V,E)$ by disregarding the edge orientation is a tree.
In this case, being divergence--free, $\overline{\cJ}$  must be zero and, as a consequence, $y_{\a,0}=0$ for currents with time--independent increments. Moreover, reasoning as in the last paragraph of Section \ref{sec_MC}, one can show that the set $\cF_{\a,\g,y}^*$ defined after \eqref{def_a} is empty for $y\not= 0$, thus implying that $I_{\a,\g}(y)=+\infty$ for $y\not =0$ and $I_{\a,\g}(0)=0$.
\end{remark}

As in Section \ref{sec_TUR_Y} we consider pairs  $(J,\rho)$ written as perturbations of the stationary pair $(\cJ,\pi)$ as follows:
\begin{equation}\label{denti}
\begin{cases}
J=\cJ+\frac{y-y_{\a,\g} }{y_{\a,\g}} Z\,,\\
\rho= \pi+\frac{y-y_{\a,\g} }{y_{\a,\g}} m\,.
\end{cases}
\end{equation}
To assure that $(J, \rho) \in \cF^*_{\a,\g, y}$,
the pair $(Z,m)$ must satisfy the following properties:

\smallskip

\begin{itemize}
	\item[(P1$^*$)] $Z= \big(Z(t)\big)_{t\geq 0}$ is a time--periodic current (in particular  $Z(t): V\times V\to \bbR$ is antisymmetric and is  zero outside $E$ for all $t\geq 0$);
	\item[(P2$^*$)] $m=\bigl( m(t) \bigr) _{t\geq 0}$ is time--periodic and $m(t): V \to \bbR$ satisfies $\sum_i m_i(t)=0$ for all $t\geq 0$;
	\item[(P3$^*$)] $\partial _t m + \div Z(t) =0$ \rosso{(cf. \eqref{cdiv})};
	\item[(P4$^*$)] \rosso{$y_{\a,\g} =\overline{ \la\la \a, Z\ra \ra}+\overline{\la \g , m \ra}$;} 	\item[(P5$^*$)]  it holds    $ \pi_i(t) +\frac{y-y_{\a,\g} }{y_{\a,\g}}   m_i(t) \geq 0$ for all $t\geq 0$ and $i \in V$.
\end{itemize}
\rosso{Since $\pi_i(t)>0$ for any $t$, condition (P5$^*$) is satisfied for $y$ near enough to $y_{\a,\g}$. As a consequence, in what follows we disregard condition (P5$^*$).}

\smallskip

\rrr{We can finally state our general method to get local quadratic bounds on $I_{\a,\g}$}:

\begin{theoremA}\label{teo2_metodo}
	For any pair $(Z,m)$ fulfilling  the above properties (P1$^*$),...,(P4$^*$)  the following local quadratic upper  bound holds:
	\begin{equation}\label{flash}
	I_{\a,\g}(y) \leq\frac{1}{2} \frac{(y-y_{\a,\g})^2}{y_{\a,\g}^2} \sum_{(i,j)\in E:i<j} \overline{
	\left(
	\rosso{
	\frac{ \bigl(Z_{ij} - (m_i  w_{ij} -m_j w_{ji}  )\bigr)^2}{\cQ_{ij}  +\cQ_{ji}}
	}
	\right)
	}+ o\left(  (y-y_{\a,\g})^2\right)\,.
	\end{equation}
	In particular, we have the lower bound
	\begin{equation}\label{tanos}
	2 D_{\a, \g}\geq  y_{\a,\g} ^2  \left\{
	\sum_{(i,j)\in E:i<j} \overline{
	\left(
	\rosso{
	\frac{ \bigl(Z_{ij} - (m_i  w_{ij} -m_j w_{ji}  )\bigr)^2}{\cQ_{ij}  +\cQ_{ji}}
	}
	\right)
	}
	\right\}^{-1} \,.
	\end{equation}
\end{theoremA}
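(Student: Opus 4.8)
The plan is to mirror exactly the proof of Theorem \ref{teo1_metodo}, working now in the current–density representation \eqref{variazionale_bis}–\eqref{def_F_bis} instead of the flow–density representation \eqref{variazionale}–\eqref{def_F}. The starting point is the upper bound $I_{\a,\g}(y)\le I_*(J,\rho)$ valid for any $(J,\rho)\in\cF^*_{\a,\g,y}$, together with the parametrization \eqref{denti} of $(J,\rho)$ in terms of a perturbation $(Z,m)$ of the stationary pair $(\cJ,\pi)$. By the discussion preceding the statement, the properties (P1$^*$)--(P4$^*$) on $(Z,m)$ are exactly equivalent (up to the positivity condition (P5$^*$), which holds automatically for $y$ near $y_{\a,\g}$) to the condition $(J,\rho)\in\cF^*_{\a,\g,y}$, so it suffices to Taylor–expand $I_*(J,\rho)$ to second order around $(\cJ,\pi)$ and bound the error.

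First I would establish the local quadratic behaviour of the summand $\Psi(j,g,a)$ near the stationary configuration. The key computation is a Taylor expansion of $\Psi\bigl(J_{ij}(t),G_{ij}(t),a_{ij}(t)\bigr)$ in the perturbation; one checks that $\Psi$ vanishes (with vanishing gradient in the appropriate variables) when evaluated at the stationary values $j=\cJ_{ij}(t)$, $g=G_{ij}^{\rm stat}(t)=\cJ_{ij}(t)$, $a=a_{ij}^{\rm stat}(t)=2\sqrt{\cQ_{ij}(t)\cQ_{ji}(t)}$ — this is the identity $I_*(\cJ,\pi)=0$ stated in the excerpt — and that the leading quadratic form is
\[
\Psi(j,g,a)=\frac{(j-g)^2}{2\sqrt{(j^{\rm stat})^2+(a^{\rm stat})^2}}+o\bigl((\text{perturbation})^2\bigr),
\]
where $\sqrt{(\cJ_{ij})^2+(a^{\rm stat}_{ij})^2}=\sqrt{(\cQ_{ij}-\cQ_{ji})^2+4\cQ_{ij}\cQ_{ji}}=\cQ_{ij}+\cQ_{ji}$. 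This is the same structural fact that underlies the identity $\la\la d',d\ra\ra=\la d',k\ra$ and the passage from flows to currents; concretely, with $J-G=(J_{ij}-G_{ij})$ and using \eqref{denti} one gets $J_{ij}-G_{ij}=\frac{y-y_{\a,\g}}{y_{\a,\g}}\bigl(Z_{ij}-(m_iw_{ij}-m_jw_{ji})\bigr)$, since $\cJ_{ij}-G_{ij}^{\rm stat}=0$. Substituting into the expansion of $I_*$ and using that the error terms are $o\bigl((Z\text{ or }m\text{ perturbation})^2\bigr)=o\bigl((y-y_{\a,\g})^2\bigr)$ uniformly (the perturbation is linear in $y-y_{\a,\g}$ with $t$-periodic, bounded coefficients, and $\cQ_{ij}(t)>0$ on $E$) yields \eqref{flash}.

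Then I would conclude \eqref{tanos} from \eqref{flash} exactly as in Theorem \ref{teo1_metodo}: the local quadratic upper bound on $I_{\a,\g}(y)$ around its minimizer $y_{\a,\g}$ gives an upper bound on $I''_{\a,\g}(y_{\a,\g})$, and then \eqref{siluro}, $2D_{\a,\g}=1/I''_{\a,\g}(y_{\a,\g})$, inverts it to the stated lower bound on $2D_{\a,\g}$. (As in Remark \ref{fiorellino}, the case $y_{\a,\g}=0$ would be handled by the analogous rescaling, but we are assuming $y_{\a,\g}\ne 0$.)

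The main obstacle is the Taylor expansion of $\Psi(j,g,a)$ itself: unlike $\Phi(q,p)$, which expands transparently around a diagonal point, $\Psi$ involves the three coupled variables $j,g,a$ with the two of them ($g$ and $a$) being nonlinear functions of $\rho$, so one must carefully verify that (i) $\Psi$ and its first derivatives vanish at the stationary point, and (ii) the cross terms involving the $a$-perturbation and the $g$-perturbation do not contribute at quadratic order beyond what is captured by $(j-g)^2/(2(\cQ_{ij}+\cQ_{ji}))$. This is a somewhat delicate but essentially routine computation with $\asinh$; an alternative, cleaner route — which the excerpt signals is available — is to bypass it entirely by deriving Theorem \ref{teo2_metodo} from Theorem \ref{teo1_metodo} via an optimization over flows $R$ with fixed associated current $Z_{ij}=R_{ij}-R_{ji}$, since $\min\{R_{ij}^2/\cQ_{ij}+R_{ji}^2/\cQ_{ji}:R_{ij}-R_{ji}=Z_{ij}\}=Z_{ij}^2/(\cQ_{ij}+\cQ_{ji})$ and likewise for the $m$-dependent shift, which reproduces \eqref{flash} directly from \eqref{flash-a}.
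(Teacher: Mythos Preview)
Your proposal is correct and essentially reproduces both of the paper's two proofs: the Taylor expansion of $\Psi$ around the stationary point (with the key identity $\sqrt{(\cJ_{ij})^2+(a_{ij}^{\rm stat})^2}=\cQ_{ij}+\cQ_{ji}$) is exactly the paper's second proof, and your alternative route via optimizing \eqref{flash-a} over flows $R$ with fixed current $R_{ij}-R_{ji}=Z_{ij}$ is the paper's first proof (phrased there as adding an arbitrary symmetric generalized flow $S$ to $R'=Z/2$, after noting that $\div S=0$ and $\langle\alpha,S\rangle=0$ so that (P3)--(P4) are preserved).
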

We point out that, since \eqref{denti} defines a  bijection $(Z,m)\mapsto (J,\rho)$,  one would get an identity in \eqref{flash} and \eqref{tanos}  by optimizing among $(Z,m)$ in the above theorem.

\subsection{First proof of Theorem \ref{teo2_metodo}}
The proof relies on Theorem \ref{teo1_metodo} and an optimization procedure in the same spirit of  the last comment on  Corollary   \ref{ristretto}.

Let $(Z,m)$ be a pair fulfilling properties  (P1$^*$),...,(P4$^*$) and let  $R'$ be the time--periodic generalized flow given by $R'_{ij}(t):=Z_{ij}(t)/2$. Note that  the  pair $(R',m)$ satisfies properties (P1),...,(P4) in Section \ref{sec_TUR_Y}. We take $R=(R(t))_{t\geq 0}$ as $R(t):= R'(t)+S(t)$, where  $S(t)$ is a  generic  time--periodic symmetric  generalized  flow, i.e. $S_{i,j}(t)=S_{j,i}(t)$ for all $i,j,t$. Since $\div S(t)=0$ and  $\langle \a(t),S(t)\rangle=0$ by the antisymmetry of $\a$, also  the pair $(R,m)$ satisfies conditions (P1),..,(P4) and therefore  Theorem \ref{teo1_metodo} \rrr{applies} to  $(R,m)$.

\rrr{We  optimize the upper bound  \eqref{flash-a}
in Theorem \ref{teo1_metodo} over the symmetric generalized flows $S$.}
For the optimization, the basic computation that we need is the following. We consider some fixed numbers $r_k, a_k, q_k$, $k=1,2$ and  compute
\begin{equation}\label{questa}
\inf_{s\in \mathbb R}\left[\frac{(r_1+s-a_1)^2}{q_1}+\frac{(r_2+s-a_2)^2}{q_2}\right]\,.
\end{equation}
The function is minimized at
$$
s^*=\frac{c_1}{c_1+c_2}(a_1-r_1)+\frac{c_2}{c_1+c_2}(a_2-r_2)\,,
$$
where $c_k:=q_k^{-1}$. The minimal value is given by
\begin{equation}\label{quella}
\frac{\left[(r_1-r_2)-(a_1-a_2)\right]^2}{q_1+q_2}\,.
\end{equation}
Let us come back to the upper bound \eqref{flash-a}
in Theorem \ref{teo1_metodo}. Independently for each pair of edges $(i,j) $ and $(j,i)$, we can evaluate
\[ \inf_{s\in \bbR }\left\{
 \frac{\bigl(R'_{i j}(t) +s  -m_i(t) w_{ij}(t) \bigr)^2}{ \cQ_{ij}(t) }+
  \frac{\bigl(R'_{ji}(t) +s  -m_j(t) w_{ji}(t) \bigr)^2}{ \cQ_{ji}(t) }\right\}\,,
\]
where $s$ has to be thought as the value $S_{ij}(t)=S_{ji}(t)$.
According to \eqref{quella} the above infimum is indeed attained at a suitable value $S_{ij}^*(t)$ and equals
\begin{equation}
\frac{ \left[ \bigl(R'_{i j}(t)-R'_{ji}(t)\bigr) -\bigl (m_i(t) w_{ij}(t) - m_j(t) w_{ji}(t)\bigr)  \right]^2}{  \cQ_{ij}(t)+ \cQ_{ji}(t)}\,.
\end{equation}
As a consequence,  by taking $R(t)= R'(t)+S^*$
the resulting bound \eqref{flash-a} reduces to \eqref{flash} since $Z_{ij}(t):=R'_{ij}(t)-R'_{ji}(t)$.
Finally, \eqref{tanos} follows from \eqref{flash} and \eqref{siluro}.
%

\subsection{Second  proof of Theorem  \ref{teo2_metodo}}
 \rosso{We follow the same arguments} of Theorem \ref{teo1_metodo} but applied to the functional \rosso{\eqref{def_F_bis}. We first} consider the Taylor's expansion up to the second \rosso{order} of the function $\Psi(j,g,a)$ around the point $(x-y,x-y, 2\sqrt{xy})$  with $x,y\geq 0$.
 By writing
\begin{equation}
\begin{cases}
&j= x-y +\d j \,,\\
& g = x-y + \d g \,,\\
& a= 2\sqrt{xy} + \d a\,,
\end{cases}
\end{equation}
 we have \rosso{(after  cumbersome but straightforward computations) that}
 \begin{equation}\label{second_order}
 \begin{split}
\psi (j,g,a) &= \frac{1}{2} \frac{1}{x+y}(j-g)^2 + o \bigl( (\d j )^2\bigr)+ o \bigl( (\d g)^2\bigr) +
o\bigl((\d a )^2 \bigr)\\
& \rosso{= \frac{1}{2} \frac{1}{x+y}(\d j-\d g)^2 + o \bigl( (\d j )^2\bigr)+ o \bigl( (\d g)^2\bigr) +
o\bigl((\d a )^2 \bigr)}\,.
\end{split}
 \end{equation}
 \rosso{By \eqref{def_G} and \eqref{def_a} we can write
 \begin{align}
  J_{ij}(t) & = \cQ_{ij}(t) - \cQ_{ji}(t) + \frac{y-y_{\a,\g}}{y_{\a,\g}} Z_{ij}(t) \,,\\
  G_{ij}(t) & = \cQ_{ij}(t) - \cQ_{ji}(t)  +\frac{y-y_{\a,\g}}{y_{\a,\g}} \bigl[ m_i(t) w_{ij}(t) - m_j(t) w_{ji}(t) \bigr]\,,\\
  a_{ij} (t) &= 2\sqrt{\cQ_{ij}(t) \cQ_{ji}(t) } + \d a_{ij}(t)
 \end{align}
 where $\d a_{ij}(t)=O \bigl( |y-y_{\a,\g}|\bigr)$ (i.e.  $| \d a_{ij}(t)|\leq C  |y-y_{\a,\g}|$ for $y$ near to $y_{\a,\g}$). Due to the above identities,  applying \eqref{second_order}
 with $x= \cQ_{ij}(t)$ and $y=\cQ_{ji}(t)$
 we get
 \begin{multline}
 \psi\bigl( J_{ij}(t), G_{ij}(t), a_{ij}(t) \bigr)=\\\frac{1}{2}\frac{(y-y_{\a,\g})^2}{y_{\a,\g}^2} \frac{  \left[ Z_{ij}(t)- \bigl( m_i(t) w_{ij}(t) - m_j(t) w_{ji}(t) \bigr) \right]^2}{\cQ_{ij}(t) + \cQ_{ji}(t)}+ o\left(  (y-y_{\a,\g})^2\right)\,.
 \end{multline}
 From this equation together with   \eqref{variazionale_bis} and \eqref{def_F_bis} we get \eqref{flash}. Finally, \eqref{tanos} follows from \eqref{flash} by \eqref{siluro}.}

\subsection{Corollaries to Theorem \ref{teo2_metodo}}
Likewise the previous section we have also the following results.

\begin{corollary} \label{ristretto_bis}   Suppose that \rosso{$\cK(t)= \bigl(\cK_{ij}(t)\bigr)$ is a  time--periodic} current with $\div \cK=0$ and such that \rosso{$ \overline{\la\la \a,\cK  \ra\ra} \not=0$}.
Then it holds
 \begin{equation}
I_{ \a,\g }(y) \leq
 \rosso{
 \frac{1}{4}
 \frac{ \tilde{\s}
 }{
\overline{ \la\la  \a , \cK\ra \ra} ^2
}
}
 (y- y_{\g, \a}) ^2 + o \bigl( (y- y_{\g, \a}) ^2\bigr)
\end{equation}
and
\begin{equation}\label{lavori_tilde}
 D_{\a,\g}\geq \frac{ \rosso{\overline{ \la \la \a , \cK\ra  \ra} ^2}}{
 \tilde{\s}}\,,
\end{equation}
   where
   \begin{equation}\label{sigma_tilde}
  \tilde{\s}:= 2\sum _{(i,j)\in E: i<j}
   \rosso{
     \overline{\frac{\cK_{ij}^2}{\cQ_{ij}+ \cQ_{ji}}}
  }\,.
  \end{equation}
\end{corollary}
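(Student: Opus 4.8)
The plan is to derive Corollary \ref{ristretto_bis} as an immediate application of Theorem \ref{teo2_metodo}, mirroring the way Corollary \ref{ristretto} was obtained from Theorem \ref{teo1_metodo}. The key is to exhibit a single admissible pair $(Z,m)$: take $m\equiv 0$ and let $Z$ be the time--periodic current obtained by rescaling $\cK$,
\[
Z:=\frac{y_{\a,\g}}{\,\overline{\la\la\a,\cK\ra\ra}\,}\,\cK\,.
\]
This is well defined thanks to the hypothesis $\overline{\la\la\a,\cK\ra\ra}\neq 0$ and the standing assumption $y_{\a,\g}\neq 0$.

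First I would check that $(Z,m)$ satisfies (P1$^*$)--(P4$^*$). Property (P1$^*$) is clear, since $Z$ is a scalar multiple of the time--periodic current $\cK$; (P2$^*$) is trivial with $m\equiv0$; (P3$^*$) reduces to $\div Z=0$, which follows from $\div\cK=0$; and (P4$^*$) is the one--line identity $\overline{\la\la\a,Z\ra\ra}+\overline{\la\g,m\ra}=\tfrac{y_{\a,\g}}{\overline{\la\la\a,\cK\ra\ra}}\,\overline{\la\la\a,\cK\ra\ra}+0=y_{\a,\g}$.

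Next I would substitute this choice into the bound \eqref{flash} of Theorem \ref{teo2_metodo}. Since $m\equiv 0$, the numerator collapses to $Z_{ij}^2=\bigl(y_{\a,\g}/\overline{\la\la\a,\cK\ra\ra}\bigr)^2\cK_{ij}^2$, so the edge sum factors as $\bigl(y_{\a,\g}/\overline{\la\la\a,\cK\ra\ra}\bigr)^2\sum_{(i,j)\in E:\,i<j}\overline{\cK_{ij}^2/(\cQ_{ij}+\cQ_{ji})}=\bigl(y_{\a,\g}/\overline{\la\la\a,\cK\ra\ra}\bigr)^2\,\tilde\s/2$. The factor $y_{\a,\g}^2$ cancels against the $1/y_{\a,\g}^2$ in the prefactor of \eqref{flash}, leaving exactly the asserted local quadratic upper bound with constant $\tilde\s/\bigl(4\,\overline{\la\la\a,\cK\ra\ra}^2\bigr)$. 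The lower bound $D_{\a,\g}\geq\overline{\la\la\a,\cK\ra\ra}^2/\tilde\s$ then follows by reading off the second derivative of the upper bound on $I_{\a,\g}$ at its minimum $y_{\a,\g}$ and invoking \eqref{siluro}, precisely as at the end of the proof of Theorem \ref{teo2_metodo}.

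There is no real obstacle here: all the analytic work sits in Theorem \ref{teo2_metodo}, and what remains is only the verification that the rescaled divergence--free current is an admissible input. The sole points requiring a word of care are the well--posedness of the normalization (hence the two nondegeneracy hypotheses $\overline{\la\la\a,\cK\ra\ra}\neq0$ and $y_{\a,\g}\neq0$) and, for the passage to $D_{\a,\g}$ via \eqref{siluro}, the differentiability of $I_{\a,\g}$ near $y_{\a,\g}$, which is the implicit regularity assumption carried throughout this section.
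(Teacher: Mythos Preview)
Your proof is correct and follows exactly the paper's approach: the paper states that Corollary \ref{ristretto_bis} is an immediate consequence of Theorem \ref{teo2_metodo} with $Z:=\bigl(y_{\a,\g}/\,\overline{\la\la \a, \cK\ra\ra}\bigr)\cK$ and $m=0$, which is precisely the input you chose and verified.
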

We point out that Corollary \ref{ristretto_bis} was also obtained in \cite{BaCFG} and it is an immediate consequence of \rosso{Theorem} \ref{teo2_metodo} with
$\rosso{
Z:=\left(y_{\a,\g}/\,\overline{\la\la \a, \cK\ra\ra}\right) \cK}$ and $m=0$.

\smallskip

As in Section \ref{sec_TUR_Y} we can  collect  some comments on the above Corollary \ref{ristretto_bis}:
\begin{itemize}
\item A possible  choice of $\cK$ is given by  $\cK=\overline{\cJ}$ when  $\la \la \overline{ \a}, \overline{ \cJ}\ra \ra \not=0$.
\item
When $\g\equiv 0$ and $\a$ is time--independent  we have that   $y_{\a, \g}= \overline{ \la\la \a ,  \cJ \ra \ra}=\la \la \a, \overline \cJ \ra\ra$.
 In particular,  by taking $\cK=\overline{\cJ}$ in the above Corollary \ref{ristretto_bis}, \eqref{lavori_tilde} becomes  \eqref{zac_anti}
  valid whenever $\la \a, \overline{\cJ}\ra\not=0$.
\item Another possible choice for $\cK$ is given by $\cK_{ij}(t)= \mu_i(t) w_{ij} (t)-\mu_j w_{ji}(t)$, where $\mu_i(t)$ denotes the so--called  accompanying  distribution (cf. Section \ref{sec_TUR_Y}). For this second choice  we  also refer to \cite[Section 4.5]{BaCFG}.
\item The property of being a  time periodic current  with zero divergence  is preserved by linear combinations. In particular, one can also  take  $\cK_{ij}=c_1 \overline{\cJ}_{ij}+\rrr{ c_2  \left( \mu_i(t) w_{ij} (t)-\mu_j(t)w_{ji}(t) \right)}$, for any fixed $c_1,c_2\in \bbR$ (see \cite[Section 4.5]{BaCFG} for further discussions).
\item Given the  model, one can look for more efficient choices of $\cK$
by using   Schnakenberg's  cycle theory \cite{BFG2,Sch} to build divergence--free currents,  and afterwards by trying to optimize among these currents.  We recall that any divergence--free current $\cK$   must be zero     if the graph $(V,E)$ is  a tree  after replacing pairs of oriented edges $(i,j)$ and $(j,i)$ by the unoriented edge $\{i,j\}$. In this case Corollary \ref{ristretto_bis}  becomes empty.
\end{itemize}

\begin{corollary}\label{teo2_TUR}
 Suppose that the entries of  $\overline\g$ are not all strictly positive, and not all strictly negative. Fix any time--independent probability measure $p=(p_i)_{i \in V}$ on $V$ with $\la p, \overline{\g}\ra=0$.  Recall the definition of $C_{\rm a}(p)$ in \eqref{cipia}.
Then we have  the
upper bound
\begin{equation}\label{flash_teo2}
I_{\a,\g}(y) \leq\frac{1}{4
} \frac{(y-y_{\a,\g})^2}{y_{\a,\g}^2} C_{\rm a}(p)
 + o\left(  (y-y_{\a,\g})^2\right)\,.
\end{equation}
As a consequence we have \eqref{tanos_teo2}.
\end{corollary}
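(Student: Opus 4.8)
The plan is to obtain Corollary \ref{teo2_TUR} as a direct application of Theorem \ref{teo2_metodo}, choosing the perturbation pair exactly in the spirit of the proof of Corollary \ref{teo1_TUR}. Concretely, I would take
\[ Z(t) := \cJ(t)\,, \qquad m(t) := \pi(t) - p\,, \]
where $\cJ$ and $\pi$ are the asymptotic current and density. First I would verify properties (P1$^*$)--(P4$^*$). Property (P1$^*$) holds because $\cJ_{ij}=\cQ_{ij}-\cQ_{ji}$ is antisymmetric, vanishes outside $E$ and is $\t$--periodic; (P2$^*$) holds since $\sum_i m_i(t)=1-\sum_i p_i=0$; (P3$^*$) is precisely the continuity equation \eqref{roteante} for $(\pi,\cJ)$ (note that $p$ is time--independent, and that the current divergence convention \eqref{cdiv} is the one appearing in (P3$^*$)); finally (P4$^*$) follows from the pointwise identity $\la\la \a,\cJ\ra\ra=\la \a,\cQ\ra$, valid because $\a$ is antisymmetric, combined with \eqref{LLN} and the hypothesis $\la p,\overline\g\ra=0$, which together give $\overline{\la\la \a,Z\ra\ra}+\overline{\la \g,m\ra}=y_{\a,\g}-\la\overline\g,p\ra=y_{\a,\g}$.

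The key computation is that for this choice, using $\pi_i w_{ij}=\cQ_{ij}$,
\[ Z_{ij}(t)-\bigl(m_i(t)w_{ij}(t)-m_j(t)w_{ji}(t)\bigr)=\cJ_{ij}(t)-\bigl(\cQ_{ij}(t)-\cQ_{ji}(t)\bigr)+p_i w_{ij}(t)-p_j w_{ji}(t)=p_i w_{ij}(t)-p_j w_{ji}(t)\,. \]
Plugging this into the bound \eqref{flash} of Theorem \ref{teo2_metodo} gives
\[ I_{\a,\g}(y)\le \frac12\frac{(y-y_{\a,\g})^2}{y_{\a,\g}^2}\sum_{(i,j)\in E:\,i<j}\overline{\left(\frac{(p_i w_{ij}-p_j w_{ji})^2}{\cQ_{ij}+\cQ_{ji}}\right)}+o\bigl((y-y_{\a,\g})^2\bigr)\,. \]
Since the summand $(p_i w_{ij}-p_j w_{ji})^2/(\cQ_{ij}+\cQ_{ji})$ is invariant under the exchange $(i,j)\leftrightarrow(j,i)$, one has $\sum_{(i,j)\in E:\,i<j}(\cdot)=\tfrac12\sum_{(i,j)\in E}(\cdot)$, so the right-hand side equals $\tfrac{1}{4}C_{\rm a}(p)(y-y_{\a,\g})^2/y_{\a,\g}^2+o(\cdot)$ with $C_{\rm a}(p)$ as in \eqref{cipia}; this is exactly \eqref{flash_teo2}. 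Then \eqref{tanos_teo2} follows from \eqref{flash_teo2} via \eqref{siluro}, reading off $I''_{\a,\g}(y_{\a,\g})\le C_{\rm a}(p)/(2y_{\a,\g}^2)$.

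I do not expect a genuine obstacle: the argument is a single substitution into Theorem \ref{teo2_metodo}. The only points requiring care are the factor--of--two bookkeeping when passing from $\sum_{i<j}$ to the full sum over $E$ in \eqref{cipia}, the verification of (P4$^*$) through the antisymmetry identity $\la\la \a,\cJ\ra\ra=\la \a,\cQ\ra$, and checking that (P3$^*$) coincides with \eqref{roteante} under the current divergence convention; condition (P5$^*$) is automatically met for $y$ close to $y_{\a,\g}$ since $\pi_i(t)>0$, as already remarked before Theorem \ref{teo2_metodo}.
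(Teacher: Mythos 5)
Your proposal is correct and uses exactly the same construction as the paper: the paper's proof of Corollary \ref{teo2_TUR} is the one-line remark that it follows from Theorem \ref{teo2_metodo} by taking $Z(t):=\cJ(t)$ and $m(t):=\pi(t)-p$, which is precisely your choice. Your verification of (P1$^*$)--(P4$^*$), the substitution yielding $Z_{ij}-(m_iw_{ij}-m_jw_{ji})=p_iw_{ij}-p_jw_{ji}$, and the factor-of-two bookkeeping between $\sum_{i<j}$ and $\sum_{(i,j)\in E}$ (valid since the summand is symmetric under $(i,j)\leftrightarrow(j,i)$ and $E$ is symmetric by standing assumption in this section) are all correct details that the paper leaves implicit.
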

The above result follows from Theorem  \ref{teo2_metodo} by taking $Z(t):= \cJ(t)$ and  $m(t):= \pi(t) - p$.


\begin{remark}\label{mushadara}
Note that both \rosso{Corollary  \ref{ristretto_bis} and \ref{teo2_TUR}  could be derived} respectively from Corollary  \rosso{\ref{ristretto} and \ref{teo1_TUR}} by an optimization over symmetric generalized flows \rosso{as} in the first proof of \rosso{Theorem} \ref{teo2_metodo}. In particular, in the case of an antisymmetric $\a$, the bounds discussed in this section are better \rosso{than}  the corresponding ones discussed in the previous \rosso{section}, since they are obtained by an optimization \rosso{procedure. We recall that we have proved in a direct way this issue (cf. Prop. \ref{fiordaliso} and its proof in Appendix \ref{marvel})}.
\end{remark}

\subsection{Theorem \ref{teo3_metodo} and its corollaries}\label{bingo}


In Theorem \ref{teo3_metodo} below we  present   another general method to produce quadratic bounds on the  LD
rate functional $I_{\a,\g}$. We provide two simple derivations of this theorem.  The first one is inspired by the approach followed  in \cite[Section 4.1]{BaCFG}. The second one,  based  on Theorem \ref{teo2_metodo}, shows indeed that the bounds provided by Theorem \ref{teo2_metodo} are better than the ones provided by Theorem  \ref{teo3_metodo} (see Remark \ref{acciughe} below). Nevertheless, the interest to Theorem \ref{teo3_metodo} comes from the fact that it allows (see the corollaries below) to get GTUR's with constants resembling in their form to the average entropy production rate $\s$.


\begin{theoremA}\label{teo3_metodo}
	For any pair $(Z,m)$ fulfilling   properties (P1$^*$),...,(P4$^*$)  the following local quadratic upper  bound holds:
	\begin{equation}\label{flash3}
	I_{\a,\g}(y) \leq\frac{1}{4} \frac{(y-y_{\a,\g})^2}{y_{\a,\g}^2} \sum_{(i,j)\in E:i<j}
	\overline{
	\left(
	\frac{ \bigl(Z_{ij} - (m_i  w_{ij} -m_j w_{ji}  )\bigr)^2}{\cJ_{ij}} \ln \frac{\cQ_{ij}}{\cQ_{ji}}
	\right)}
	+ o\left(  (y-y_{\a,\g})^2\right)\,.
	\end{equation}
	In particular, we have the lower bound
	\begin{equation}\label{tanos3}
	 D_{\a, \g}\geq  y_{\a,\g} ^2  \left\{ \sum_{(i,j)\in E:i<j}
	\overline{\left(
		\frac{ \bigl(Z_{ij} - (m_i  w_{ij} -m_j w_{ji}  )\bigr)^2}{\cJ_{ij}} \ln \frac{\cQ_{ij}}{\cQ_{ji}}
	\right)}
	\right\}^{-1} \,.
	\end{equation}
\end{theoremA}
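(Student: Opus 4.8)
The plan is to mimic the structure of the two proofs already given for Theorem \ref{teo2_metodo}, but this time replacing the local quadratic expansion of $\Psi$ by a suitable \emph{upper bound} on $\Psi$ of the type proposed in \cite{GHPE}, and then Taylor-expanding that bound. Concretely, I would start from the variational formula \eqref{variazionale_bis}--\eqref{def_F_bis}, so that $I_{\a,\g}(y)\le I_*(J,\rho)$ for any admissible pair $(J,\rho)\in\cF^*_{\a,\g,y}$. Using the parametrization \eqref{denti} with a pair $(Z,m)$ satisfying (P1$^*$)--(P4$^*$), the pair $(J,\rho)$ lies in $\cF^*_{\a,\g,y}$ for $y$ near $y_{\a,\g}$, exactly as in the second proof of Theorem \ref{teo2_metodo}.

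The key new ingredient is a pointwise bound of the form
\[
\psi(j,g,a)\le \frac{(j-g)^2}{2}\cdot\frac{1}{g_0}\,\frac{\ln(q_1/q_2)}{q_1-q_2}\Big|_{\text{at the reference point}} + (\text{higher order}),
\]
i.e. for $q_1,q_2>0$, writing $g_0=q_1-q_2$ and $a_0=2\sqrt{q_1q_2}$, one has
\begin{equation*}
\psi(j,g,a)\le \frac{1}{2}\,\frac{\ln(q_1/q_2)}{q_1-q_2}\,(j-g)^2 + o\bigl((\delta j)^2+(\delta g)^2+(\delta a)^2\bigr)
\end{equation*}
where $j=g_0+\delta j$, $g=g_0+\delta g$, $a=a_0+\delta a$. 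This is the periodic/flow analogue of the inequality $(b-a)\ln(b/a)\ge 2(b-a)^2/(b+a)$ (equivalently $\tfrac{\ln(q_1/q_2)}{q_1-q_2}\ge \tfrac{2}{q_1+q_2}$) that underlies the gap between Theorem \ref{teo3_metodo} and Theorem \ref{teo2_metodo}. I would either cite \cite{GHPE} for the exact form of the majorizing function, or verify by a direct second-order Taylor computation that the majorant touches $\psi$ to second order at the reference point with Hessian $\tfrac{\ln(q_1/q_2)}{q_1-q_2}$ in the $(j-g)$ direction and with the claimed curvature; the verification of this inequality (or rather, of the fact that the chosen majorant dominates $\psi$ and agrees to second order) is the main obstacle, and it is precisely the place where I expect the bulk of the ``cumbersome but straightforward'' work to sit.

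Once that bound is in hand, I substitute the reference point $x=\cQ_{ij}(t)$, $y=\cQ_{ji}(t)$, so $g_0=\cQ_{ij}(t)-\cQ_{ji}(t)=\cJ_{ij}(t)$, and read off from \eqref{def_G}--\eqref{def_a} that $\delta j_{ij}(t)=\tfrac{y-y_{\a,\g}}{y_{\a,\g}}Z_{ij}(t)$, $\delta g_{ij}(t)=\tfrac{y-y_{\a,\g}}{y_{\a,\g}}\bigl(m_i(t)w_{ij}(t)-m_j(t)w_{ji}(t)\bigr)$, and $\delta a_{ij}(t)=O(|y-y_{\a,\g}|)$. Plugging into the majorized expansion of $\psi$ gives
\begin{equation*}
\psi\bigl(J_{ij}(t),G_{ij}(t),a_{ij}(t)\bigr)\le \frac{1}{2}\frac{(y-y_{\a,\g})^2}{y_{\a,\g}^2}\,\frac{\bigl(Z_{ij}(t)-(m_iw_{ij}-m_jw_{ji})(t)\bigr)^2}{\cJ_{ij}(t)}\ln\frac{\cQ_{ij}(t)}{\cQ_{ji}(t)}+o\bigl((y-y_{\a,\g})^2\bigr),
\end{equation*}
using $\tfrac{\ln(\cQ_{ij}/\cQ_{ji})}{\cJ_{ij}}=\tfrac{\ln(q_1/q_2)}{q_1-q_2}$. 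Summing over $(i,j)\in E$ with $i<j$, averaging over the period, and feeding the result into \eqref{variazionale_bis}--\eqref{def_F_bis} yields \eqref{flash3}; then \eqref{tanos3} follows from \eqref{siluro} exactly as in the proofs of Theorems \ref{teo1_metodo} and \ref{teo2_metodo}. I would also remark, as foreshadowed in the text, that \eqref{flash3} is weaker than \eqref{flash} because $\tfrac{\ln(\cQ_{ij}/\cQ_{ji})}{\cJ_{ij}}\ge \tfrac{2}{\cQ_{ij}+\cQ_{ji}}$, so Theorem \ref{teo3_metodo} is subsumed by Theorem \ref{teo2_metodo}; alternatively one can derive \eqref{flash3} directly from \eqref{flash} by that elementary inequality, which is the second ``simple derivation'' the authors promise.
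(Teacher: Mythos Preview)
Your plan coincides with the paper's two proofs: the first applies the GHPE inequality $\Psi(J_{ij},G_{ij},a_{ij})\le \tfrac{1}{4}\,\frac{(J_{ij}-G_{ij})^2}{G_{ij}}\ln\tfrac{\rho_iw_{ij}}{\rho_jw_{ji}}$ globally at the perturbed pair $(J,\rho)$ and then Taylor-expands $G_{ij}\to\cJ_{ij}$ and $\rho_iw_{ij}/\rho_jw_{ji}\to\cQ_{ij}/\cQ_{ji}$, while the second simply combines \eqref{flash} with the elementary inequality $(x-y)\ln(x/y)\ge 2(x-y)^2/(x+y)$, exactly as you describe at the end. Two small points: your intermediate displays carry a prefactor $\tfrac{1}{2}$ where the GHPE bound gives $\tfrac{1}{4}$ (this is what makes the sum over $i<j$ match the $\tfrac{1}{4}$ in \eqref{flash3}); and note that the paper avoids your ``verify the majorant touches $\psi$ to second order at the reference point'' step by invoking the GHPE bound as a \emph{global} inequality in the perturbed variables and only afterwards expanding the right-hand side, which is both shorter and sidesteps the need to check any curvature condition.
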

\begin{proof}[First proof]
We have (recall \eqref{variazionale_bis} and \eqref{def_G})
\begin{equation}\label{silenzioso}
I_{\a, \g} (y) \leq  \rosso{I_*(J,\rho)} \leq \frac{1}{4} \sum _{(i,j)\in E: i<j } \frac{1}{\t}\int_0^\t \frac{[ J_{ij}(t)-G_{ij}(t)]^2 }{G_{ij}(t)} \ln \frac{\rho_i(t) w_{ij}(t)}{ \rho_j(t) w_{ji}(t)}dt\,,
\end{equation}
for any pair $(J,\rho)$ in $\cF^*_{\a, \g, y}$.
The second bound in \eqref{silenzioso} follows from Eq. (12) in \cite{GHPE}, \rosso{implying that
\[ \Psi \bigl( J_{ij}(t), G_{ij}(t), a_{ij}(t) \bigr) \leq \frac{1}{4} \frac{[ J_{ij}(t)-G_{ij}(t)]^2 }{G_{ij}(t)} \ln \frac{\rho_i(t) w_{ij}(t)}{ \rho_j(t) w_{ji}(t)}\,.
\]}
We take the pair $(J,\rho)$ as in \eqref{denti}.  Then, for $y $ close to $y_{\a, \g}$, we have that
$(J, \rho)\in \cF^*_{\a,\g,y}$ and therefore we can apply \eqref{silenzioso} to $(J,\rho)$. The thesis then follows by a Taylor's expansion of the r.h.s. of \eqref{silenzioso} for $y$ close to  $y_{ \a,\g}$, since
\begin{align}
&J_{ij}(t)- G_{ij}(t)=\frac{y-y_{\a,\g} }{y_{\a,\g}} \left( Z_{ij}(t)- [m_i(t) w_{ij}(t)-m_j(t) w_{ji}(t)]\right)\,,\label{uffa1}\\
& G_{ij}(t)=\cJ_{ij}(t) + o(1)\,,\label{uffa2} \\
& \frac{\rho_i(t) w_{ij}(t)}{ \rho_j(t) w_{ji}(t)} = \frac{ \cQ_{ij}(t)}{\cQ_{ji}(t)} +o(1)\,.\label{uffa3}
\end{align}
\end{proof}

\begin{proof}[Second proof]
The bound \eqref{flash3} is an immediate consequence of the bound \eqref{flash} in Theorem \ref{teo2_metodo} and the \blu{general}  inequality \blu{(cf. \cite[Eq. (29)]{BaCFG})}
 \begin{equation}\label{elementare}
  (x-y) \ln \frac{x}{y} \geq \frac{2 (x-y)^2}{x+y}\,,\qquad x,y>0\,.
  \end{equation}
\blu{Indeed, from  the above inequality one gets that $(\cJ_{ij})^{-1}\ln (\cQ_{ij}/\cQ_{ji})\geq 2 (\cQ_{ij}+\cQ_{ji})^{-1}$.}
Finally, \eqref{tanos3} follows from \eqref{flash3} and \eqref{siluro}.
\end{proof}

\begin{remark}\label{acciughe}
Due to \eqref{elementare}, the r.h.s. of \eqref{tanos} in Theorem \ref{teo2_metodo} is  lower bounded  by  the r.h.s. of \eqref{tanos3} in Theorem \ref{teo3_metodo}. In particular, the bounds obtained by Theorem \ref{teo2_metodo} are better than the corresponding  bounds obtained by Theorem \ref{teo3_metodo}.
\end{remark}

\begin{corollary} \label{ristretto_tris}   Suppose that \rosso{$\cK(t)= \bigl(\cK_{ij}(t)\bigr)$ is a  time--periodic} current with $\div \cK=0$ and such that \rosso{$ \overline{\la\la \a,\cK  \ra\ra} \not=0$}.
Then it holds
 \begin{equation}\label{mazinga}
I_{ \a,\g }(y) \leq
 \rosso{
 \frac{1}{4}
 \frac{ \s^*
 }{
\overline{ \la\la  \a , \cK\ra \ra} ^2
}
}
 (y- y_{\g, \a}) ^2 
\end{equation}
and
\begin{equation}\label{lavori_star}
 D_{\a,\g}\geq \frac{ \rosso{\overline{ \la \la \a , \cK\ra  \ra} ^2}}{
 {\s}^* }\,,
\end{equation}
   where
   \begin{equation}\label{sigma_star}
  \s^*:= \sum_{(i,j)\in E: i<j}  \overline{\left(\frac{ \cK_{ij}^2}{\cJ_{ij}} \ln \frac{\cQ_{ij}}{\cQ_{ji}}\right)} \,.
    \end{equation}
\end{corollary}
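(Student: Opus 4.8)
The plan is to obtain Corollary \ref{ristretto_tris} as an immediate application of Theorem \ref{teo3_metodo}, in exact parallel with the way Corollary \ref{ristretto_bis} is deduced from Theorem \ref{teo2_metodo}. I would choose the pair $(Z,m)$ to be
\[
Z(t):=\frac{y_{\a,\g}}{\overline{\la\la\a,\cK\ra\ra}}\,\cK(t)\,,\qquad m(t):=0\,,
\]
which makes sense precisely because $\overline{\la\la\a,\cK\ra\ra}\neq 0$ by hypothesis (and $y_{\a,\g}\neq 0$ is the standing assumption of the section). Then I would verify the four structural conditions of Theorem \ref{teo3_metodo}: (P1$^*$) holds since $Z$ is a scalar multiple of the time--periodic current $\cK$; (P2$^*$) is trivial for $m\equiv 0$; (P3$^*$) reduces to $\div Z(t)=0$, which is inherited from $\div\cK=0$; and (P4$^*$) holds since $\overline{\la\la\a,Z\ra\ra}+\overline{\la\g,m\ra}=\bigl(y_{\a,\g}/\overline{\la\la\a,\cK\ra\ra}\bigr)\,\overline{\la\la\a,\cK\ra\ra}=y_{\a,\g}$.

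\textbf{Substitution and conclusion.} Having checked (P1$^*$)--(P4$^*$), I would plug $(Z,m)$ into the quadratic bound \eqref{flash3}. Since $m\equiv 0$, the numerator $Z_{ij}-(m_iw_{ij}-m_jw_{ji})$ equals $Z_{ij}=\bigl(y_{\a,\g}/\overline{\la\la\a,\cK\ra\ra}\bigr)\cK_{ij}$, so the sum appearing on the right--hand side of \eqref{flash3} becomes
\[
\frac{y_{\a,\g}^2}{\overline{\la\la\a,\cK\ra\ra}^2}\sum_{(i,j)\in E:\,i<j}\overline{\left(\frac{\cK_{ij}^2}{\cJ_{ij}}\ln\frac{\cQ_{ij}}{\cQ_{ji}}\right)}=\frac{y_{\a,\g}^2}{\overline{\la\la\a,\cK\ra\ra}^2}\,\s^*\,,
\]
with $\s^*$ as defined in \eqref{sigma_star}. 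The factor $y_{\a,\g}^2$ cancels the one in the denominator of \eqref{flash3}, yielding \eqref{mazinga} (with the implicit $o\bigl((y-y_{\a,\g})^2\bigr)$ correction, which is all that is used afterwards), and then \eqref{lavori_star} follows from \eqref{siluro}.

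\textbf{Main obstacle / remarks.} Honestly there is no serious obstacle here: the proof is bookkeeping built on Theorem \ref{teo3_metodo}. The only points deserving a word of care are that the integrand $\cK_{ij}^2\,\cJ_{ij}^{-1}\ln(\cQ_{ij}/\cQ_{ji})$ is finite and non--negative — inherited from Theorem \ref{teo3_metodo}, the ratio $\cJ_{ij}^{-1}\ln(\cQ_{ij}/\cQ_{ji})$ being positive and extending continuously to $1/\cQ_{ij}$ as $\cQ_{ij}\to\cQ_{ji}$ — and that the expression \eqref{sigma_star} for $\s^*$ is consistent with the form \eqref{cocomero} stated in GTUR 5, which matches since the summand is symmetric under $(i,j)\mapsto(j,i)$, so $\sum_{i<j}=\tfrac12\sum_{(i,j)\in E}$, and in that special case $\cK=\overline{\cJ}$ is time--independent. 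As an alternative I would note that \eqref{lavori_star} can also be obtained directly from \eqref{lavori_tilde} in Corollary \ref{ristretto_bis}: the elementary inequality \eqref{elementare} gives $\cJ_{ij}^{-1}\ln(\cQ_{ij}/\cQ_{ji})\geq 2(\cQ_{ij}+\cQ_{ji})^{-1}$, hence $\s^*\geq\tilde\s$, and so the weaker bound $D_{\a,\g}\geq\overline{\la\la\a,\cK\ra\ra}^2/\s^*$ follows from the stronger one $D_{\a,\g}\geq\overline{\la\la\a,\cK\ra\ra}^2/\tilde\s$.
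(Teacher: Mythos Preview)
Your approach is exactly the paper's: apply Theorem \ref{teo3_metodo} with $m\equiv 0$ and $Z=(y_{\a,\g}/\overline{\la\la\a,\cK\ra\ra})\,\cK$, verify (P1$^*$)--(P4$^*$), and read off the bound. Your alternative derivation of \eqref{lavori_star} via $\s^*\geq\tilde\s$ is also noted in the paper.

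There is one genuine, if minor, gap. The statement of \eqref{mazinga} is a \emph{global} quadratic bound: it is written without any $o\bigl((y-y_{\a,\g})^2\bigr)$ correction, unlike the analogous bounds in Corollaries \ref{ristretto} and \ref{ristretto_bis}. You explicitly settle for the local version, saying the $o$-term ``is all that is used afterwards''. That suffices for \eqref{lavori_star}, but not for \eqref{mazinga} as stated. The paper addresses this point: since $m\equiv 0$ one has $\rho=\pi$, so in the first proof of Theorem \ref{teo3_metodo} the approximations \eqref{uffa2} and \eqref{uffa3} become exact identities ($G_{ij}=\cJ_{ij}$ and $\rho_i w_{ij}/\rho_j w_{ji}=\cQ_{ij}/\cQ_{ji}$). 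Then \eqref{silenzioso} together with the exact \eqref{uffa1} yields the inequality \eqref{mazinga} for all $y$, with no error term. You should add this observation to close the gap.
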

\begin{proof}
We apply    Theorem \ref{teo3_metodo} with a slight improvement, by  taking $m:=0$ and
$Z:=\left(y_{\a,\g}/\,\overline{\la\la \a, \cK\ra\ra}\right) \cK$. Theorem \ref{teo3_metodo} would imply the thesis, with the exception that the bound \eqref{mazinga} would be only local. On the other hand, since $m=0$, the error terms $o(1)$ in \eqref{uffa2} and \eqref{uffa3} are simply zero and  the first proof of Theorem \ref{teo3_metodo} gives that the local bound \eqref{flash3} is in this case a global bound.
\end{proof}
 We point out that Corollary \ref{ristretto_tris} was also obtained in \cite{BaCFG}.
 Moreover, we observe that  \eqref{liberazione} follows from Corollary \ref{ristretto_tris} by taking $\cK:=\overline{\cJ}$.  Finally, by Remark \ref{acciughe} we also get that
$ \s^*\geq \tilde{\s}$, where
  the constant $\tilde{\sigma}$ is defined as  in \eqref{sigma_tilde}.

\smallskip

\begin{corollary}\label{gnocchi1}
Suppose that the entries of  $\overline\g$ are not all strictly positive, and not all strictly negative. Fix any time--independent probability measure $p=(p_i)_{i \in V}$ on $V$ with $\la p, \overline{\g}\ra=0$.
 Recall the constant $C^*_{\rm a}(p)$ defined in \eqref{cipia_star}.
 Then we have  the
upper bound
\begin{equation}\label{flash100}
I_{\a,\g}(y) \leq\frac{1}{4} \frac{(y-y_{\a,\g})^2}{y_{\a,\g}^2}
C^*_{\rm a}(p)
+ o\left(  (y-y_{\a,\g})^2\right)\,.
\end{equation}
As a consequence  we have   \eqref{tanos100}.
\end{corollary}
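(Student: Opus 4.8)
The plan is to apply Theorem \ref{teo3_metodo} with the same input that was used for Corollary \ref{teo2_TUR}, namely $Z(t):=\cJ(t)$ and $m(t):=\pi(t)-p$, where $p$ is the prescribed time--independent probability with $\la p,\overline\g\ra=0$. First I would check that $(Z,m)$ satisfies (P1$^*$)--(P4$^*$): property (P1$^*$) is immediate since $\cJ$ is a time--periodic current; (P2$^*$) holds because $\sum_i m_i(t)=\sum_i\pi_i(t)-\sum_i p_i=0$; (P3$^*$) is precisely the continuity equation \eqref{roteante} for the stationary pair (here it is used that $p$ does not depend on time); and (P4$^*$) reduces to $\overline{\la\la\a,\cJ\ra\ra}+\overline{\la\g,\pi\ra}-\la\overline\g,p\ra=y_{\a,\g}$, which follows from \eqref{LLN} together with $\la\overline\g,p\ra=0$, after rewriting $\overline{\la\a,\cQ\ra}=\overline{\la\la\a,\cJ\ra\ra}$ via the antisymmetry of $\a$.

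The next step is the short computation $Z_{ij}(t)-\bigl(m_i(t)w_{ij}(t)-m_j(t)w_{ji}(t)\bigr)=\cJ_{ij}(t)-\bigl(\pi_i(t)w_{ij}(t)-\pi_j(t)w_{ji}(t)\bigr)+\bigl(p_iw_{ij}(t)-p_jw_{ji}(t)\bigr)=p_iw_{ij}(t)-p_jw_{ji}(t)$, using \eqref{sale}. Plugging this into \eqref{flash3} yields a local quadratic bound whose coefficient is $\frac{1}{4}\sum_{(i,j)\in E:\,i<j}\overline{\bigl((p_iw_{ij}-p_jw_{ji})^2/\cJ_{ij}\bigr)\ln(\cQ_{ij}/\cQ_{ji})}$. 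I would then observe that the summand is invariant under the swap $(i,j)\leftrightarrow(j,i)$ — the numerator is symmetric, while $\cJ_{ij}$ and $\ln(\cQ_{ij}/\cQ_{ji})$ each change sign — so the sum over unordered pairs equals one half of the sum over all of $E$, which is exactly $C^*_{\rm a}(p)$ in \eqref{cipia_star}. This gives \eqref{flash100}, and then \eqref{tanos100} follows from \eqref{siluro} exactly as in the earlier corollaries.

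I do not expect a genuine obstacle here: all the analytic content is already packaged in Theorem \ref{teo3_metodo}, and the only point that needs care is the bookkeeping of the factor $1/2$ and of the ordered-versus-unordered summation when identifying the coefficient with $C^*_{\rm a}(p)$. As a sanity check I would also record the alternative route already flagged in the text: combining \eqref{tanos_teo2} with the elementary inequality \eqref{elementare} applied to $x=\cQ_{ij}$, $y=\cQ_{ji}$ gives $C^*_{\rm a}(p)\ge C_{\rm a}(p)$ termwise, so \eqref{tanos100} is in fact implied by \eqref{tanos_teo2} (and this simultaneously reproves the inequality $C^*_{\rm a}(p)\ge C_{\rm a}(p)$ of Proposition \ref{fiordaliso}).
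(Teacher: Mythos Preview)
Your proposal is correct and follows exactly the same approach as the paper: apply Theorem \ref{teo3_metodo} with $Z(t)=\cJ(t)$ and $m(t)=\pi(t)-p$, precisely the choice used in Corollary \ref{teo2_TUR}. You simply spell out the verification of (P1$^*$)--(P4$^*$), the simplification of the numerator, and the ordered/unordered bookkeeping for the factor $1/2$, all of which the paper leaves implicit.
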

The above corollary follows from Theorem \ref{teo3_metodo} by taking $Z(t):= \cJ(t)$ and  $m(t):= \pi(t) - p$, as in Corollary \ref{teo2_TUR}.   \eqref{tanos100} corresponds to \cite[Eq.~(14)]{KSP}.
We point out that, by Remark \ref{acciughe},  we  get that
$C^*_{\rm a}(p) \geq  C_{\rm a} (p) $, where
  the constant $C_{\rm a}(p)$ is defined as  in \eqref{cipia}.

\section{\eqref{gtur_exp} and its extensions}\label{sec_proes}
In this section we generalize the results from \cite{PvdB}  to general protocols that
can be time-asymmetric. Our  GTUR contains the rate $\sigma_{\rm naive}$ that
becomes the average entropy production rate $\s$ for the case of time-symmetric protocols,
as explained in Section  \ref{listone}.

 We denote by   $\Theta_\t$ the set of all  possible paths  of the Markov chain  up to time $\t$  ($\Theta_\t$ is given by the  piecewise-constant paths $\G: [0, \t]\to V$). Note that $\t$ is both the period and the length of the paths.
 We write $\mathcal{R}_\tau: \Theta_\t\to \Theta_\t $ for the time-reflection  around $\t/2$ and we denote by $P$  the probability measure on $\Theta_\t$ given by the law of the random path $ ( X(t))_{0\leq t \leq \tau}$ when the Markov chain  has initial distribution  $\pi(0)$.
  \ppp{Similarly to  \cite{BCFG} we introduce the average  entropy flow from naive reversal defined as the entropy $H[ P\,|\, P\circ \mathcal{R}_\tau] $ of $P$ w.r.t. $ P\circ \mathcal{R}_\t $  (note that $\cR_\t=\cR_\t^{-1}$), i.e.
\begin{equation}\label{rugby}
 H[ P\,|\, P\circ \mathcal{R}_\tau]= \int _{\Theta_\t} P(d\Gamma) \ln \frac{d P}{d(  P\circ R_\t  )}(\Gamma)\,,
\end{equation}
where $P\circ \cR_\t (A):= P(  \cR_\t (A))$.
One gets  (cf.~\cite[Section 4]{BCFG})
\begin{equation}\label{placche}
 H[ P\,|\, P\circ \mathcal{R}_\tau]  =\tau \sigma_{\rm naive}\,,
\end{equation}
where
$ \sigma_{\rm naive} $   is given by \eqref{sigmanaive}.
}

\smallskip
Given a function $F: \Theta_\t\to \mathbb{R}$, we define    the empirical functional $Y^{(n)}_F$  as
\begin{equation}\label{shangai}
Y^{(n)}_F :=\frac{1}{n} \sum _{j=0}^{n-1} F \left( (X_{j\tau+s})_{0\leq s \leq \tau} \right)\,.\end{equation}
We point out that  the empirical functional $Y^{(n)}_{\a,\g}$ given in \eqref{gattino}  can be written as  $Y^{(n)}_{\a,\g}=Y^{(n)}_F $ by defining $F$ as
\begin{equation}\label{chloe}
F(\G):= \frac{1}{\t} \sum _{\substack{ t\in (0, \tau]:\\ \G(t-)\not = \G(t+) }  } \a_{\G(t-), \G(t+) }(t)+ \frac{1}{\t}\int_0 ^{ \tau} \g_{\G(t) }(t) dt \,.
\end{equation}
  We remark that  $Y^{(n)}_{\a,\g}$ is a linear functional  of the empirical flow and density, while the empirical functional  $Y^{(n)}_F$ in \eqref{shangai} is more general.

\begin{theoremA}\label{teo_proes_extended} \emph{[\cite{PvdB} revisited]}
Let  $F: \Theta_\t\to \mathbb{R}$ be antisymmetric, i.e.  $F= -F \circ \mathcal{R_\tau}$.
Then, as $n \to \infty$,  $Y^{(n)}_F $ satisfies an LDP with speed $n $. \rrr{Calling $I_F$
the associated
LD rate function, calling
  $y_F$ the asymptotic value of $Y^{(n)}_F $ and assuming  $y_F\not=0$,} it  holds
 \begin{equation}\label{prosit}
  I_F''(y_F) \leq \ppp{\frac{1}{2  y_F^2} ( e^{ \t \s_{\rm naive}}-1)}\,.
 \end{equation}
\ppp{As a consequence},   one has the GTUR
\begin{equation}\label{singapore}
\rrr{
\frac{D_F }{ y_F^2 }\geq  \frac{\ppp{1}}{  e ^{\tau \s_{\rm naive}}-1}
}\,,
\end{equation}
where $D_F$ is the asymptotic diffusion coefficient  given by
\begin{equation}\label{ciliegia}
2 D_F:= \lim _{n \to \infty} n  {\rm Var} \Big(Y_F^{(n)}\Big)\,.
\end{equation}
\end{theoremA}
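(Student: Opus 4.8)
\emph{Setup, LDP, variational formula.} The plan is to adapt the argument of \cite{PvdB}, as sketched in item~(7) of Section~\ref{listone4}. For $j\ge0$ put $\Gamma_j:=\bigl(X(j\tau+s)\bigr)_{0\le s\le\tau}\in\Theta_\tau$. By $\tau$--periodicity of the rates, $(\Gamma_j)_{j\ge0}$ is a time--homogeneous Markov chain on $\Theta_\tau$ whose transition from $\Gamma_j$ depends only on the endpoint $\Gamma_j(\tau)\in V$ (the next period path starts afresh there); call its kernel $\mathsf K$. If $X(0)\sim\pi(0)$, then each $\Gamma_j$ has marginal law $P$, because $\pi(0)$ is invariant for the period--sampled chain $\bigl(X(n\tau)\bigr)_n$ on $V$; hence $P$ is invariant for $(\Gamma_j)_j$. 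Consequently the empirical measure $L_n:=\frac1n\sum_{j=0}^{n-1}\delta_{\Gamma_j}$ satisfies an LDP on $\mc P(\Theta_\tau)$ with speed $n$ and a Donsker--Varadhan rate functional $\mathbb{I}$ (with $\mathbb{I}(P)=0$), which one handles by reducing to the finite state space $V$ through the period--sampled chain and the bridge decomposition of the period paths, as in \cite{BCFG}. Since $Y_F^{(n)}=L_n(F)$ is a continuous linear functional of $L_n$ (the needed integrability of $F$ being automatic for functionals of the form \eqref{chloe}), the contraction principle yields the claimed LDP for $Y^{(n)}_F$ with speed $n$ and rate function $I_F(y)=\inf\{\mathbb{I}(\mu):\mu\in\mc P(\Theta_\tau),\ \int F\,d\mu=y\}$, together with $y_F=\int F\,dP$ and -- by the argument behind \eqref{siluro} -- $2D_F=1/I_F''(y_F)$. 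It therefore suffices to prove the local bound $I_F(y)\le\frac{e^{\tau\sigma_{\rm naive}}-1}{4\,y_F^{2}}(y-y_F)^{2}+o\bigl((y-y_F)^{2}\bigr)$, which is equivalent to \eqref{prosit}; then \eqref{singapore} follows via $2D_F=1/I_F''(y_F)$.

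\emph{The naive reversal.} Let $\varrho:=\frac{dP}{d(P\circ\mc R_\tau)}$, finite together with $\varrho^{-1}$ on the relevant support (note that $\sigma_{\rm naive}<\infty$ forces $(i,j)\in E\Leftrightarrow(j,i)\in E$). Then $\varrho\circ\mc R_\tau=1/\varrho$, so $\ln\varrho$ is $\mc R_\tau$--antisymmetric just like $F$; moreover $\int\ln\varrho\,dP=H[P\,|\,P\circ\mc R_\tau]=\tau\sigma_{\rm naive}=:\Sigma$ by \eqref{rugby}--\eqref{placche}, and $\int\varrho^{-1}dP=1$. Antisymmetry of $F$ gives $\int F\,d(P\circ\mc R_\tau)=\int (F\circ\mc R_\tau)\,dP=-y_F$, and any $\mc R_\tau$--invariant probability measure integrates $F$ to zero.

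\emph{Trial inputs.} Into $I_F(y)\le\mathbb{I}(\mu)$ (valid whenever $\int F\,d\mu=y$) I would insert a one--parameter family $(\mu_\theta)$ interpolating between $P$ and its naive reversal $P\circ\mc R_\tau$ -- a natural candidate being the exponential family $\mu_\theta:=\varrho^{\theta}P/\!\int\varrho^{\theta}dP$, $\theta\in[-1,0]$, for which $\mu_0=P$, $\mu_{-1}=P\circ\mc R_\tau$ (using $\int\varrho^{-1}dP=1$) and $\mu_{-1/2}$ is $\mc R_\tau$--invariant, so that $\theta\mapsto\int F\,d\mu_\theta$ is a smooth non--constant curve with values $y_F$, $0$, $-y_F$ at $\theta=0,-\tfrac12,-1$ (using $y_F\ne0$). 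The substance is the estimate of $\mathbb{I}(\mu_\theta)$: using the Donsker--Varadhan characterization of $\mathbb{I}$ -- e.g. the bound $\mathbb{I}(\mu)\le\int\mu(d\Gamma)\,H\bigl(\widetilde{\mathsf K}(\Gamma,\cdot)\,\|\,\mathsf K(\Gamma,\cdot)\bigr)$ over kernels $\widetilde{\mathsf K}$ leaving $\mu$ invariant and preserving the endpoint--matching of consecutive periods -- one obtains $\mathbb{I}(\mu_\theta)=O(\theta^{2})$, and the computation of the leading coefficient (the step that follows \cite{PvdB} closely, and where the precise constant surfaces) gives, after reparametrizing $\theta=\theta(y)$ so that $\int F\,d\mu_{\theta(y)}=y$ for $y$ near $y_F$ and Taylor--expanding to second order, exactly $I_F(y)\le\frac{e^{\Sigma}-1}{4\,y_F^{2}}(y-y_F)^{2}+o\bigl((y-y_F)^{2}\bigr)$. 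This is \eqref{prosit}, whence \eqref{singapore}.

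\emph{Main obstacle.} The quantitative heart is the last step: selecting the trial family together with the accompanying invariant kernel so that (i) the endpoint--matching constraint between consecutive periods is respected and (ii) the relative--entropy cost is explicit enough that its second--order coefficient is exactly $e^{\tau\sigma_{\rm naive}}-1$ -- and not, say, the $\chi^{2}$--divergence between $P\circ\mc R_\tau$ and $P$, which a crude mixture input would produce and which is in general strictly larger. This is precisely the point at which ``the derivation follows very closely the one in \cite{PvdB}''. The remaining, more routine, issue is the rigour of the LDP for $L_n$ on the infinite--dimensional path space $\Theta_\tau$, which is resolved by the reduction to the finite state space $V$ as in \cite{BCFG}.
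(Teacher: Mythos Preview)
Your framework is right---LDP for the empirical measure on $\Theta_\tau$, contraction to $Y^{(n)}_F$, then plug a trial family into the variational formula---and this is exactly what the paper does. The gap is in your choice of trial family, which is the entire content of the argument.

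The exponential tilt $\mu_\theta\propto\varrho^{\theta}P$ that you propose does \emph{not} produce the constant $e^{\tau\sigma_{\rm naive}}-1$. A Taylor expansion of $H(\mu_\theta\,|\,P)$ at $\theta=0$ gives $H(\mu_\theta|P)\sim\tfrac{\theta^2}{2}\mathrm{Var}_P(Z)$ with $Z=\ln\varrho$, while $\int F\,d\mu_\theta-y_F\sim\theta\,\mathrm{Cov}_P(F,Z)$. Eliminating $\theta$ yields a bound $I_F''(y_F)\le\mathrm{Var}_P(Z)/\mathrm{Cov}_P(F,Z)^2$, which depends on $F$ through the covariance and is not uniform in $F$---so it cannot give a GTUR of the form \eqref{singapore}. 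You seem aware that something subtle is needed (``Main obstacle''), but your actual proposal does not contain the needed idea.

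The paper (following \cite{PvdB}) instead takes the \emph{linear} perturbation
\[
\frac{dQ^y}{dP}=1+\frac{y-y_F}{y_F}\,(1-G)\,,\qquad G=\frac{(1+e^{Z})^{-1}}{E\big[(1+e^{Z})^{-1}\big]}\,,\quad e^{-Z}=\frac{d(P\circ\mc R_\tau)}{dP}\,.
\]
This particular $G$ is chosen so that $G/(G\circ\mc R_\tau)=e^{-Z}$, which forces $E[FG]=0$ for every antisymmetric $F$ and simultaneously handles the endpoint--matching constraint; it also yields the algebraic identity $E[G^2]=E[G^2e^Z]$, whence $E[G^2]=1/(2\cN)$ with $\cN=E[(1+e^Z)^{-1}]$. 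Expanding $H(Q^y|P)$ then gives $I_F''(y_F)\le(E[G^2]-1)/y_F^2=(1/(2\cN)-1)/y_F^2$, and the proof is completed by the nontrivial inequality $1/\cN\le 1+e^{E[Z]}=1+e^{\tau\sigma_{\rm naive}}$ (proved in \cite[App.~A]{PvdB} and reproduced in the paper's Appendix~\ref{doraiaki1}). None of these three ingredients---the form of $G$, the identity $E[G^2]=1/(2\cN)$, and the bound on $1/\cN$---appears in your proposal.
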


\ppp{We note that $y_F=E[F]$ and $2 D_F={\rm Var} (F)$, where the expectation and the  variance are  computed w.r.t. $P$.}

We stress that  Theorem \ref{teo_proes_extended} holds for any protocol, but it is restricted to antisymmetric functionals $F$ as in \cite{PvdB}.
 In Appendix \ref{doraiaki1} we give for completeness the  derivation of Theorem \ref{teo_proes_extended}.  \blu{This proof follows the main steps of the one in \cite{PvdB}, while some mathematical structures are investigated more carefully.  }

%

 In order to apply Theorem \ref{teo_proes_extended} to the functional  $Y^{(n)}_{\a,\g}=Y^{(n)}_F $, with $F$ defined in \eqref{chloe},
we need that $F$ is antisymmetric and this holds whenever
\begin{equation}\label{harry_danger}
\begin{cases}
\a_{i,j}(t)= -\a_{j,i}(\t-t) \,, \\
\rosso{\g_i (t) = - \g_i(\t-t)}\,,
\end{cases}
\end{equation}
for all $i,j\in V$ and all $t\in [0,\t]$. If the weights are time independent, then \eqref{harry_danger} reduces to the fact that $\a$ is antisymmetric (i.e. $\a_{i,j}= \a_{j,i}$) and $\g\equiv 0$.
\ppp{Let us finally explain how to get \eqref{gtur_exp}. By \eqref{pizza78} and \eqref{ciliegia} we have
\begin{equation}
2 D_{\a,\g}  = \lim _{n \to \infty} n \tau \text{Var}\Big(Y^{(n)} _{ \a,\g}\Big)=
 \lim _{n \to \infty} n\t  {\rm Var} \Big(Y_F^{(n)}\Big)=
2 \t D_F\,,
\end{equation}
while
\begin{equation}
y_{\a,\g}= y_F\,.
\end{equation}
As a consequence,  we get that $D_{\a,\g}/y_{\a,\g}^2= \t D_F/y_F^2$. As a byproduct with \eqref{singapore}, we get the desired  \eqref{gtur_exp}.
}


\appendix

\section{Proof of Propositions \ref{girasole}, \ref{fiordaliso}, \ref{violetta} and Remark \ref{ventoso}}\label{marvel}
\subsection{Proof of Proposition \ref{girasole}}
The universal rate in \eqref{tanos1} can be written as $C(p)=\sum_i p_i^2 X_i$, where $X_i=1/A_i$,
$A_i$  is defined in \eqref{aiutino}, and $A_i>0$.
By \eqref{tanos1} we have that
\begin{equation}\label{olivia}
\frac{ D_{\a,\g}}{y_{\a,\g}  ^2} \geq \frac{1}{C_\star} \,,
\end{equation}
where
 $C_\star$ is the infimum of $C(p)$ as  $p=(p_i)_{i\in V}$ varies  among the probability measures on $V$ with $\la \overline\gamma,p\ra=0$.
 Below we show that  the convex  function $p\mapsto C(p)$, defined on the set of  probability measures  with $\la \overline\gamma,p\ra=0$,
 has exactly one extremal point, hence this extremal point must be  the minimum point.

 By the Lagrange's multipliers method, we look to the  extremal points of the function
 \[
f(p)=\sum_{i}p_{i}^{2}X_{i}-a\bigl(\sum_{i}p_{i}-1\bigr)-b\bigl(\sum_{i}p_{i}\overline{\gamma}_{i}\bigr)\,,
\]
 $a,b$ being the multipliers.  The extremal point satisfies  $2p_{i}^{\star}X_{i}-a-b\overline{\gamma}_{i}=0$ for all $ i \in V$, i.e.
\[
p_{i}^{\star}= \frac{a+b\overline{\gamma}_{i}}{2X_{i}} =\frac{a A_i + b A_i \overline\g_i }{2} \qquad \forall i \in V \,.
\]
The constants $a,b$ are fixed by imposing that $\sum_i p^\star_i=1$ and $\la \overline \g, p^\star\ra=0$.
 This is equivalent to the system
 \[
\begin{cases}
aA+bB=2\\
aB+bC=0
\end{cases}
\]
with $A:= \sum _i A_i$, $B:=\sum _i A_i \overline\g_i $ and $C:= \sum _i A_i  \overline{\g}_i^2$.

\smallskip

We point out that by Cauchy--Schwarz inequality  we have
\[  B^2 =\Big(\sum _i A_i \overline\g_i\Big)^2=\Big(\sum_i \sqrt{A_i }(\sqrt{A_i}\overline{\g}_i)  \Big)^2  \leq \Big(\sum_i  A_i\Big)\Big (\sum_i   A_i {\overline\g}_i^2    \Big)=AC\,.
\]
Moreover, the above bound becomes an identity if and only if the vectors $(A_i)$ and $(\sqrt{A_i}\overline{\g}_i)$ are proportional. This condition  is fulfilled in the case given by Item (i) in Proposition \ref{girasole}  since $\overline \g=0$, but not in the case given by Item (ii) in Proposition \ref{girasole}, since $A_i>0$ for all $i$ while $\overline \g\not =0$ has neither all  entries negative nor all  entries positive. Hence, for Item (ii) we have $AC\not =B^2$.

\smallskip

 If $AC\neq B^{2}$, then
the solution of the system is given by $ a=\frac{2C}{AC-B^{2}}$ and $b=\frac{-2B}{AC-B^{2}}$, thus leading to
\begin{align*}
C_\star=C(p^\star)=\sum_{i}\left(p_{i}^{\star}\right)^{2}X_{i} & =\sum_{i}p_{i}^{\star}\left(p_{i}^{\star}X_{i}\right)=\sum_{i}p_{i}^{\star}\left(\frac{a+b\overline{\gamma}_{i}}{2}\right)\\
 & =\frac{a}{2}+\frac{b}{2}\left(\sum_{i}p_{i}^{\star}\overline{\gamma}_{i}\right)=\frac{a}{2}+\frac{b}{2}\sum_{i}\overline{\gamma}_{i}\left(\frac{a A_i +bA_i \overline{\gamma}_{i}}{2} \right)\\
 & =\frac{a}{2}+\frac{ab}{4}B+\frac{b^{2}}{4}C=\frac{C}{AC-B^{2}}.
\end{align*}
This concludes the proof of Item (ii)  in Proposition \ref{girasole} by \eqref{olivia} and by the above observation that  $AC-B^{2}>0$.

\smallskip
For the case corresponding to Item (i) of Proposition \ref{girasole}  with $\overline{\gamma}=0$, the multiplier $b$ can be neglected and $aA=2$. Hence $p_i^\star= A_i/ \sum_j A_j$, which leads to the identity
\[
C_\star=C(p^\star)=\sum_{i}\left(p_{i}^{\ast}\right)^{2}X_{i}=[\sum _i A_i ]^{-1}\,,
\]

\subsection{Proof of Remark \ref{ventoso}}
Since $\pi_i$ is time--independent,  the statement in Remark \ref{ventoso} is equivalent to the inequality
\begin{equation}\label{blowing}
2\sum_i A_i =\sum _{i}\pi_i  \Big[
 \sum_{ j:(i,j)\in E} \overline{w}_{ij} \Big]^{-1}
 \geq \Big[
\sum_{(i,j)\in E} (\overline{\cQ}_{ij})^2\overline{(1/\cQ_{ij})} \Big]^{-1}=2/\hat{\s}\,.
\end{equation}
Recall that, given a positive random variable $Y$, it holds $\bbE[1/Y]\geq 1/\bbE[Y]$ by Jensen's inequality. We apply this inequality twice. As a first application we get
$\overline{(1/\cQ_{ij})} \geq 1/\overline{\cQ}_{ij}$. This implies that
\begin{equation}\label{bravo_lori}
\Big[
\sum_{(i,j)\in E} (\overline{\cQ}_{ij})^2\overline{(1/\cQ_{ij})} \Big]^{-1}\leq \Big[
\sum_{(i,j)\in E} \overline{\cQ}_{ij}) \Big]^{-1}\,.
\end{equation}
As a second application we get
\begin{equation}\label{bravo_pierpi}
\sum _{i}\pi_i  \Big[
 \sum_{ j:(i,j)\in E} \overline{w}_{ij} \Big]^{-1}
 \geq \Big[  \sum _{i} \pi_i
  \sum_{ j:(i,j)\in E} \overline{w}_{ij}
  \Big]^{-1}=\Big[
\sum_{(i,j)\in E} \overline{\cQ}_{ij}) \Big]^{-1}\,.
\end{equation}
\eqref{blowing} is then a byproduct of \eqref{bravo_lori} and \eqref{bravo_pierpi}.
\subsection{Proof of Proposition \ref{fiordaliso}} The last statement in Proposition \ref{fiordaliso} is an immediate consequence of the bounds $C(p) \geq C_a(p)$ and $C_a^*(p)\geq C_a(p)$, on which we focus.  The bound  $C_a^*(p) \geq C_a(p)$  follows from Remark \ref{acciughe} as discussed after Corollary \ref{gnocchi1}.
 Let us prove that $C(p) \geq C_a(p)$. Given $x,y\geq0$ and $X,Y>0$,
 we have
\[
\frac{(x-y)^2}{X+Y}\leq \frac{x^2+y^2}{X+Y} =  \frac{x^2}{X+Y} + \frac{y^2}{X+Y} \leq
 \frac{x^2}{X} + \frac{y^2}{Y}\,.
 \] The above bound implies \begin{equation}\label{cubone}
\begin{split}
\sum _{(i,j)\in E} \frac{ \bigl(p_i w_{ij}(t) -p_j w_{ji}(t)\bigr)^2}{\cQ_{ij}(t)  + \cQ_{ji}(t) }&  \leq
\sum _{(i,j)\in E} \Big[ \frac{
 \bigl(p_i w_{ij}(t)\bigr)^2 }{\cQ_{ij}(t)  }+\frac{
 \bigl( p_j w_{ji}(t)\bigr)^2}{ \cQ_{ji}(t) }\Big]\\
 &= 2\sum _{(i,j)\in E}  \frac{
 \bigl(p_i w_{ij}(t)\bigr)^2 }{\cQ_{ij}(t)  }\,.
 \end{split}
\end{equation}
By taking the time average on $[0,\t]$ in \eqref{cubone}, we conclude that $C_a(p)\leq C(p)$.
\subsection{Proof of Proposition \ref{violetta}}
The bound $\s^* \geq \tilde{\s}$ has been derived in \cite{BaCFG} and follows also from Remark  \ref{acciughe}. The bound $\hat{\s}\geq \tilde{\s}$ can be derived as follows:
\begin{equation}
\begin{split}
\tilde{\sigma}& = \sum _{(i,j)\in E}  (\overline{\cQ}_{ij}-\overline{\cQ}_{ji})^2 \overline{
\frac{1}{ \cQ_{ij}+ \cQ_{ji} } }\\
&
\leq   \sum _{(i,j)\in E}  (\overline{\cQ}_{ij}^2+\overline{\cQ}_{ji}^2) \overline{
\frac{1}{ \cQ_{ij}+ \cQ_{ji} } }
\leq
 \sum _{(i,j)\in E}  \overline{\cQ}_{ij}^2 \overline{
\frac{1}{ \cQ_{ij} } }+ \sum _{(i,j)\in E}  \overline{\cQ}_{ji}^2 \overline{
\frac{1}{ \cQ_{ji} } }
=  \hat{\sigma}
\,.\end{split}
\end{equation}


\section{Periodic solutions of the continuity equation}
\label{aoh}
In this short appendix we illustrate two possible approaches to find
good perturbations $(m,R)$ in Section \ref{sec_TUR_Y}. We present just
the general ideas since a complete development would be long and model--dependent.
\subsection{Time dependent Schnakenberg theory}
\blu{We consider a cycle
\[C=(i_1,i_2,\dots ,i_N, i_1)\] of the transition graph  $(V,E)$ and
look for pairs $(R,m)$ satisfying properties (P1), (P2), (P3)  in Section \ref{sec_TUR_Y},
just restricted to this cycle. }
 Since we have a
one dimensional ring this is relatively easy. The continuity equation
reduces to
\begin{equation}
\label{erpupone}
\dot m_{i_k}=R_{i_{k-1}i_k}-R_{i_k i_{k+1}}\,, \qquad k=1,\dots ,N\,,
\end{equation}
where the sums \blu{$k\pm 1$}  are modulo $N$. The general solution is
therefore given by
$$
\left\{
\begin{array}{l}
m_{i_k}(t)=M_k+\hat \alpha_{k-1}(t)-
\hat\alpha_{k}(t)\,,\\
R_{i_ki_{k+1}}(t) =\alpha_k(t)\,,
\end{array}
\right.
$$
where $\alpha_k$, $k=1,2,\dots ,N$, are arbitrary  time--periodic functions  such that $\int_0^{\tau}\blu{\alpha_k(t)}dt$ does not
depend on \blu{$k$}. The functions \blu{$\hat \alpha_k$} are the corresponding
primitives \red{of $\a_k$} and $M_k$ are arbitrary numbers such that $\sum_{k=1}^N
M_k=0$.

\blu{A special degenerate case is obtained as follows. Consider two particles, performing time--periodic deterministic trajectories  on the cycle $C$.
Call $m$ the difference of the
empirical densities
associated to the trajectory of the first and of the second particle, respectively. Similarly call $R$ the difference of the empirical flows. Then the pair $(m,R)$ satisfies properties (P1), (P2), (P3).}

%
%
Once obtained  solutions on elementary cycles, a trial pair $(m,R)$ \blu{satisfying properties (P1), (P2), (P3)} for the transition graph $(V,E)$
can be obtained as a combination of them.   The classic Schnakenberg
theory allows to construct divergence free flows using cycles. This
approach in a sense is a time--dependent version of this theory, giving
solutions of the continuity equation using the cycle decomposition.
\subsection{Perturbations from Markov models}
Another possible approach that can be useful in specific situations is
obtained by the following observation. Consider a Markov chain with
periodic rates $\tilde w$. If we call $\tilde \pi$ its invariant time
periodic distribution and $\tilde{\mathcal
Q}_{ij}=\tilde\pi_i(t)\tilde w_{ij}(t)$ the corresponding asymptotic
flow we have that $\tilde \pi$ and $\tilde{\mathcal Q}$ are related by
the continuity equation. We can therefore fix the pair $(m,R)$ by
$m_i(t)=M_i-\tilde \pi_i(t)$ and $R_{ij}(t)=\tilde{\mathcal Q
}_{ij}(t)$, where the arbitrary  numbers $M_i$ satisfy the condition
\blu{$\sum_{i}M_i=1$}.
This special way of proceeding can be useful in specific cases where
there is a simple and natural periodic chain to be introduced.

\smallskip

For both approaches   we just discussed the constraints given by $(P1),
(P2)$ and $(P3)$ in Section \ref{sec_TUR_Y}.
To really implement the methods it is necessary to satisfy also the
additional constraint $(P4)$ in Section \ref{sec_TUR_Y}. This further
restriction has to be imposed on the perturbations  discussed above.

\section{Derivation of  Theorem \ref{teo_proes_extended}}\label{doraiaki1}
%
\ppp{We use the same notation introduced in Section \ref{sec_proes}.
The GTUR \eqref{singapore} is an immediate consequence of \eqref{prosit}  and the identity  $ 2 D_F= 1/I''_F(y_F)$. We now explain how to derive \eqref{prosit} .
}

 Recall that $\Theta_\t$ is the family of   piecewise constant paths $\G: [0, \t]\to V$. We denote by $\cP(\Theta_\t)$ the set of  probability measures on $\Theta_\t$.  The expectation w.r.t. $P$ will be denoted by $E[\cdot]$.

We first focus on
 the empirical object
\begin{equation}\label{imperfetto}
Q^{(n)}: = \frac{1}{n} \sum_{j =0}^{n-1} \delta_{ (X_{j\tau +s})_{0\leq s\leq \tau}} \in \mathcal{P}(\Theta_\t)\,.
\end{equation}
Note that $Q^{(n)}$ is the empirical measure of the Markov chain $(W_k)_{k\geq 0}$ on $\Theta_\t$, where $W_k:=(X_{k\tau +s})_{0\leq s\leq \tau}$.
 We point out  that in this Appendix $Q^{(n)}$ is defined as in \eqref{imperfetto} in order to make the notation closer to the one in \cite{PvdB}, in particular $Q^{(n)}$ is not the empirical flow as in the rest of the file (cf. Section \ref{rappresento}).

\medskip

The link with the empirical functional \eqref{shangai} is given by the identity
\begin{equation}
Y_n^{(F)}= \int_{\Theta_\t}  Q^{(n)}(d \G) F(\G) \,.
\end{equation}

 To have  $I(Q)<+\infty$ we need that
\begin{equation}\label{condizione0}
Q(\G_0=i)= Q(\G_\tau=i) \qquad \forall i\in V\,.
\end{equation}
This follows from the fact that $Q^{(n)} ( \G_0=i)= Q^{(n)}(\G_\tau=i)+O(1/n)$ (simply, the
final value  of $(X_{j\tau +s})_{0\leq s\leq \tau}$ equals  the initial value of $(X_{(j+1)\tau +s})_{0\leq s\leq \tau}$).

 As discussed in Subsection \ref{gabriel}  $ Q^{(n)} $ fulfills an LDP with speed $n$ and the associated  LD rate functional $I$ satisfies the inequality
\begin{equation}\label{forma2}
I(Q)\leq H(Q|P)
\end{equation}
for any $Q \in \cP(\Theta_\t) $ satisfying \eqref{condizione0}.
 By the contraction principle \cite{dH} we get that $Y_n^{(F)}$ satisfies an LDP with speed $n$, whose LD rate functional $I_F$ is given by
\begin{equation}\label{rappresentazione}
I_F(y)= \inf \Big\{ I(Q)\,:\, Q\in \cP(\Theta_\t)\,,\; \int_{\Theta_\t}  Q(d \G) F(\G)=y\Big\}\,.
\end{equation}
By combining \eqref{forma2} and \eqref{rappresentazione} we have
\begin{equation}\label{lady_bug}
I_F(y)\leq H(Q|P) \qquad  \forall Q \in \cP(\Theta_\t) \text{ fulfilling } \eqref{condizione0} \text{ and } \int_{\Theta_\t}  Q(d \G) F(\G)=y\,.
\end{equation}
\medskip

We apply \eqref{lady_bug} with some special $Q=Q^y$ that we take  absolutely continuous w.r.t. $P$. Since $y_F\not=0$, for some function $G$ we can write $Q^y$ as
\begin{equation}\label{falco}
\frac{d Q^y}{dP } = 1+ \frac{ y -y_F}{y_F}(1-G)\,.
\end{equation}
Due to \eqref{falco}, the properties $Q^y\in \cP(\Theta_\t)$, $\int_{\Theta_\t}  Q^y(d \G) F(\G)=y$  and \eqref{condizione0} are satisfied if and only if
\begin{equation}\label{trinita}
E[G]=1\,, \qquad E[FG]=0  \;\;\text{ and } \;\; E [ G \mathds{1}_{ \G_0=i}] = E [ G \mathds{1}_{ \G_\tau=i}]\;\; \forall i \in V\,.
\end{equation}
We claim that, using that $F\circ \cR_\t= - F$,  the last two conditions on \eqref{trinita} are always satisfied
if
\begin{equation}\label{usignolo}
\frac{G}{G\circ \mathcal{R_\t}}= \frac{ d P\circ \mathcal{R_\t}}{dP}\,,
\end{equation}
where $P\circ \mathcal{R_\t}$ is the probability on $\Theta_\t$ defined as $P\circ \mathcal{R_\t}(A):= P(  \mathcal{R_\t}(A) )$ for $A\subset \Theta_\t$ measurable.

Let us derive the  claim. Assuming \eqref{usignolo}, we can write
\begin{equation}
\begin{split}
 E\bigl[ GF\bigr]& =-  E\bigl[ G(F\circ \cR_\t)\bigr]= - \int _{\Theta_\t} P\circ \mathcal{R_\t}(d\G) (G\circ \cR_\t)(\G) F(\G)\\
 &=-  E\Big[  \frac{ d P\circ \mathcal{R_\t}}{dP} (G\circ \cR_\t) F\Big]=-E[GF]\,,
\end{split}
\end{equation}
thus implying that $E[GF]=0$ (note that \eqref{usignolo} has been used to get the last identity). Similarly  one can derive \ppp{the last condition of \eqref{trinita}} from \eqref{usignolo}.

One possible choice for \eqref{usignolo} satisfying the constrain $E[G]=1$  is
\begin{equation}\label{frontiera}
G = \frac{(1+ e^{Z})^{-1}}{ E\left[(1+ e^{Z})^{-1}\right]}\,,\qquad e^{-Z}= \frac{ d P\circ \mathcal{R_\t}}{dP}\,.
\end{equation}
Note that $E[Z]=\t \s_{\rm naive}$ \ppp{(cf. \eqref{rugby} and \eqref{placche})}.

\begin{remark}
Let us  naively think of the path space as countable. Writing $G_\Gamma$ for $G(\Gamma)$, $\tilde{\Gamma}:= \cR_\t(\Gamma)$ and setting $ C_\Gamma:= P_\Gamma  G_\Gamma$,
 \eqref{usignolo} is equivalent to $C_\Gamma=  C_{\tilde \Gamma}$, while
  \eqref{falco} reads
\begin{equation}
Q^y_{\Gamma} = P_{\Gamma}+ \frac{ y -y_F}{y_F}(P_{\Gamma}- C_{\Gamma})\,.
\end{equation}
The identity $E(G)=1$ would read $\sum_\Gamma C_\Gamma=1$.
The choice $C_\Gamma= \frac{1}{\mathcal{N}} \frac{P_\Gamma P_{\tilde{\Gamma}}}{P_\Gamma+P_{\tilde{\Gamma}}}$  as in \cite{PvdB} ($\cN$  being the normalization constant) would correspond to
\[ G_\Gamma=\frac{C_\Gamma}{P_\Gamma}=
 \frac{1}{\mathcal{N}} \frac{P_{\tilde{\Gamma}}}{
 P_\Gamma+P_{\tilde{\Gamma}}
 } \,,
\]
which is equivalent to
\[
\frac{1}{G_\Gamma} =\text{const} \left( 1 +  \frac{ d P}{dP\circ \mathcal{R}}\right)= \text{const} \left( 1 +e^{ Z}\right)\,.
\]
The above form of $G$  is exactly the choice \eqref{frontiera}.
\end{remark}

From now on $G$ is as in \eqref{frontiera}. For simplicity we  write
\begin{equation}
G =\frac{1}{\cN} \frac{1}{1+e^Z}\,,\qquad \cN =E\left[( 1+e^Z)^{-1}\right]\,.
\end{equation}   By \eqref{lady_bug} we have
\begin{equation}\label{adrian}
I_F(y) \leq H( Q^y|P)= E\Big[ \frac{dQ^y}{dP}\ln\frac{dQ^y}{dP}\Big] \,.
\end{equation}
Using that
$x\ln x=x-1+\frac{1}{2}\left(x-1\right)^{2}+o(\left(x-1\right)^{2})$, we obtain (recall that $E(G)=1$)
\begin{equation}\label{spignolo}
\begin{split}
I_F(y) &  \leq\frac{1}{2}\frac{(y-y_F)^{2}}{y_F^2}
E\bigl[ (1-G)^2 \bigr]+o\bigl( (y-y_F)^{2}\bigr)\\
&=\frac{1}{2}\frac{(y-y_F)^{2}}{y_F^2}
(E\bigl[ G^2\bigr]-1 \bigr)+o\bigl( (y-y_F)^{2}\bigr)\,.
\end{split}
\end{equation}
Now observe that
\begin{equation}\label{pavone1}
E[G^2]= \int P\circ \cR_\t (d\G) (G\circ \cR_\t)^2= E\Big[ \frac{d P \circ \cR_\t}{dP} (G\circ \cR_\t)^2 \Big]\,.
\end{equation}
Using \eqref{usignolo}  we get  that
\begin{equation}\label{pavone2}
 E\Big[ \frac{d P \circ \cR_\t}{dP} (G\circ \cR_\t)^2 \Big]=E\Big[ \frac{d P \circ \cR_\t}{dP}  G^2\cdot \Big (\frac{dP}{dP\circ \cR_\t} \Big)^2 \Big]=E\Big[  G^2 \frac{dP}{dP\circ \cR_\t}  \Big]=E[ G^2 e^Z]\,.
\end{equation}
As a byproduct of \eqref{pavone1} and \eqref{pavone2} we conclude that  $E[G^2]= E[ G^2 e^Z]$ and therefore
\begin{equation}\label{pavone3}
E[G^2]=\frac{1}{2} E[G^2(1+e^Z)]= \frac{1}{2\cN^2} E[(1+e^Z)^{-2}(1+e^Z)]=\frac{1}{2\cN}=\frac{1}{2 E\left[(1+ e^Z)^{-1}\right]}\,.
\end{equation}
Inserting the above identity in \eqref{spignolo} we get
\begin{equation}\label{spignolo100}
\begin{split}
I_F(y) \leq \frac{1}{2}\frac{(y-y_F)^{2}}{y_F^2}
\left(\frac{1}{2 E\left[(1+ e^Z)^{-1}\right]}-1 \right)+o\bigl( (y-y_F)^{2}\bigr)\,.
\end{split}
\end{equation}
We now claim that
\begin{equation}\label{mamma}
\frac{1}{\cN} =\frac{1}{E\left[(1+ e^Z)^{-1}\right]}\leq  1+ e^{E[Z]} = 1+ e^{\tau \s_{\rm naive}}
\end{equation}
(note that the identities in \eqref{mamma} follow from the definitions).
\ppp{By plugging \eqref{mamma} into \eqref{spignolo100} we get that
\begin{equation}\label{spignolocotto}
\begin{split}
I_F(y) \leq \frac{1}{4}\frac{(y-y_F)^{2}}{y_F^2}
\left( e^{\t \s_{\rm naive} }-1  \right)+o\bigl( (y-y_F)^{2}\bigr)\,,
\end{split}
\end{equation}
which implies \eqref{prosit}.
}

\ppp{Inequality \eqref{mamma}} corresponds to \cite[Eq. (17)]{PvdB} and  follows from a very tricky algebra in \cite[App. A]{PvdB} that we adapt to our  terminology.
Since $E[ e^{-Z}]=1$ (by the definition of $Z$), $P'$ defined as $dP'=\frac{1+e^{-Z}}{2} dP$ is a probability measure on $\Theta_\t$. By applying Jensen's inequality w.r.t. this probability $P'$ we have
\[ \ln \cN= \ln E \bigl[(1+e^Z)^{-1}\bigr] =\ln E\Big[\frac{1+e^{-Z}}{2} \frac{2 e^{-Z}}{(1+e^{-Z})^2}
\Big]\geq E\Big[\frac{1+e^{-Z}}{2} \ln \frac{2 e^{-Z}}{(1+e^{-Z})^2}
\Big]
\]
Since $E[e^{-Z} Z]=- E[Z]$ (by the definition of $Z$),  we have
\[
 E[Z] = -E\Big [\frac{ e^{-Z}-1}{2} Z\Big]
\]
Hence, setting $u:=e^{-Z}$, one gets a bound corresponding to \cite[Eq. (A.1)]{PvdB}:
\begin{equation}\label{ho_sonno}
\ln \cN + E[Z]\geq E\Big[ \frac{1+u}{2}\ln \frac{2 u }{(1+u)^2}+ \frac{u-1}{2} \ln u\Big]\,.
\end{equation}
Since $\frac{1+a}{2}\ln \frac{2 a }{(1+a)^2}+ \frac{a-1}{2} \ln a \geq (1-\ln 2)\frac{1+a}{2}-\frac{2 a}{a+1}$ for $a>0$ and since $dP'=\frac{1+u}{2} dP$ is a probability, we can lower bound the r.h.s. of \eqref{ho_sonno} by
\begin{equation}\label{ho_molto_sonno}
E\Big[  (1-\ln 2)\frac{1+u}{2}-\frac{2 u}{u+1}  \Big]= (1-\ln 2) -2E\Big[
 \frac{e^{-Z}}{e^{-Z}+1}\Big]= (1-\ln 2) -2\cN
\end{equation}
Since $1-\ln 2 - 2 a \geq \ln (1-a)$ for all $a\geq 0$, one concludes from \eqref{ho_sonno} and \eqref{ho_molto_sonno} that  $\ln \cN + E[Z]\geq \ln (1-\cN)$. This last estimate trivially implies \eqref{mamma}.

\subsection{Large deviations of $Q^{(n)}$}\label{gabriel}
Given $Q\in \mathcal{P}(\Theta_\t)$ we define, for $k,l\in V$,
\[
\begin{cases}
q_k = Q( \G_0=k)\,\\
q_{k,l}=Q(\G_0=k, \G_\tau=\ell)\,.
\end{cases}\]
 We let $\overline{q}=\bigl( q_{k,l}\bigr)_{(k,l)\in V\times V}$. When we want to  stress the dependence on $Q$, we write $q_k[Q]$, $q_{k,l}[Q]$, $\overline{q}[Q]$.  Recall that  $P$ is the law on $\Theta_\t$ of the random trajectory $\bigl( X_s\bigr)_{0\leq s\leq \t} $ when  $X_0$ has initial distribution $\pi_0$. We then set $p_{k,l}:= q_{k,l}[P]$ and $p_k:= q_k [P]$.

We consider the pair empirical measure
\begin{equation}
\overline{q}^{(n)}:=
\frac{1}{n} \sum_{j =0}^{n-1} \delta_{ (X_{j\tau }, X_{(j+1) \tau} )}\,,
\end{equation}
and observe that $\overline{q}^{(n)} :=\overline{q}[Q^{(n)}]$. By  \cite[Thm. IV.3]{dH}, $\overline{q}^{(n)}$ satisfies a LD principle with speed $n$ and rate functional $I_2$ defined as follows.   Let
 $\tilde{\mathfrak{M}}_1(V\times V)$ be given by the families
 \[ \overline c=(c_{k l})_{(k,l)\in V\times V}\]
  with
  \[ c_{kl}\geq 0\,, \qquad \sum _k \sum _l c_{k,l}=1\,, \qquad \sum _k c_{kl}= \sum _k c_{lk}\,.
  \]
If  $\overline q\in \tilde{\mathfrak{M}}_1(V\times V)$,  then
\begin{equation}\label{cammino} I_2\left(\overline q\right) :=
\sum_{k,l} q_{k,l}
\ln \frac{ q_{kl}}{  q_k P(X_\tau=l\,|\, X_0=k)}=
 \sum_{k,l} q_{k,l}
\ln \frac{ q_{kl}}{ p_{kl}}-\sum _{q_k}q_k \ln \frac{q_k}{p_k}
\,,
\end{equation}
otherwise  $I_2\left(\overline q\right) := +\infty$.

\begin{proposition}
$Q^{(n)}$ satisfies a LDP with  speed $n$ and  rate function
\begin{equation}\label{forma1}
 I(Q)= I_2\left(\overline q\right) + \sum_{k,l} q_{k,l} H[ Q_{kl}|P_{kl}]  \qquad  \forall Q \in \cP( \Theta_\t)\,,
 \end{equation}
 where
 \begin{itemize}
 \item $\overline q = \overline q[Q]$, $q_{k,l}= q_{k,l}[Q]$;
 \item
  $I_2$ is the pair empirical measure LD functional for the discrete time homogeneous Markov chain $(X_{n \tau})_{n \geq 0}$, which has invariant distribution $\pi_0$;
  \item
 $Q_{k,l}:= Q(\cdot| X_0=k\,,X_{\tau}=l)$;
 \item
 $P_{k,l}:= P(\cdot| X_0=k\,,X_{\tau}=l)$;
 \item $H[ Q_{kl}|P_{kl}] $ is the relative entropy of the probability  $Q_{kl}$ w.r.t. the probability $P_{kl}$.
 \end{itemize}
\end{proposition}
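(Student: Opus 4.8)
The plan is to read the empirical object \eqref{imperfetto} as the empirical measure of the Markov chain $(W_k)_{k\geq 0}$ on $\Theta_\t$, $W_k:=(X_{k\t+s})_{0\leq s\leq \t}$, and to exploit that its transition kernel depends on $W_k$ only through the \emph{finite} datum $W_k(\t)=X_{(k+1)\t}\in V$: by the Markov property of $X$ together with the $\t$--periodicity of the rates, given $W_k$ the segment $W_{k+1}$ has the law of $(X_s)_{0\leq s\leq\t}$ under $P(\,\cdot\,|\,X_0=W_k(\t))$, independently of $W_0,\dots,W_{k-1}$. Consequently the endpoint chain $Y_k:=X_{k\t}$ is the time--homogeneous finite Markov chain with transition probabilities $P(X_\t=l\,|\,X_0=k)>0$ (strict positivity of these probabilities follows from strong connectivity of $(V,E)$ and positivity of the rates) and invariant law $\pi_0$, whose pair empirical measure is exactly $\overline q^{(n)}=\overline q[Q^{(n)}]$; by \cite[Thm.~IV.3]{dH} the latter satisfies the LDP with speed $n$ and rate $I_2$ of \eqref{cammino}, and the membership $\overline q\in\tilde{\mathfrak M}_1(V\times V)$ (in particular condition \eqref{condizione0}) is automatically enforced. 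I would then fibre $\Theta_\t=\bigsqcup_{(k,l)}\Theta^{k,l}_\t$ over endpoint pairs, $\Theta^{k,l}_\t:=\{\G:\G(0)=k,\ \G(\t)=l\}$, so that every $Q$ with finite rate factorises as $Q=\sum_{k,l}q_{k,l}[Q]\,Q_{k,l}$ with $Q_{k,l}=Q(\,\cdot\,|\,X_0=k,X_\t=l)$, and likewise $P=\sum_{k,l}p_{k,l}P_{k,l}$.

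The heart of the matter is the conditional independence structure: given the whole endpoint path $(Y_0,\dots,Y_n)$, the segments $W_0,\dots,W_{n-1}$ are independent, and those sharing a common type $(k,l)$ are i.i.d.\ with law $P_{k,l}$, there being $N^{(n)}_{k,l}$ of them with $N^{(n)}_{k,l}/n$ close to $q_{k,l}$. Letting $\mu^{(n)}_{k,l}$ denote the (normalised) empirical measure of the type--$(k,l)$ segments, I would argue that the pair $\bigl(\overline q^{(n)},\{\mu^{(n)}_{k,l}\}\bigr)$ satisfies a joint LDP with speed $n$ and good rate $I_2(\overline c)+\sum_{k,l}c_{k,l}\,H[\mu_{k,l}\,|\,P_{k,l}]$: the first coordinate is governed by $I_2$ as above, while conditionally on $\overline q^{(n)}\approx\overline c$ (which fixes $N^{(n)}_{k,l}/n\to c_{k,l}$) the coordinates $\mu^{(n)}_{k,l}$ are, by Sanov's theorem applied group by group, conditionally independent with rates $c_{k,l}H[\,\cdot\,|\,P_{k,l}]$, so the two rates add. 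Since $Q^{(n)}=\sum_{k,l}\bigl(N^{(n)}_{k,l}/n\bigr)\mu^{(n)}_{k,l}$ is a continuous image of this pair, the contraction principle gives
\[
I(Q)=\inf\Bigl\{\,I_2(\overline c)+\sum_{k,l}c_{k,l}\,H[\mu_{k,l}\,|\,P_{k,l}]\ :\ \sum_{k,l}c_{k,l}\mu_{k,l}=Q\ \Bigr\},
\]
and the constraint forces $c_{k,l}=q_{k,l}[Q]$ and $\mu_{k,l}=Q_{k,l}$ wherever $q_{k,l}[Q]>0$ (empty types contribute nothing), yielding precisely
\[
I(Q)=I_2\bigl(\overline q[Q]\bigr)+\sum_{k,l}q_{k,l}[Q]\,H[Q_{k,l}\,|\,P_{k,l}],
\]
i.e.\ \eqref{forma1}. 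As a check, $Q=P$ gives $I_2(\overline q[P])=0$ and $H[P_{k,l}\,|\,P_{k,l}]=0$, so $I(P)=0$, consistently with uniqueness of the invariant law.

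The step I expect to be the main obstacle is making the "joint LDP'' rigorous, because the group sizes $N^{(n)}_{k,l}$ are themselves large--deviating — this is exactly the fluctuation charged by $I_2$ — and because the conditional Sanov LDP must hold \emph{uniformly} as $\overline c$ ranges over compact subsets of the relative interior of $\tilde{\mathfrak M}_1(V\times V)$ in order for the rates to combine; the boundary cases $c_{k,l}=0$ (empty groups, and the convention $0\cdot(+\infty)=0$) have to be handled separately, e.g.\ by approximating $Q$ from inside the simplex and invoking lower semicontinuity, or by observing that on $\{q_{k,l}[Q]=0\}$ the type $(k,l)$ is asymptotically absent. A cleaner alternative that sidesteps the uniform conditional LDP is to prove the two bounds directly over the fibration: for the lower bound, given $Q$ with $I(Q)<\infty$ construct the tilted law under which the endpoint chain realises the pair frequencies $\overline q[Q]$ (cost $I_2(\overline q[Q])$, via the finite--state chain LDP) and, conditionally on the endpoints, the type--$(k,l)$ segments are resampled from $Q_{k,l}$ instead of $P_{k,l}$ (cost $H[Q_{k,l}\,|\,P_{k,l}]$ per segment), so that $Q^{(n)}\to Q$ a.s.\ under the tilt with $n^{-1}\log(dP^{\otimes}/d\tilde P^{\otimes})\to -I(Q)$; for the upper bound, combine the exponential Chebyshev/covering estimates of Sanov with those of the finite Markov pair--measure LDP, cell by cell.

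Finally, one should verify that $I$ is a good rate function, i.e.\ lower semicontinuous with compact sublevel sets. This is convenient to see from the identity
\[
I(Q)=H[\,Q\,|\,P\,]-H\bigl[\,q[Q]\,\big|\,\pi_0\bigr],
\]
which follows from the chain rule for relative entropy applied to the factorisations $Q=\sum q_{k,l}[Q]Q_{k,l}$ and $P=\sum p_{k,l}P_{k,l}$ together with $p_{k}=\pi_k$ and $p_{k,l}=\pi_k\,P(X_\t=l\,|\,X_0=k)$; the right--hand side is jointly lower semicontinuous since $Q\mapsto H[Q\,|\,P]$ is, $Q\mapsto q[Q]$ is weak--$*$ continuous, and the subtracted term is continuous and bounded on sublevel sets of $H[Q\,|\,P]$, which in turn are compact. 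This completes the argument modulo the technical points flagged above; the details of the gluing and of the boundary cases are the only genuinely laborious part.
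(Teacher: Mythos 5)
Your proposal takes essentially the same approach as the paper's own proof (which the authors explicitly present only as a sketch): decompose the trajectory into period-length segments, obtain the LDP for the pair empirical measure of the endpoint chain $(X_{k\tau})_k$ from \cite[Thm.~IV.3]{dH}, and combine it with a conditional Sanov/Cram\'er argument for the segments grouped by endpoint type, with the two rates adding. The technical points you flag as the main obstacles are exactly those the paper's informal argument sidesteps, and the alternative identity $I(Q)=H[Q\,|\,P]-H\bigl[q[Q]\,\big|\,\pi_0\bigr]$ you note for establishing goodness is precisely the paper's \eqref{forma20}.
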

\begin{proof}
We only sketch the main idea which can be easily formalized. We will make  some abuse of notation  for the sake of intuition.  Recall that   $\overline q = \overline q[Q]$. Given $Q \in \cP(\Theta_\t)$ we have
\begin{equation}\label{salve1}
\begin{split}
 P\bigl ( \, Q^{(n)} = Q\, \bigr) = P\bigl(\, Q^{(n)} = Q, \; \overline q^{(n)} = \overline q \,\bigr)
 =
 P\bigl( \, Q^{(n)} = Q, \,|\,  \overline q^{(n)} =\overline q\bigr)P\bigl( \overline q^{(n)} = \overline q \bigr)
\end{split}
\end{equation}
By  \cite[Thm. IV.3]{dH} we have
\begin{equation}\label{salve2}
P\bigl( \overline q^{(n)} = \overline q \bigr)= e^{-n I_2(\overline q)}\,.
\end{equation} Consider the time interval $[0,n \t]$ as the union $\cup _{j=0}^{n-1} A_j$, where $A_j= [ j\tau, (j+1) \tau]$.
If we know that $\overline q^{(n)} = \overline q$, then for each pair $(k,l)$ we know that there are $q_{kl} n$ intervals $A_j$'s where the trajectory starts at $k$ and ends at $l$ (we call such a random set of intervals $\cA_{kl}$). If  we further condition on these intervals, then the random trajectories on $A_j$, with $A_j \in \cA_{kl}$, behave as $n q_{kl}$ i.i.d. random variables with value in $\Theta_\t$ and with distribution $P_{k,l}$.
 Moreover,  the random objects involved are independent when varying $(k,l)$. By applying Cram\'er's Theorem and the independence we conclude that
\begin{equation}\label{salve3}
 P\bigl( \, Q^{(n)} = Q, \,|\,  \overline q^{(n)} =\overline q\bigr)=\prod_{(k,l)} e^{- q_{kl} n H[Q_{k,l}|P_{k,l}]}\,.
\end{equation}
The thesis then follows as a byproduct of \eqref{salve1}, \eqref{salve2} and \eqref{salve3}.
\end{proof}
\bigskip
Note that, since in \eqref{forma1}, $\overline q=\overline q[Q]$ and $Q\in \mathcal{P}(\Theta_\t)$,  we get
that $\overline q \in \tilde{\mathfrak{M}}_1(V\times V)$ if and only if $\sum _k q_{kl} = \sum _k q_{lk} $ for each $l\in V$, which is equivalent to \eqref{condizione0}.
As a consequence, if $Q\in \mathcal{P}(\Theta_\t)$  fulfills \eqref{condizione0}
then
\begin{equation}\label{harry}
 I_2\left(\overline q\right) = \sum_{k,l} q_{k,l}
\ln \frac{ q_{kl}}{ p_{kl}}-\sum _{q_k}q_k \ln \frac{q_k}{p_k}\text{ where } \overline q=\overline q[Q]\,.
\end{equation}
By combining \eqref{forma1} and \eqref{harry} one easily gets  that
the LD rate functional $I(Q)$, for  $Q\in \mathcal{P}(\Theta_\t) $ fulfilling \eqref{condizione0},
can  be written as
\begin{equation}\label{forma20}
I(Q)=H(Q|P)-\sum _k q_k \ln\frac{q_k}{p_k}\,.
\end{equation}
We derive \eqref{forma20} for completeness. Given $k,l\in V$ we set $\Theta_\t(k,l):= \{ \G \in \Theta_\t\,:\,\G_0=k,\;\G_\t=l\}$.   Then, when $Q_{k,l}\ll P_{k,l}$ (the case  $Q_{k,l}\not \ll P_{k,l}$ can be treated easily)
\begin{equation}
\begin{split} H[Q_{k,l}|P_{k,l}]& =\int _{\Theta_\t(k,l)}  Q_{k,l} (d\G)\ln  \frac{ d Q_{k,l}}{dP_{k,l} } (\G)\\
&=\frac{1}{q_{k,l}} \int _{\Theta_\t(k,l)}  Q (d\G)\ln  \frac{ d Q }{dP } (\G)  - \ln \frac{q_{k,l}}{p_{k,l}}\,.
\end{split}
\end{equation}
By combining the above equation with \eqref{cammino} and \eqref{forma1} we get the \eqref{forma20}.

As a consequence we have
\begin{equation}
I(Q) \leq H(Q|P)
\end{equation}
 for any $Q\in \mathcal{P}(\Theta_\t) $  fulfilling \eqref{condizione0}.

\bigskip

\noindent
{\bf Acknowledgements}. A.F. and D.G. thank the Laboratoire J.A  Dieudonn\'e in Nice   for the kind hospitality and the University of Nice for the financial support.



\begin{thebibliography}{2}

 \bibitem{BaCFG} A.C.~Barato, R.~Chetrite, A.~Faggionato, D.~Gabrielli; \emph{Bounds on current fluctuations in periodically driven systems}.  New J. Phys. {\bf 20} 103023   (2018).


 \bibitem{BS0} A.C.  Barato, U. Seifert; \emph{Universal Bound on the Fano Factor in Enzyme Kinetics}.
  J. Phys. Chem. B {\bf 119}  6555--6561 (2015).  


 \bibitem{BS1} A.C.~Barato, U.~Seifert; \emph{Thermodynamic uncertainty relation for biomolecular processes}.  Phys. Rev. Lett. {\bf 114}, 158101 (2015).


 \bibitem{BS2} A.C.~Barato, U.~Seifert; \emph{Cost and precision of Brownian clocks}.
 Phys. Rev. X {\bf 6} 041053 (2016).




 \bibitem{BCFG} L.~Bertini, R.~Chetrite, A.~Faggionato, D.~Gabrielli; \emph{Level 2.5 large deviations for continuous--time Markov chains with time periodic rates}.
  Ann. Henri Poincar\'e, {\bf 19},  3197Ð3238 (2018).


\bibitem{BFG1} L.~Bertini, A.~Faggionato, D.~Gabrielli; \emph{Large deviations of the empirical flow for continuous time Markov chains}.
Ann. Inst. H. Poincar\'e Probab. Statist. {\bf 51},  867-900 (2015).


\bibitem{BFG2} L.~Bertini,  A.~Faggionato, D.~Gabrielli; \emph{Flows, currents, and cycles for Markov chains: Large deviation asymptotics}. Stoch. Proc. Appl.  {\bf 125}, 2786-2819 (2015).




 \bibitem{bisk17} G. Bisker, M. Polettini,  T.R. Gingrich, J.M. Horowitz;
\emph{Hierarchical bounds on entropy production inferred from partial information}.
J. Stat. Mech.: Theor. Exp. (2017) 093210



 \bibitem{bran18} K. Brandner, T. Hanazato, K. Saito;
\emph{Thermodynamic Bounds on Precision in Ballistic Multiterminal Transport}.
Phys. Rev. Lett. {\bf 120}, 090601 (2018) 


\bibitem{chiu18} D. Chiuchi\`u, S. Pigolotti;
\emph{Mapping of uncertainty relations between continuous and discrete time}.
Phys. Rev. E {\bf 97}, 032109 (2018) 


 \bibitem{dech18} A. Dechant;
\emph{Multidimensional thermodynamic uncertainty relations}.
arXiv:1809.10414 (2018) 

\bibitem{dech18a} A. Dechant, S. Sasa;
\emph{Current fluctuations and transport efficiency for general Langevin systems}.
J. Stat. Mech. (2018) 063209 

\bibitem{dech18b} A. Dechant, S. Sasa;
\emph{Entropic bounds on currents in Langevin systems}.
Phys. Rev. E {\bf 97}, 062101 (2018) 



\bibitem{DZ}
A.~Dembo, O.~Zeitouni;
\emph{Large deviations techniques and applications. Second edition.}
Springer-Verlag, New York, 1998.

\bibitem{DeS} J.-D.~Deuschel, D.W.~Stroock; \emph{Large deviations.}
Academic Press, San Diego, 1989.

\bibitem{TB} I. Di Terlizzi,  M. Baiesi;
\emph{Kinetic uncertainty relation}.
arXiv:1809.06410 (2018) 



\bibitem{ELMN}
S.~Erbas-Cakmak, D.A.~Leigh, C.T.~McTernan, A.L. Nussbaumer; \emph{Artificial molecular machines}. Chem. Rev. {\bf 115},  10081--10206 (2015) 



\bibitem{FGR} A.~Faggionato, D.~Gabrielli,  Ribezzi Crivellari M.; \emph{Non-equilibrium thermodynamics of piecewise deterministic Markov processes}.   J. Stat. Phys. {\bf 137},  259-304 (2009).





\bibitem{G}
J.P.~Garrahan; \emph{Simple bounds on fluctuations and uncertainty relations for first-passage times of counting observables}. Phys. Rev. E {\bf 95}, 032134 (2017).

\bibitem{ging17} T.R. Gingrich, J. M. Horowitz;
\emph{Fundamental Bounds on First Passage Time Fluctuations for Currents}.
Phys. Rev. Lett. {\bf 119}, 170601 (2017). 



\bibitem{GHPE}  T.R.~Gingrich, J.M.~Horowitz, N.~Perunov, J.L.~England; \emph{ Dissipation Bounds All Steady-State Current Fluctuations}.
Phys. Rev. Lett.  {\bf 116}, 120601 (2016). 

\bibitem{GRH} T.R.~Gingrich, G.M.~Rotskoff, J.M.~Horowitz; \emph{Inferring dissipation from current  fluctuations}. J. Phys. A: Math. Theor. {\bf 50},   184004 (2017). 


\bibitem{GL} J. Guioth,  D. Lacoste; \emph{Thermodynamic bounds on
equilibrium fluctuations of a global or local order parameter}.
Europhysics Letters, {\bf 115},  6,  60007, (2016).  



\bibitem{hyeo17} C. Hyeon, W. Hwang;
\emph{Physical insight into the thermodynamic uncertainty relation using Brownian motion in tilted periodic potentials}.
Phys. Rev. E {\bf 96}, 012156, (2017). 


\bibitem{dH}
F.~den Hollander; \emph{Large deviations}. Fields Institute
Monographs, AMS, Providence RI, 2000.



\bibitem{HG} J.M. Horowitz, T.R. Gingrich; \emph{Proof of the
finite-time thermodynamic uncertainty relation for steady-state
currents}.
Phys. Rev. E {\bf 96}, 020103(R) (2017). 


\bibitem{KSP}  T.~Koyuk, U.~Seifert, P.~Pietzonka; \emph{A generalization of the thermodynamic uncertainty relation to periodically driven systems}. Preprint, arXiv:1809.02113 (2018).


\bibitem{maci18} K. Macieszczak, K. Brandner , J.P. Garrahan;
\emph{Unified Thermodynamic Uncertainty Relations in Linear Response}.
Phys. Rev. Lett. {\bf 121},130601 (2018) 




\bibitem{maes17} C. Maes;
\emph{Frenetic Bounds on the Entropy Production}.
Phys. Rev. Lett. {\bf 119}, 160601, (2017). 


\bibitem{MN} C.~Maes, K.~Neto\u{c}n\'{y}; \emph{Canonical structure of dynamical fluctuations in mesoscopic nonequilibrium steady states}.  Europhys. Lett. {\bf 82}, 30003 (2008).

\bibitem{MRDR} I.A.~Mart\'inez,   \'E.~Rold\'an,  L.~Dinis,  R.A.~Rica; \emph{Colloidal heat engines: a review}.  Soft Matter {\bf 13},   22--36 (2017).

\bibitem{NT} C.~Nardini, H.~Touchette; \emph{Process interpretation of current entropic bounds}. Eur. Phys. J. B  {\bf 91}, 1434 (2018). 



 \bibitem{NV} M. Nguyen, S. Vaikuntanathan; \emph{Design principles for nonequilibrium self-assembly}. Proc. Natl. Acad. Sci. {\bf 113}, 14231--14236 (2016).


\bibitem{nyawo} P.T. Nyawo, H. Touchette;  \emph{Large deviations of the current for driven periodic diffusions}.
 Phys. Rev. E {\bf 94}  032101 (2016).

\bibitem{PBS16} P.~Pietzonka, A.C.~Barato, U.~Seifert; \emph{Universal bound on current fluctuations}. Phys. Rev. E {\bf 93}, 052145 (2016)


\bibitem{PBS}
P. Pietzonka,  A.C. Barato,  U. Seifert; \emph{Universal bound on the efficiency of molecular motors}.
   J. Stat. Mech. 124004 (2016).  

\bibitem{trio}
P. Pietzonka, A.C. Barato, U. Seifert;  \emph{Affinity- and topology-dependent bound on current fluctuations}.    J. Phys. A {\bf 49} 34LT01 (2016)

\bibitem{PRS} P. Pietzonka, F. Ritort,  U. Seifert; \emph{Finite-time
generalization of the thermodynamic uncertainty relation}.
Phys. Rev. E {\bf 96}, 012101  (2017). 


\bibitem{PS100} P. Pietzonka, U. Seifert; \emph{Universal trade-off between power, efficiency, and constancy in steady-state heat engines}.
 Phys. Rev. Lett. {\bf 120},  190602 (2018).  





\bibitem{PNRJ} S. Pigolotti, I. Neri, \'E.~Rold\'an, F. J\"{u}licher;
\emph{Generic Properties of Stochastic Entropy Production}.
Phys. Rev. Lett. {\bf 119}, 140604, (2017). 

\bibitem{PLE} M.~Polettini,A.~Lazarescu,  M.~Esposito; \emph{Tightening the uncertainty principle for stochastic currents}. Phys. Rev. E {\bf 94}, 052104 (2016).

\bibitem{PvdB} K.~Proesmans, C.~Van den Broeck; \emph{Discrete-time thermodynamic uncertainty relations},   Europhysics Letters {\bf 119},  20001   (2017).





\bibitem{Sch} J.~Schnakenberg; \emph{Network theory of microscopic and macroscopic behavior of master equation systems}.
Rev. Mod. Phys. {\bf 48}, 571--585 (1976).

\bibitem{Se} U.~Seifert; \emph{Stochastic thermodynamics, fluctuation theorems and molecular machines}. Rep. Prog. Phys. {\bf 75}  126001 (2012).


\bibitem{Sek} K.~Sekimoto; \emph{Stochastic energetics}. Lecture Notes in Physics \textbf{799}, Springer Verlag, Berlin, 2010.



\bibitem{T} H.~Touchette; \emph{The large deviation approach to statistical mechanics}.
Phys. Rep. {\bf 478}, 1--69 (2009).

\end{thebibliography}
\end{document}